\documentclass[ssy,preprint]{imsart}

\usepackage{amsmath,amsthm}
\usepackage{mathrsfs,amssymb}
\usepackage{amsfonts,url,bbm,enumerate}
\usepackage{graphicx, subfigure, color}
\usepackage[draft]{fixme}
\usepackage{enumitem}


\newtheorem{theorem}{Theorem}
\newtheorem{lemma}{Lemma}
\newtheorem{proposition}{Proposition}
\newtheorem{corollary}{Corollary}

\newtheorem{claim}{Claim}
\newtheorem{definition}{Definition}

\newtheorem{assumption}{Assumption}

\RequirePackage[OT1]{fontenc}
\RequirePackage[numbers]{natbib}
\RequirePackage[colorlinks,citecolor=blue,urlcolor=blue]{hyperref}


\newcommand{\hide}[1]{}
\renewcommand{\hat}[1]{\widehat{#1}}
\renewcommand{\bar}[1]{\overline{#1}}
\renewcommand{\tilde}[1]{\widetilde{#1}}
\renewcommand{\S}{\mathcal{S}}
\newcommand{\E}[1]{\mathbb{E}\left\{#1\right\}} 
\newcommand{\prob}[1]{\mathrm{P}\left(#1\right)} 
\newcommand{\probb}[1]{\mathrm{P}\big(#1\big)} 
\newcommand{\probB}[1]{\mathrm{P}\Big(#1\Big)} 
\newcommand{\probbb}[1]{\mathrm{P}\bigg(#1\bigg)} 
\newcommand{\probbB}[1]{\mathrm{P}\Bigg(#1\Bigg)} 

\newcommand{\R}{\mathbb{R}}
\newcommand{\N}{\mathbb{N}}
\newcommand{\Z}{\mathbb{Z}}

\newcommand{\Ltwo}[1]{\big\lVert #1 \big\rVert}

\newcommand{\divconst}{{C}}  

\newcommand{\conv}{\mathrm{Conv}}

\newcommand{\region}{R}

\newcommand{\ptraj}{X}

\newcommand{\ff}{\zeta}  

\newcommand{\capregion}{\mathcal{C}}
\newcommand{\net}{N}
\newcommand{\qhat}{{\hat q}}

\newcommand{\absorbconst}{\alpha}

\newcommand{\RM}{R}
\newcommand{\I}{\mathcal{I}}
\newcommand{\Prt}{U}
\newcommand{\tf}{ { \lfloor t\rfloor } }
\newcommand{\tr}{\big(t-\lfloor t\rfloor\big)}
\newcommand{\ones}{\mathbbm{1}}
\newcommand{\diag}{\mathrm{diag}}
\newcommand{\ltwo}[1]{\lVert #1 \rVert}

\newcommand{\argmax}[1]{\underset{#1}{\operatorname{argmax}}}

\newcommand{\ee}{\mathcal{E}}

\long\def\red#1{{\color{red}#1}}
\usepackage[normalem]{ulem}

\long\def\green#1{{\color{green}#1}}

\long\def\pur#1{{\color[rgb]{.8,0,.8}#1}}
\long\def\oli#1{{\color[rgb]{0,.8,.8}[#1]}}
\newcommand\redsout{\bgroup\markoverwith{\textcolor{red}{\rule[0.5ex]{2pt}{.5pt}}}\ULon}
\long\def\dela#1{{\color{blue}\redsout{[#1]}}}

\long\def\comm#1{{\color{green} [#1]}}
\long\def\old#1{}



\begin{document}

\begin{frontmatter}
\begin{aug}
\title{Fluctuation Bounds for the Max-Weight Policy, with Applications to State Space Collapse}

\author{\fnms{Arsalan} \snm{Sharifnassab}\thanksref{m1,t1}\ead[label=e2]{asharif@mit.edu}},
\author{\fnms{John N.} \snm{Tsitsiklis}\thanksref{m2}\ead[label=e3]{jnt@mit.edu}},
\author{\fnms{S. Jamaloddin} \snm{Golestani}\thanksref{m1}\ead[label=e1]{golestani@sharif.edu}},


\thankstext{m1}{A. Sharifnassab and S. J. Golestani are with the Department of Electrical Engineering, Sharif University of Technology, Tehran, Iran; email: asharif@mit.edu, golestani@sharif.edu}
\thankstext{m2}{J. N. Tsitsiklis is with the Laboratory for Information and Decision Systems, Electrical Engineering and Computer Science Department, MIT, Cambridge MA, 02139, USA; email: jnt@mit.edu}
\thankstext{t1}{This work was partially done while A.\ Sharifnassab was a visiting student at the Laboratory for Information and Decision Systems, MIT, Cambridge MA, 02139, USA.}


\end{aug}

\vspace{3mm}
\today
\vspace{3mm}

\begin{abstract}
We consider a multi-hop switched network operating under a Max-Weight (MW) scheduling policy, and show that the distance between the queue length process and a fluid solution remains bounded by a constant multiple of the deviation of the cumulative arrival process from its average. 
We then exploit this  result to prove matching upper and lower bounds for the time scale over which  additive state space collapse (SSC) takes place.  
This implies, as two special cases, 
an additive SSC result in diffusion scaling under non-Markovian arrivals 
 and, for the case of i.i.d.\  arrivals, an additive SSC result over an exponential time scale.
\end{abstract}
\end{frontmatter}
 \tableofcontents

\section{\bf Introduction}
The subject of this paper is a new line of analysis of the Maximum Weight (MW) scheduling policy for single-hop and multi-hop networks. The main ingredient is a purely deterministic qualitative property of the queue dynamics: the trajectory followed by the queue vector under a MW policy tracks the trajectory of an associated deterministic  fluid model, within a constant multiple of the cumulative fluctuation of the arrival processes. With this property at hand, it is then a conceptually simple matter to translate concentration properties of the arrival processes to concentration properties for the queue vector. As a consequence, we can obtain:
\begin{itemize}
\item[(a)] New, simple derivations of existing results on the convergence to a fluid solution/trajectory and on state space collapse (SSC).
\item[(b)] Stronger versions of existing SSC results, involving more general arrival processes, and tighter concentration bounds.
\item[(c)] An approach to obtaining new results that would seem rather difficult to establish with existing methods. 
\end{itemize}
The core of our approach is the trajectory tracking result mentioned above. The latter is in turn an adaptation of a similar result established in \cite{AlTG19sensitivity}, for a general class of continuous-time hybrid systems that move along the sudifferential of a piecewise linear convex potential function with finitely many pieces; other than an additional restriction to the positive orthant, a continuous-time variant of the MW dynamics turns out to be exactly of this type. 
However, a fair amount of additional work is needed to translate the general result to the standard, discrete-time, MW setting; cf.\ Theorem~\ref{th:main sensitivity} and its proof.

\subsection{General Background}\label{s:gb}
We consider a multi-hop switched network with fixed routing, such as those arising in wireless networks \cite{LinSS06} or switch fabrics \cite{JiAS09}. 
The network operates in discrete time, and is driven by jobs (or packets) that  arrive according to a stochastic, deterministic, or  adversarial process. 
There is a scheduler which, at each time step, selects one of finitely many possible service vectors. These service vectors can be fairly arbitrary, reflecting interdependence constraints between different servers, e.g., interference constraints in the context of wireless networks.

We focus on the popular MW scheduling policy \cite{TassE92}, which operates 
as follows.
At any time step, a MW policy associates to each queue a weight proportional to its length, and selects a service vector that maximizes  the total weighted service.
MW policies are
known to have a number of attractive properties such as maximal throughput \cite{TassE92,EryiS05,GeorNT06}.
  In addition, under certain conditions, e.g., a resource pooling assumption, they minimize the  workload in the heavy traffic regime \cite{Stol04}. 
On the other hand, the queue size dynamics, under MW policies, are quite complex, and a detailed analysis is  difficult. 

A common way of reducing the complexity of the analysis involves a fluid approximation, also known as a fluid model.
The fluid model relies on two simplifications that lead to a description in terms of a set of differential equations (cf. Subsection \ref{subsec:fluid model}):  (a) the dynamics evolve in continuous --- rather than discrete --- time, and (b)  the arrival process is replaced by  a constant flow with the same average. 
The fluid model underlies
a general technique for dealing with discrete-time networks: approximate the queue lengths by fluid solutions and then analyze the fluid model. 
This approach has proved useful in the study of the MW dynamics, leading to results on stability (\cite{DaiP00,AndrKRSV04}), SSC (\cite{Stol04,ShahW06,ShahW12,DaiL05})
and delay stability under heavy tailed arrivals  (\cite{MarkMT16,MarkMT18}).
A key ingredient behind such results is an understanding of the accuracy with which fluid solutions approximate the original queue length processes; this paper contributes to this understanding.

\subsection{State Space Collapse Literature}
A prominent  application of  fluid models is in establishing state space collapse (SSC), i.e., that in the  heavy traffic regime, the queue length process stays close to a low-dimensional set, for a long time, and with high probability.\footnote{We note here the important distinction between multiplicative and additive (or strong) state space collapse, which is discussed further in Section \ref{s:ssc-lit}. The literature review here is mostly about multiplicative state space collapse. 
}

Seminal SSC results for communication networks were given in the works of Reiman \cite{Reim84}, Bramson \cite{Bram98}, and Williams \cite{Will98}. 
Subsequently, several works \cite{Stol04,ShahW06,ShahW12,KangKLW09} 
followed the general framework  of Bramson \cite{Bram98} to prove SSC under different scheduling policies, including for the case of MW policies. 
The general approach involves splitting an $O(r^2)$-long interval into intervals of length $O(r)$, and then showing that the fluid-scaled processes (i.e., $\hat{q}(t)=\frac1r Q(\lfloor rt \rfloor)$) stay close to the fluid solutions in each one of these  smaller intervals. 
The SSC results then follow from  the property that the fluid solutions  are attracted to a low-dimensional set, called the set of invariant points. 

For  single-hop networks with Markovian arrivals operating under a generalization of the MW policy, SSC was proved in \cite{Stol04}. 
It was also shown, in \cite{Stol04},   as a consequence of SSC, that the \emph{workload process} converges to a reflected Brownian motion, and that every MW-$\alpha$
policy\footnote{For any given 
$\alpha>0$, 
the MW-$\alpha$ policy is an extension of the MW policy in which
the ``weight'' of queue $i$ is  proportional to  $Q_i^{\alpha}$, where $Q_i$ is the length of the 
queue at node $i$.}
with  $\alpha > 0$ 
minimizes this workload among all scheduling algorithms.
The results of \cite{Stol04} were extended to  multi-hop networks in \cite{DaiL08}, and to another generalization of MW policies in \cite{ShakSS04}. 
For multi-hop networks with non-Markovian arrivals operating under MW-$\alpha$, SSC under diffusion scaling was studied in \cite{ShahW12}.
Several works \cite{KangW12,ShahTZ10} then used the results of \cite{ShahW12} to provide diffusion approximations for the MW dynamics. 
Finally,  
SSC has also facilitated the study of the steady-state expectation of the number of jobs in a network \cite{EryiS12,MaguS16,MaguSY14,MaguBS16,XieL15,WangMTS18}. 


\subsection{Preview of Results}\label{s:preview}
Our approach to the analysis of MW policies relies on a bound on the distance of the queue length  processes from the fluid solutions, 
in terms of the fluctuations of the cumulative arrival processes. In more detail,
we consider a
queue length process $Q(\cdot)$, driven by an arrival process $A(\cdot)$ with average rate $\lambda$, and compare $Q(\cdot)$ with  a fluid solution $q(\cdot)$ driven by 
a steady arrival stream with the same rate $\lambda$, under the
same initial conditions $q(0)=Q(0)$. 

We already know that, under suitable scaling, the trajectories of the original discrete-time process remain close to the fluid solutions. Furthermore, the fluid model is well-known to be non-expansive\footnote{A dynamical system  is called non-expansive if for any 
two trajectories, $x(\cdot)$ and $y(\cdot)$,
 we have $\frac{d}{dt}\Ltwo{x(t)-y(t)}\le 0$.}
\cite{Subr10}.   
By combining these facts, it is quite plausible that one should be able to derive bounds of the form
\begin{equation}\label{eq:bad}
\Ltwo{Q(t)-q(t)}\, \le\, c\, +\, \sum_{\tau=0}^{t-1} \Ltwo{A(\tau)-\lambda},
\end{equation} 
where $A(t)$ is the vector of arrivals at each one of the queues at time $t$, and $c$ is a constant which is
 independent of $A(\cdot)$.
However, our goal is to derive a stronger bound, of the form
\begin{equation}\label{eq:intro bound}
\Ltwo{Q(t)-q(t)}\, \le\, c\, +\, \divconst \,\max_{k<t}\, \Big\| \sum_{\tau=0}^{k}\big(A(\tau)-\lambda\big) \Big\|,
\end{equation}
for some constants $c$ and $\divconst$, independent of $A(\cdot)$  and $\lambda$.
The bounds in Eqs.\ \eqref{eq:bad} and \eqref{eq:intro bound} are qualitatively different. 
Under common probabilistic assumptions, 
and with high probability, 
$\sum_{\tau=0}^{t-1} \Ltwo{A(\tau)-\lambda}$ grows at a rate of $t$, whereas $\max_{k<t}\, \Ltwo{\sum_{\tau=0}^{k}\big(A(\tau)-\lambda\big)}$ only grows as (roughly) $\sqrt{t}$.

The sensitivity bound \eqref{eq:intro bound} allows us to make several contributions to the study of the MW policy.
\begin{itemize}

\item[(a)] We obtain a very simple proof of the 
 convergence of fluid-scaled processes to fluid solutions; cf. Corollary~\ref{c:fluid}.

\item[(b)]
We  establish a strong  SSC result for the MW policy.
In particular, we derive an upper bound and a matching lower bound on the 
time scale over which additive SSC takes place; cf. Theorem~\ref{th:strong ssc}.
As a corollary, when the arrivals are i.i.d, we establish  SSC  
for the process  $\tilde{q}(t) = Q(\lfloor e^{\alpha r} t\rfloor)/r$, for some constant $\alpha$, i.e., over an exponentially long time scale; cf. Corollary~\ref{prop:ssc iid trans}.

\item[(c)] In another corollary, we 
 establish 
an additive SSC result in diffusion scaling and under non-Markovian arrivals, which strengthens the currently available diffusion scaling results under the MW policy in several respects;
see Section \ref{s:ssc-lit} for more details. 

\item[(d)] As will be reported elsewhere, the sensitivity bound \eqref{eq:intro bound} provides  tools that allow us to resolve an open problem from \cite{MarkMT18}, on the delay stability in the presence of heavy-tailed traffic.
\end{itemize}

On the technical side,
 the proof of the sensitivity bound \eqref{eq:intro bound}  exploits a similar bound from our earlier work \cite{AlTG19sensitivity} on the  sensitivity of a class of hybrid subgradient dynamical systems to fluctuations of external inputs or disturbances. 
The main challenges here 
concern the
 transition from discrete to continuous time, as well as the presence of  boundary conditions, as queue sizes are naturally constrained to be non-negative.
For the proof of our  SSC results, we  follow the general framework of Bramson \cite{Bram98}, while also taking advantage of the sensitivity bound \eqref{eq:intro bound}. 
We believe that  our tight
characterization of the time scale over which SSC holds would have been very difficult without the strong sensitivity bound  \eqref{eq:intro bound}.


\subsection{Outline}
The rest of the paper is organized as follows. In the next section, we describe the network model and our conventions, along with some background on fluid models and SSC. In Section \ref{sec:main}, we present our central result, which is an inequality of the form \eqref{eq:intro bound}; cf.\ Theorem \ref{th:main sensitivity}.
Then, in Section \ref{sec:ssc}, we present our SSC results. We provide the  proofs of our   results  in Sections \ref{sec:proof sensitivity} and \ref{sec:proof ssc trans}, while relegating some of the details to appendices, for improved readability.
Finally, in Section \ref{sec:discussion}, we offer some  concluding remarks and discuss possible extensions.

\section{\bf System Model and Preliminaries}\label{sec:model}
In this section, we list our notational conventions, define the network model that we will study, and go over the necessary background on fluid models and State Space Collapse (SSC).

\subsection{Notation and Conventions} \label{subsec:model notation}
We denote by $\R,\,\R_+,\,\R_{++},\, \Z$, $\Z_+$, and $\N$  the sets of real numbers, non-negative reals, positive reals, integers,  non-negative integers, and positive integers, respectively. 

A vector $v\in\R^n$, will always be treated as a column vector, with components
$v_i$, for $i=1,\ldots,n$. We use $v^T$ and $\lVert v\rVert_2$, 
to denote the transpose  and the Euclidean norm of $v$, respectively. 
For any two vectors $v$ and $u$ in $\R^n$,  the relation $v\preceq u$ 
indicates that $v_i\leq u_i$, for all $i$. 
Furthermore, we use $\min(v,u)$ to denote the componentwise minimum, i.e., the vector with components $\min(v_i,u_i)$. 
For a vector $\mu\in\R^n$ and a set $J$ of indices, we use $\sigma_{-J}(\mu)$ to denote the vector whose $i$th entry is equal to the $i$th entry of $\mu$ if $i\not\in J$, and is equal to zero if $i\in J$. Finally, we let $\ones_n$ be the $n$-dimensional vector with all components equal to 1.

The notation $\textrm{Conv}(\cdot)$  stands for the convex hull of a set of vectors in $\R^n$. 
Given a vector $v\in\R^n$ and a set $A\subseteq\R^n$, we let $v+A=\big\{v+x\,:\,x\in A \big\}$. We  use  $d({v}\,,\,A)$ to denote the Euclidean distance of $v$ from the set $A$. 
Furthermore, if $W$ is  an $n\times n$ matrix, we let $W A=\left\{Wx\,\big|\, x\in A   \right\}$ be the image of the set $A$ under the linear transformation associated with $W$.
Given a vector ${v}\in\R^n$, $\diag({v})$ denotes the $n\times n$ diagonal matrix with the entries of $v$ on its main diagonal.

Finally, for a function $f:\R\to\R$, and with a slight departure from standard conventions, we use either $\dot{f}(t)$ or $\frac{d^+}{dt}f(t)$ to   denote the right derivative of $f$ at $t$,  assuming that it exists.


\subsection{The Network Model and the MW Policy}
A discrete-time multi-hop network with fixed deterministic routing is specified by $n$ queues, 
a non-negative
 $n\times n$ 
 routing matrix $\RM$,  and a finite set $\S \subset \R_+^n$ of \emph{actions} (or service vectors) that correspond to the different schedules  that can be applied at any time. 
 
The input to a network is a collection of $n$ discrete-time, non-negative arrival processes, described by functions $A_{{i}}:\Z_+\to \R_+$, where $A_i(t)$ stands for the  workload that arrives to queue $i$ during the $t$th time slot. 
Whenever the arrival processes are ergodic stochastic processes, we 
define the \emph{arrival rate} vector $\lambda\in \R_+^n$ as the vector whose $i$th component is the  average of the process $A_i(\cdot)$.
We will use $Q_i(t)$ to denote the (always non-negative) workload at queue $i$ at time $t$, and $Q(t)$ to denote the corresponding workload vector. 
In the sequel, we will use the terms \emph{workload}, \emph{queue size}, and \emph{queue length}, interchangeably. 
The evolution of $Q(t)$ is determined by the particular policy used to operate the network.

Given a network and an arrival process $A(\cdot)$, the evolution of the queue lengths  
 is given by:
\begin{equation} \label{eq:network evolution rule}
Q(t+1) = Q(t)   +A(t) + (\RM-I) \min\big( \mu(t),Q(t)\big), \quad \forall\, t\in \Z_+,
\end{equation}
where $\mu(t)$ is the service vector chosen by the policy at time $t$, and as mentioned earlier, $\min\big( \mu(t),Q(t)\big)$ is to be interpreted componentwise. Equation~\eqref{eq:network evolution rule} corresponds to the situation where a time slot begins with a   queue vector $Q(t)$, and then a service vector $\mu(t)$ is  chosen and applied. Finally, the new arrivals $A(t)$ are recorded at the end of the time slot and contribute to the new queue vector $Q(t+1)$.

 Note that the routing matrix $R$ is deterministic, pre-specified, and is not affected by the queue sizes or the scheduling policy. 
Single-hop networks correspond to the special case where $\RM$ is the zero matrix.
More generally, 
the most common case (single-path routing) is one where 
 the routing matrix has entries in $\{0,1\}$, 
with at most one nonzero entry in each column, and where 
 the $ij$th entry being one indicates that 
 any work completed at queue $j$ is transferred to queue $i$ for further processing.
However, we allow for more general non-negative matrices $\RM$  because this additional freedom does not affect the main proofs, and also allows for a simpler treatment of  weighted MW  policies; see Lemma \ref{lem:wmw2mw}, in the proof of Theorem \ref{th:main sensitivity}.


The following assumption will be in effect
throughout the paper, and is naturally valid in typical application contexts. 
\begin{assumption}\label{assumption 1}
For any $\mu\in\S$, and any $i\in\{1,\ldots,n\}$, the set  $\S$ also contains the vector $\sigma_{-\{i\}}(\mu)$, i.e., the vector obtained by setting the $i$th component of $\mu$ to zero.
\end{assumption}

\hide{\green{[Side-QUESTION: Is it also valid for typical backpressure contexts?]}
\oli{The backpressure conterpart would allow for switching off arbitrary links. Such assumption is readily satisfied under common ``interference models'', such as indepedndent-set (or graph-based) interference model and SINR model. I guess the assumption is non-restrictive in the Backpressur contex, but I'm not sure if it is typical or common.}}

According to Assumption \ref{assumption 1}, if a certain service vector $\mu$ is allowed, it is also possible to follow $\mu$ at all queues other than queue $i$, while providing no service to queue $i$. In particular, the zero vector is always an element of $\S$. 
On the technical side, Assumption \ref{assumption 1} appears innocuous; however, it is indispensable for the proof technique used in this paper, and has also been made in earlier work (cf.\ Section 4.1 of \cite{Stol05primaldual} and Assumption 2.3 of \cite{ShahW12}).

We now proceed to define
 weighted Max-Weight (WMW) policies,
 which can be viewed as either a generalization of MW policies or as a special case of the broader class of MW-$f$ policies\footnote{A MW-$f$ policy is obtained by replacing $Q^TW$ in \eqref{eq:maximal} by $f(Q)$, where  $f:\R^n\to\R^n$ is a function in an appropriate class.}  considered in  \cite{EryiS05}, and backpressure-based utility maximization algorithms\footnote{Max-Weight is a special case, with the utility function equal to zero.} considered in \cite{Stol05primaldual,GeorNT06,Neel10}. 
We are given a multihop network with $n$ queues, as described above,
 along with a positive vector $w\in\R_{++}^n$, and the associated diagonal matrix  $W=\diag(w)$. For any $Q\in \R_+^n$, we let $\S_w(Q)$ be the set of maximizers of $ Q^T \,W\, \big(I-\RM\big)\mu$:
\begin{equation}\label{eq:maximal}
\S_w(Q) \triangleq \argmax{\mu\in\S}\,  Q^T W \big(I-\RM\big)\mu.
\end{equation}
A WMW policy associated with $w$ (or $w$-WMW, for short) chooses, at each time $t$, an arbitrary service vector $\mu{(t)}\in \S_w\big(Q(t)\big)$.\footnote{For a concrete example, if $\mu$ corresponds to serving only queue $j$, with unit service rate, and if work completed at queue $j$ is routed to queue $i$, the term $Q^T W(I-\RM)\mu$ is of the form $w_jQ_j-w_iQ_i$.}
A Max-Weight (MW) policy is a special case of a WMW policy, in which $w=\ones_n$. When dealing with MW policies, we drop
 the subscript $w$, and write $\S(Q)$ instead of $\S_{\ones_n}(Q)$.

Consider an ergodic and Markovian arrival process with  arrival rate vector $\lambda\in\R_+^n$, for which there exists some scheduling policy that 
stabilizes the network, i.e., results in a positive recurrent process. 
The  closure of the set of all such vectors $\lambda$ is called the \emph{capacity region}  
and is denoted by $\capregion$. 

We now record a fact that will be used later, in the proofs of Lemma \ref{lem:known lemma!} and Claim \ref{claim:I is low dim}.
\old{\footnote{To justify this fact, we can
follow the argument in Section 3.C of
\cite{TassE92}. (While \cite{TassE92} was phrased in terms of a restricted class  
of routing matrices $R$, the argument goes through in our setting as well.) In particular, 
we have that for
any $\lambda\in\capregion$, there exist vectors   $f$ and $c$ such that $f\preceq c\in\conv(\S)$   and  $\lambda=\big( I-\RM\big)f$.  
Assumption~\ref{assumption 1} then implies that $f\in\conv(\S)$, and as a result $\lambda\in\big( I-\RM\big)\conv(\S)$. 
\pur{It follows that $\capregion \subseteq \big( I-\RM\big) \conv(\S) $.}
\hide{$\lambda\preceq \big( I-\RM\big) \conv(\S)$. 
To conclude that $\lambda\in \big( I-\RM\big) \conv(\S)$, note that $\RM$ is a ``non-negative matrix'' (i.e., a matrix with non-negative entries), and by definition, $I-\RM$ is an ``M-matrix'' \cite{BermP94}. 
Then, $\big( I-\RM\big)^{-1}$ is also non-negative \cite{BermP94} (Theorem 2.4 of Chapter 6). Consider a $\nu\in \big( I-\RM\big) \conv(\S)$ that dominates $\lambda$, i.e., $\nu\succeq\lambda$. 
Then, $\big( I-\RM\big)^{-1}\lambda \preceq \big( I-\RM\big)^{-1} \nu \in\conv(\S)$. 
Assumption \ref{assumption 1} then implies that $\big( I-\RM\big)^{-1}\lambda  \in\conv(\S)$. 
As a result, $\lambda\in \big( I-\RM\big) \conv(\S)$, and \eqref{eq:capregion subset convex hull} follows.}}
\oli{Trying to give a simpler inline proof for \eqref{eq:capregion subset convex hull} and remove footnote 4. OK?}}
Fix some $\lambda\in\capregion$ in the capacity region and consider  a stabilizing policy.
We define $f_i$ as the averrage departure rate from queue $i$. 
Then, the flow conservation property $f=\lambda+\RM f$ implies that $\lambda =  \big( I-\RM\big) f$.
Moreover, following an argument similar to the one in Section 3.C of
\cite{TassE92}, there exists a  vector $c$ such that $f\preceq c\in\conv(\S)$.
Assumption~\ref{assumption 1} then implies that $f\in\conv(\S)$, and as a result $\lambda\in\big( I-\RM\big)\conv(\S)$.
In conclusion, 
\begin{equation} \label{eq:capregion subset convex hull}
\capregion \subseteq \big( I-\RM\big) \conv(\S) .
\end{equation}

A remarkable property of MW and WMW policies is that they are 
\emph{throughput optimal} in the  sense that for any $\lambda$ in the interior of 
$\capregion$, and any ergodic Markovian arrival process with average arrival rate vector $\lambda$, the resulting process is positive recurrent \cite{TassE92}. Similar throughput optimality results are available for extensions of MW, e.g., for the so-called $f$-MW policies 
\cite{EryiS05}.
\pur{}

\subsection{The Fluid Model} \label{subsec:fluid model}
The fluid model associated with the MW policy is a deterministic dynamical system that runs in continuous time, and in which the arrival stream is replaced by a steady ``fluid'' arrival stream with rate vector $\lambda$. 
We will be working with the following definition of the fluid model; somewhat different but equivalent definitions can be found  in \cite{ShahW12} and \cite{MarkMT18}.
\begin{definition}[Fluid  Solutions] \label{def:fluid net}
We are given an arrival rate vector $\lambda$ and an initial queue length vector $q(0)\in \R_+^n$.  A fluid model solution (or, simply, \emph{fluid solution}) is an absolutely continuous function $q:\R_+\to\R_+^n$ 
that together with a collection of functions $s_\mu:\R_+\to [0,1]$, for  $\mu\in\S$, and another function $y:\R_+\to\R_+^n$,  satisfies the following  relations, almost everywhere:
\begin{equation}
\dot{q}(t) = \lambda + \big(\RM-I\big) \left(\sum_{\mu\in\S}s_{\mu}(t)\mu \,-\, {y}(t)\right),\label{eq:fluid diff eq 1}
\end{equation}
\begin{equation}
\sum_{\mu\in\S}{s}_{\mu}(t) = 1,\label{eq:fluid diff eq 3}
\end{equation}
\begin{equation}
{y_i}(t)\le  \sum_{\mu\in\S}{s}_{\mu}(t)\mu_i,   \qquad  i{=}1,\ldots, n,\label{eq:fluid diff eq 4}
\end{equation}
\begin{equation}
\textrm{if }q_i(t)>0,\quad \textrm{then } {y}_i(t) = 0, \qquad   i{=}1,\ldots, n,\label{eq:fluid diff eq 5}
\end{equation}
\begin{equation}
\textrm{if } 
\hide{\sum_{i=1}^n \mu_i w_i q_i(t). <\max_{\nu\in\S}\,  \sum_{i=1}^n \nu_i w_i q_i(t),}
\mu \not\in \S_w\big(q(t)\big),
\quad \textrm{then } {s}_{\mu}(t)=0, \qquad \forall\ \mu\in \S. \label{eq:fluid diff eq 6}
\end{equation}
\end{definition}
It is known that for  any multi-hop network and any initial condition, a fluid solution 
always 
exists (cf.~Appendix~A of \cite{DaiL05} and Lemma~9 of \cite{MarkMT18}),
and is unique (cf. Lemma~10 of \cite{MarkMT18}), 
even though the corresponding $s_\mu(\cdot)$ and $y(\cdot)$ need not be unique. 
Moreover, for $q(0)\succeq 0$, \eqref{eq:fluid diff eq 1}--\eqref{eq:fluid diff eq 6} imply that $q(t)$ remains non-negative for all subsequent times $t$. 
Later on, in Proposition~\ref{prop:simulation fluid}, we will show that  fluid solutions 
admit an alternative description, as
 the trajectories of a related subgradient dynamical system.
 \footnote{This alternative description also explains why uniqueness holds, in contrast to the case of more general multiclass queueing networks; cf.~the last paragraph of the proof of Proposition~\ref{prop:simulation fluid}.}

We will be particularly interested in the set of invariant states of the fluid model, which, for any $\lambda$ in the capacity region, is defined by (cf.\ Theorem 5.4(iv) of \cite{ShahW12}) 
\begin{equation}\label{eq:def stable point network}
\I(\lambda)\triangleq \left\{q_0\in \R_+^n\,  \big|\, q(t)=q_0,\, \forall t,\, \textrm{ is a fluid solution} \right\}.
\end{equation}
Our notation is chosen to emphasize the dependence on $\lambda$ of the set of invariant states. 
We note that if $\lambda$ belongs to the interior of the capacity region, then
$\I(\lambda)$ is a singleton, equal to $\{0\}$. Thus, $\I(\lambda)$  can be non-trivial only if $\lambda$ lies on the boundary of $\capregion$.

We now record a scaling property of the set of fluid solutions.
\begin{lemma}\label{lem:fluid scale}
Consider a fluid solution $q(\cdot)$ and a constant $r>0$. Let $\hat{q}(t)= q(rt)/r$, for all $t\ge 0$. Then, $\hat{q}(\cdot)$ is also a fluid solution. 
\end{lemma}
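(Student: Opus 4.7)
The proof is essentially a routine change-of-variable verification, but it is worth spelling out because one needs to check that all five defining conditions of Definition~\ref{def:fluid net} transfer cleanly to the rescaled process. The plan is to define candidate auxiliary functions $\hat{s}_\mu(t) := s_\mu(rt)$ and $\hat{y}(t) := y(rt)$, where $s_\mu(\cdot)$ and $y(\cdot)$ are the functions associated with the given fluid solution $q(\cdot)$, and then show that $\hat{q}(\cdot)$ together with these functions satisfies \eqref{eq:fluid diff eq 1}--\eqref{eq:fluid diff eq 6}.

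First I would handle the differential equation \eqref{eq:fluid diff eq 1}. Since $q$ is absolutely continuous, so is $\hat{q}$, and by the chain rule (applied at almost every $t$) the factor of $r$ from differentiating $q(rt)$ cancels the $1/r$ in front, giving
\begin{equation*}
\dot{\hat{q}}(t) \;=\; \dot{q}(rt) \;=\; \lambda + (\RM-I)\Big(\sum_{\mu\in\S}\hat{s}_\mu(t)\mu - \hat{y}(t)\Big)
\end{equation*}
at almost every $t\geq 0$; the null set of bad $t$ for $\hat{q}$ is the preimage under $t\mapsto rt$ of the null set for $q$, which is again null since $r>0$. The two algebraic constraints \eqref{eq:fluid diff eq 3} and \eqref{eq:fluid diff eq 4} are pointwise in time, so they transfer immediately from $(s_\mu(rt), y(rt))$ to $(\hat{s}_\mu(t), \hat{y}(t))$.

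The complementary slackness condition \eqref{eq:fluid diff eq 5} transfers because $\hat{q}_i(t)>0$ if and only if $q_i(rt)>0$, in which case $y_i(rt)=0$ and hence $\hat{y}_i(t)=0$. For the maximality condition \eqref{eq:fluid diff eq 6} I would use the key algebraic observation that the set $\S_w(Q)=\argmax_{\mu\in\S} Q^TW(I-\RM)\mu$ is invariant under multiplying $Q$ by a positive scalar, since the objective is linear in $Q$; therefore $\S_w(\hat{q}(t)) = \S_w(q(rt)/r) = \S_w(q(rt))$, and if $\mu\notin \S_w(\hat{q}(t))$ then $s_\mu(rt)=0$, hence $\hat{s}_\mu(t)=0$.

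There is no real obstacle here; the only subtlety worth flagging is the measure-theoretic one about where ``almost everywhere'' is evaluated under the time rescaling, which is resolved by the remark that $r>0$ makes the time-rescaling map bi-Lipschitz and hence preserves Lebesgue-null sets. Combining these four checks, $\hat{q}(\cdot)$ with the auxiliary functions $(\hat{s}_\mu, \hat{y})$ satisfies all conditions of Definition~\ref{def:fluid net} with the same arrival rate $\lambda$ and with initial condition $\hat{q}(0)=q(0)$, which establishes the claim.
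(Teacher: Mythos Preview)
Your proposal is correct and follows essentially the same approach as the paper: define $\hat{s}_\mu(t)=s_\mu(rt)$ and $\hat{y}(t)=y(rt)$, use the positive-homogeneity of $\S_w(\cdot)$ to handle \eqref{eq:fluid diff eq 6}, and check the remaining conditions directly. The paper's proof is terser (it simply asserts the verification is easy), whereas you spell out each of \eqref{eq:fluid diff eq 1}--\eqref{eq:fluid diff eq 6} and add the remark about null sets under the bi-Lipschitz time change; one minor slip is that $\hat{q}(0)=q(0)/r$, not $q(0)$, but this is immaterial since Definition~\ref{def:fluid net} only requires $\hat{q}(0)\in\R_+^n$.
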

\begin{proof}
Note that the set of maximizing schedules in Eq.\ \eqref{eq:maximal} does not change when we scale the queue vector by a positive constant. Therefore,
for any $t\ge 0$, 
\begin{equation}
\S\big(\hat{q}(t)\big) \,=\, \S\big(q(rt)/r\big)  \,=\, \S\big(q(rt)\big).
\end{equation}
Consider the functions $y(\cdot)$ and $s_\mu(\cdot)$  that together with $q(\cdot)$ satisfy the fluid model relations \eqref{eq:fluid diff eq 1}--\eqref{eq:fluid diff eq 6}. 
Let $\hat{y}(t)=y(rt)$ and $\hat{s}_\mu(t) = s_\mu({r}t)$, for all $t\ge0$ and all $\mu\in \S$. 
Then, it is easy to verify that $\hat{q}(\cdot)$, $\hat{y}(\cdot)$, and $\hat{s}_\mu(\cdot) $ also satisfy \eqref{eq:fluid diff eq 1}--\eqref{eq:fluid diff eq 6}. Therefore, $\hat{q}(\cdot)$ is also a fluid solution.
\end{proof}

Suppose that $\lambda$ is in the capacity region and that $q_0\in \I(\lambda)$, so that $q(t)=q_0$ is a fluid solution. Then,  Lemma~\ref{lem:fluid scale} implies that
 for any scalar 
$r>0$, 
$q(t)= q_0/r$ is also a fluid solution, 
and therefore  
$ q_0{/r}\in\I(\lambda)$. 
Furthermore, it is not hard to see that the identically zero function is also a fluid solution, so that $0\in \I(\lambda)$. 
We conclude that
$\I(\lambda)$ is a cone, i.e., 
\begin{equation} \label{eq:I is conic}
\alpha\I(\lambda) = \I(\lambda),\qquad {\forall \ \alpha > 0.}
\end{equation} 

The interest in fluid solutions stems from the fact that they provide approximations to suitably  scaled versions (i.e., under ``fluid scaling'') of the original process. We summarize here one such result, which is a special case of Theorem 4.3 in  \cite{ShahW12}; similar results are given in \cite{DaiL05} (Lemmas~4 and 5).

\begin{proposition}
Fix some $\lambda\in\R_+^n$, $T>0$, and $q_0\in\R_+^n$. 
Letting $r$ range over the positive integers, consider a sequence of arrival processes $A^r(\cdot)$ that satisfies
 \begin{equation}\label{eq:arate}
 \frac{1}{r}\, \max_{t\le rT}  \Ltwo{
\sum_{\tau=0}^t
 \big(A^r(\tau)-\lambda\big)}\,\xrightarrow[{\,r\to\infty\,}]{} 0,
 \end{equation}
 almost surely.
Let $Q^r(\cdot)$ be the process generated according to Eq.\ \eqref{eq:network evolution rule} when the arrival process is $A^r(\cdot)$ and the initial condition is $Q^r(0)=rq_0$.
We define the continuous-time scaled processes
 $\hat{q}^r(t) =Q^r(\lfloor rt\rfloor) /r$, and note that $\hat{q}^r(0)=q_0$, for all $r$.
 Finally, let $q(\cdot)$ be a fluid solution, under that particular vector $\lambda$, initialized with $q(0)=q_0$. 
Then,
 \begin{equation} \label{eq:th 4.3 shah fluid scaling}
\sup_{t\le T} \,\Ltwo{\hat{q}^r(t)- q(t)}\xrightarrow[{\,r\to\infty\,}]{}\, 0,
\end{equation}
 almost surely.
\end{proposition}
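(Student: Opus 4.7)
The idea is that, once Theorem~\ref{th:main sensitivity} (the sensitivity bound \eqref{eq:intro bound}) is available, the proposition follows as a short corollary. Concretely, letting $\bar q^r(\cdot)$ denote the fluid solution driven by rate $\lambda$ and started at $\bar q^r(0)=rq_0$, an inequality of the form \eqref{eq:intro bound}, applied to $Q^r(\cdot)$ at integer times $k$, yields
\[
\big\|Q^r(k)-\bar q^r(k)\big\| \,\le\, c \,+\, \divconst \max_{j<k}\Big\|\sum_{\tau=0}^{j}\big(A^r(\tau)-\lambda\big)\Big\|,
\]
with $c,\divconst$ independent of $r$ and of the realization of $A^r(\cdot)$. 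Substituting $k=\lfloor rt\rfloor$, dividing through by $r$, and taking the supremum over $t\in[0,T]$, the right-hand side becomes $c/r+(\divconst/r)\max_{k\le rT}\big\|\sum_{\tau=0}^{k}(A^r(\tau)-\lambda)\big\|$, which by hypothesis \eqref{eq:arate} tends to $0$ almost surely.

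It then remains to identify $\bar q^r(rt)/r$ with the target $q(t)$. This is where Lemma~\ref{lem:fluid scale} enters: the lemma shows that $t\mapsto \bar q^r(rt)/r$ is itself a fluid solution; since it starts at $q_0$, uniqueness of fluid solutions (noted immediately after Definition~\ref{def:fluid net}) gives $\bar q^r(rt)/r = q(t)$ for all $t\ge 0$. The only residual discrepancy comes from the floor inside $\hat q^r$: one must compare $\hat q^r(t)=Q^r(\lfloor rt\rfloor)/r$ with $\bar q^r(\lfloor rt\rfloor)/r$ rather than $\bar q^r(rt)/r=q(t)$. But the fluid equations \eqref{eq:fluid diff eq 1}--\eqref{eq:fluid diff eq 4} force $\|\dot{\bar q}^r\|$ to be uniformly bounded by a constant depending only on $\lambda$ and $\S$, so this discrepancy is of order $1/r$, uniformly over $t\le T$, and is absorbed into the overall error.

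Given Theorem~\ref{th:main sensitivity}, I do not anticipate a genuine obstacle here: the argument is essentially bookkeeping that combines the sensitivity bound with the scaling property, uniqueness, and Lipschitz continuity of the fluid solution. The real content of the proposition is concentrated in Theorem~\ref{th:main sensitivity} itself; indeed, the paper highlights in Subsection~\ref{s:preview} that recovering convergence to fluid solutions from the sensitivity bound is meant to be a very simple corollary (cf.\ item (a) there).
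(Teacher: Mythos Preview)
Your proposal is correct and matches the paper's own treatment. In fact, the paper does not prove this proposition directly---it cites it as a special case of Theorem~4.3 in \cite{ShahW12}---but it explicitly notes that Corollary~\ref{c:fluid} (an ``immediate consequence of Theorem~\ref{th:main sensitivity}, together with Lemma~\ref{lem:fluid scale}'') yields the quantitative bound \eqref{eq:fluid scale close to FM}, from which the proposition follows at once. Your argument is exactly the derivation of Corollary~\ref{c:fluid}: apply the sensitivity bound at integer times, rescale via Lemma~\ref{lem:fluid scale} and uniqueness of fluid solutions, and absorb the $O(1/r)$ floor discrepancy using the uniform Lipschitz bound on $\dot q$.
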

Condition \eqref{eq:arate} is typically satisfied under common probabilistic assumptions, e.g., when $A^r(\cdot)$ is an i.i.d.\ process with mean $\lambda$ and bounded domain, or more generally of exponential type.
Thus, loosely speaking, convergence of the arrival processes leads to convergence of the queue processes. 

\old{
\dela{We fix some $\lambda\in\R_+^n$ and $T>0$. 
Letting $r$ range over the positive integers, let us suppose that 
that we have a sequence of arrival processes $A^r(\cdot)$ that satisfies}
 \begin{equation}\label{eq:arate}
\dela{ \frac{1}{r}\, \max_{t\le rT}  \Ltwo{
\sum_{\tau=0}^t
 \big(A^r(\tau)-\lambda\big)}\,\xrightarrow[{\,r\to\infty\,}]{} 0,
 }
 \end{equation}
 \dela{
 almost surely.
(Condition \eqref{eq:arate} is typically satisfied under common probabilistic assumptions, e.g., when $A^r(\cdot)$ is an i.i.d.\ process with mean $\lambda$ and bounded domain, or more generally of exponential type.)
Let us also fix some $q_0\in\R_+^n$. Let $Q^r(\cdot)$ be the process generated according to Eq.\ \eqref{eq:network evolution rule} when the arrival process is $A^r(\cdot)$ and the initial condition is $Q^r(0)=rq_0$.
We define the scaled processes
 $\hat{q}^r(t) =Q^r(\lfloor rt\rfloor) /r$, and note that $\hat{q}^r(0)=q_0$, for all $r$.
 Finally, let $q(\cdot)$ be a fluid solution, under that particular vector $\lambda$, initialized with $q(0)=q_0$. We then have
 }
\begin{equation} \label{eq:th 4.3 shah fluid scaling}
\dela{\sup_{t\le T} \,\Ltwo{\hat{q}^r(t)- q(t)}\xrightarrow[{\,r\to\infty\,}]{}\, 0,}
\end{equation}
\dela{ almost surely.}

\dela{Thus, loosely speaking, convergence of the arrival processes leads to convergence of the queue processes. }

\oli{New paragraph.}}

As we shall see in Section~\ref{sec:main},  our results will allow for stronger statements; namely, we will show that the rate of convergence in Eq.\ \eqref{eq:arate} provides bounds on the rate of convergence to the fluid solution, in Eq.\ \eqref{eq:th 4.3 shah fluid scaling}; cf. Corollary~\ref{c:fluid}.

\subsection{State Space Collapse}\label{s:ssc-lit}
In this section, we discuss known results about State Space Collapse (SSC) under a  MW policy, thus setting the stage for a comparison with the results we will present in Section \ref{sec:ssc}.

We consider 
the heavy traffic regime, where the arrival rate vector 
gets arbitrarily close to some point $\lambda$ on the outer boundary of the capacity region.
 In this regime, the average queue lengths  typically tend to infinity, yet it is often
the case that the queue length vector stays close to the set of invariant states, $\I(\lambda)$. 
This phenomenon is called SSC, and has been studied extensively, mostly under the so-called diffusion scaling.
In this scaling, we start with a
sequence $Q^r(\cdot)$ of stochastic processes, indexed by $r\in\N$, 
and then proceed to study  a sequence of scaled processes $\qhat^{\,r}(\cdot)$, referred to as diffusion-scaled processes,  defined by
\begin{equation} \label{eq:def diffusion scale}
\qhat^{\,r}(t)\,=\, \frac{1}{r} Q^r\big( \lfloor r^2t \rfloor \big),\qquad  t \geq 0.
\end{equation}

The extent to which the queue length process stays close to the set of invariant states 
is in general determined by the magnitude of 
 the fluctuations  of the arrival process. It is therefore natural to start the analysis with some assumptions on these fluctuations. General SSC results, under the MW policy and some of its extensions, 
 were provided in \cite{ShahW12}, under the following assumption.\footnote{In our statement of the assumption, we modify the notation  of \cite{ShahW12},  interchanging the  roles of $z$ and $r$, to preserve consistency with the rest of this paper.}


\begin{assumption}[Assumption 2.5 of \cite{ShahW12}]\label{assumption 2.5}
Let $A^r(\cdot)$ be a sequence of arrival processes indexed by $r\in\N$. We assume that for each $r$, $A^r(\cdot)$ is stationary,\footnote{``Stationary'' means that the $A^r(t)$ have the same distribution  for all $t$, but without necessarily being independent.}
with  mean $\lambda^r$, and that $\lambda^r\to\lambda$ as $r\to\infty$. 
We furthermore
assume that there 
exists a sequence $\delta_r\in\R_+$ converging to $0$ as $r\to \infty$, such that
\begin{equation}\label{eq:th shah}
r \,\cdot \log^2 r \,\cdot\,  \prob{\max_{t\le r} \frac{1}{r} \Ltwo{\sum_{\tau= 0}^t\big( A^z(\tau)\,-\, \lambda^z\big)}\ge\delta_r}\xrightarrow[{\,r\to \infty\,}]{}0, \qquad  \textrm{uniformly in } z.
\end{equation}
\end{assumption}

Note that Assumption \ref{assumption 2.5} is quite general, not requiring the arrival processes to be i.i.d.\ or Markovian.
Theorem 7.1 of \cite{ShahW12}, slightly rephrased,\footnote{Our rephrasing consists of replacing the term denoted by $\Delta W\big(\qhat^r(t)\big)$ in 
\cite{ShahW12} by $\I(\lambda)$. This is legitimate, because $\Delta W\big(\qhat^r(t)\big)\in\I(\lambda)$ (cf.\ Theorem 5.4 (iv) in \cite{ShahW12}) and therefore
$d\big(\qhat^r(t)\,,\, \I(\lambda)\big) \leq d\big(\qhat^r(t)\,,\, \Delta W\big(\qhat^r(t)\big)\big)$.
}
establishes
 that for a network operating under a MW-$f$ policy (a generalization of WMW policies, and under certain conditions on $f$),  for any $T>0$, and under Assumption \ref{assumption 2.5}, the diffusion-scaled queue length processes $\qhat^r(\cdot)$ satisfy, for any $\delta>0$,
\begin{equation} \label{eq:th 7.1 of shah's paper}
\prob{\frac{\sup_{t\in  [0,T]}\,d\Big(\qhat^r(t)\,,\, \I(\lambda)\Big)}{\max\big(1, \,\sup_{t\in  [0,T]}\, \qhat^r(t)\big)} \,>\,\delta}  \,\xrightarrow[{\,r\to\infty\,}]{} \, 0,
\end{equation}
when $\lim_{r\to\infty}{\qhat^r}(0)=q_0$, for some $q_0\in\I(\lambda)$. 

The bound in \eqref{eq:th 7.1 of shah's paper} is referred to as  \emph{multiplicative} SSC.
Yet, there is a stronger notion, called \emph{additive} SSC, which involves a bound similar to \eqref{eq:th 7.1 of shah's paper}, but with the term $\max\big(\qhat^r(t),\,1\big)$ absent from the denumerator, and which is 
known to hold under i.i.d. arrivals.

\begin{theorem}[\cite{ShahTZ10} Theorem 7.7]  \label{conj:shah 7.2} 
Consider a network operating under a MW-$\alpha$ policy, with $\alpha\geq 1$, 
with  i.i.d. and uniformly bounded arrivals with rate $\lambda^r\to\lambda$, for some $\lambda\in\capregion$, and the associated  diffusion-scaled queue length processes $\qhat^r(\cdot)$. Assume that $\lim_{r\to\infty}{\qhat^r}(0)=q_0$, for some $q_0\in\I(\lambda)$. Then,\footnote{The result in \cite{ShahTZ10} assumed that $q_0=0$; however, the proof extends to the case of general $q_0$.}
for any $\delta>0$, 
\begin{equation}
\prob{\sup_{t\in  [0,T]}\,d\Big(\qhat^r(t)\,,\, \I(\lambda)\Big)>\delta}  \xrightarrow[{\,r\to\infty\,}]{} 0.
\end{equation}
\end{theorem}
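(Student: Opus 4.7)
I would exploit the deterministic bound of Theorem~\ref{th:main sensitivity} on short sub-intervals, turn it into a stochastic bound via exponential concentration for bounded i.i.d.\ arrivals, and then propagate proximity to $\I(\lambda)$ inductively across sub-intervals. Fix $\delta>0$ and a large constant $L$ (to be chosen later), and partition $[0,r^2T]$ into $J=\lceil rT/L\rceil$ consecutive sub-intervals of length $rL$ in the original time scale. On the $j$-th sub-interval, let $\tilde{q}_j(\cdot)$ be the fluid solution at rate $\lambda^r$ with $\tilde{q}_j(0)=Q^r(rjL)$, and set $\bar{q}_j(u):=\tilde{q}_j(ru)/r$, which by Lemma~\ref{lem:fluid scale} is again a fluid solution at rate $\lambda^r$ with $\bar{q}_j(0)=\qhat^r(jL/r)$. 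Applying Theorem~\ref{th:main sensitivity} on the sub-interval and dividing by $r$ yields
\begin{equation*}
\sup_{t\in[jL/r,\,(j+1)L/r]}\big\|\qhat^r(t)-\bar{q}_j(rt-jL)\big\| \;\le\; \frac{c+\divconst F_j}{r}, \qquad F_j:=\max_{0\le k\le rL}\Big\|\sum_{\tau=0}^{k-1}\big(A^r(rjL+\tau)-\lambda^r\big)\Big\|.
\end{equation*}

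Because the arrivals are uniformly bounded and i.i.d., the process $\sum_{\tau}(A^r(\tau)-\lambda^r)$ is a coordinatewise bounded martingale, and Azuma--Hoeffding's maximal inequality gives $\prob{F_j\ge \epsilon r}\le C_1\exp(-C_2\epsilon^2 r/L)$ with constants independent of $j,r$. Union-bounding over the $J=O(r)$ sub-intervals still leaves a probability tending to zero, so for any fixed $\epsilon,L$ the right-hand side in the display above is, with probability $\to 1$, uniformly bounded by $\divconst\epsilon+c/r$. The problem is thereby reduced to showing that every $\bar{q}_j$ stays $(\delta-\divconst\epsilon)$-close to $\I(\lambda)$ throughout $[0,L]$.

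I would establish this by induction on $j$: the base case uses the hypothesis $\qhat^r(0)\to q_0\in\I(\lambda)$; the inductive step requires that if $d\big(\bar{q}_j(0),\I(\lambda)\big)\le\eta$ then $d\big(\bar{q}_j(u),\I(\lambda)\big)\le\eta$ for all $u\in[0,L]$, so that the sensitivity-based bound propagates $d\big(\qhat^r((j+1)L/r),\I(\lambda)\big)\le\eta$ provided $\eta=\eta(\delta,\epsilon)$ is chosen to absorb the $\divconst\epsilon$ increment. For this I would combine (i) continuity of fluid solutions in the rate parameter, so that $\bar{q}_j$ at rate $\lambda^r$ tracks on $[0,L]$ the corresponding fluid solution at rate $\lambda$, and (ii) the cone structure of $\I(\lambda)$ (Eq.~\eqref{eq:I is conic}) together with the $\lambda$-invariance of $\I(\lambda)$, so that by a Shah--Wischik-style Lyapunov argument the $\lambda$-fluid solution initialized $\eta$-close to $\I(\lambda)$ stays $\eta$-close.

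\textbf{Main obstacle.} The hardest part is the uniform-in-$r$ control of the drift of $\bar{q}_j$ away from $\I(\lambda)$ across the $J=O(r)$ sub-intervals. A naive per-sub-interval drift bound of order $L\|\lambda^r-\lambda\|$ need not telescope to $o(1)$, so one must exploit that only the component of the drift transverse to $\I(\lambda)$ affects $d(\cdot,\I(\lambda))$, while the parallel component merely moves the solution within the cone. Making this quantitative requires both an a priori uniform bound on $\|\qhat^r(jL/r)\|$ (propagated inductively via Theorem~\ref{th:main sensitivity} and the assumed convergence of $\qhat^r(0)$) and quantitative attraction constants toward $\I(\lambda)$ that are uniform over compact sets of initial conditions and over $r$ large; this is where the careful interaction between the cone geometry of $\I(\lambda)$ and the heavy-traffic scaling is decisive.
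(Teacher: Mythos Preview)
First, note that this theorem is \emph{cited} from \cite{ShahTZ10}, not proved in the paper; the paper's own machinery (Theorem~\ref{th:main sensitivity}, Theorem~\ref{th:strong ssc}, Corollary~\ref{prop:ssc diffusion}) only covers the WMW case, i.e., $\alpha=1$. Your proposal inherits exactly this restriction, since Theorem~\ref{th:main sensitivity} is stated only for WMW; for $\alpha>1$ the sensitivity bound is not available and your argument does not go through. So at best you are reproving the $\alpha=1$ case, which is what the paper does via Corollary~\ref{prop:ssc diffusion}.

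For $\alpha=1$, your architecture---partition $[0,r^2T]$ into $O(r)$ blocks of length $\Theta(r)$, apply Theorem~\ref{th:main sensitivity} on each block, control the arrival fluctuations by Hoeffding plus a union bound, and propagate closeness to $\I(\lambda)$ by induction on blocks---is precisely the skeleton of the paper's proof of Theorem~\ref{th:strong ssc}(b) in Section~\ref{sec:proof ssc trans}. The paper uses blocks of length exactly $r$ (your parameter $L$ is unnecessary).

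Your ``main obstacle,'' however, reflects a genuine gap relative to the paper. You worry that a per-block drift of order $L\|\lambda^r-\lambda\|$ accumulates over $O(r)$ blocks, and you propose to cure this by decomposing drift into tangential/transverse components and invoking uniform attraction constants on compacts. The paper's resolution is much cleaner and does not need any of that: Lemma~\ref{lem:absorb const} gives a \emph{uniform strict} attraction rate $\absorbconst(\lambda)>0$ of fluid solutions toward $\I(\lambda)$ whenever $q(t)\notin\I(\lambda)$. One then fixes $\epsilon\le\min(\delta,\absorbconst)/(4\divconst)$ and, for $r$ large enough that $\|\lambda^r-\lambda\|\le \divconst\epsilon$, the $\lambda^r$-fluid solution started within $2\divconst r\epsilon$ of $\I(\lambda)$ actually \emph{reaches} $\I(\lambda)$ before the end of the $r$-long block (cf.\ \eqref{eq:q_ir((i+1)r) in I}). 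Hence the distance at the start of each block is reset, not accumulated, and the induction closes with no telescoping at all. Without Lemma~\ref{lem:absorb const} (or an equivalent quantitative statement), your inductive step as written does not close: merely knowing that the $\lambda$-fluid solution ``stays $\eta$-close'' is not enough to absorb the $\divconst\epsilon$ increment over $O(r)$ blocks.
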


Compared to the above literature, our results only apply to  the case where $\alpha=1$ (i.e., the MW policy), but allow for queue-dependent weights, so that the weight of queue $i$ is $w_i Q_i$. 
More crucially, 
our results (cf.\ Section \ref{sec:ssc} and Theorem \ref{th:strong ssc}, in particular):  
\begin{itemize}

\item[(a)]remain valid as long as $\lim_{r\to\infty}d\big({\qhat^r}(0),\I(\lambda)\big)=0$, which is a weaker condition than $\lim_{r\to\infty}{\qhat^r}(0)=q_0$, for a fixed $q_0\in\I(\lambda)$; 
\item[(b)] unlike \cite{ShahTZ10}, we do not require the arrival process to be i.i.d. or  bounded, as long as the arrival process has certain concentration properties. Furthermore, the concentration properties that we require
(cf.\ Definition \ref{def:f tailed}) are weaker than Assumption~\ref{assumption 2.5}, for the case of diffusion scaling (cf.\ Corollary \ref{prop:ssc diffusion});
\item[(c)] apply to scalings other than diffusion scaling, and include a converse result that characterizes the possible scalings for which additive SSC holds.
\end{itemize}

We finally note another related line of work which studies a property similar to SSC, namely, 
the extent to which the steady-state distribution is concentrated in a neighbourhood of the set of invariant points. In particular,  
\cite{MaguS15} and \cite{MaguBS16} have characterized the 
tail of the steady-state  distribution of 
the distance from the set of invariant points for the case of an input-queued switch.

\section{\bf Main Result: Sensitivity}\label{sec:main}
The backbone behind all of the results is  the following main theorem.
\begin{theorem}[Sensitivity of WMW policy] \label{th:main sensitivity}
For a network operating under a WMW policy, there exists a constant $\divconst$, to be referred to as the \emph{sensitivity constant},  that satisfies  the following. Consider an arrival process $A(\cdot)$ and the corresponding queue length process $Q(\cdot)$. Let $q(\cdot)$ be a fluid solution corresponding to some $\lambda\succeq 0$, and initialized with  $q(0)=Q(0)$. Then, for any $k\in \Z_+$,
\begin{equation} \label{eq:arrival bound}
\Ltwo{Q(k)-q(k)} \,\le\, \divconst\left(1 \,+\, \lVert \lambda\rVert\,+\, \max_{t< k} \Ltwo{\sum_{\tau=0}^{t} \big(A(\tau) \,-\, \lambda\big)} \right),
\end{equation}
\end{theorem}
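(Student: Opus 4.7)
The plan is to leverage the continuous-time subgradient sensitivity theorem of \cite{AlTG19sensitivity}. The observation underlying that bound is that the fluid dynamics under MW is a subgradient flow of the convex, piecewise linear potential $\phi(q) = \max_{\mu \in \S} q^T(I-\RM)\mu$, driven by the constant input $\lambda$, while the queue process $Q(\cdot)$ under a realized arrival stream can be viewed as the \emph{same} dynamical system driven by the perturbed input $\lambda + (A(\cdot)-\lambda)$. Granting this identification, the abstract result in \cite{AlTG19sensitivity} asserts that the distance between the perturbed and unperturbed trajectories is at most a constant multiple of the running supremum of the integrated perturbation, which is precisely the right-hand side of \eqref{eq:arrival bound}.

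I would carry this out in three steps. First, reduce WMW to MW via the change of coordinates $\tilde Q = WQ$ with modified routing matrix $\tilde\RM = W\RM W^{-1}$: this turns the WMW rule $\mu\in\argmax{\mu\in\S} Q^T W(I-\RM)\mu$ into the MW rule for an equivalent network (the referenced Lemma \ref{lem:wmw2mw}), with the loss factor $\|W\|\cdot\|W^{-1}\|$ absorbed into $\divconst$. Second, embed the discrete-time queue process into continuous time by piecewise-linear interpolation on each unit slot, rewriting the one-step update as $\dot Q(t) = \lambda + u(t) - (I-\RM)\big(\mu(t)-y(t)\big)$, where $u(t) = A(t)-\lambda$ and $y(t) = \mu(t) - \min\big(\mu(t), Q(t)\big)$ is the idling term keeping $Q\succeq 0$. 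Assumption \ref{assumption 1} guarantees that $\mu(t)-y(t)$ is itself an allowable element of $\S_w\big(Q(t)\big)$, so $Q(\cdot)$ and the fluid solution $q(\cdot)$ lie in a common reflected subgradient framework with identical initial condition and identical potential $\phi$. Third, applying the sensitivity theorem of \cite{AlTG19sensitivity} at integer times $k$ then yields a bound proportional to $\max_{t<k}\Ltwo{\sum_{\tau=0}^t(A(\tau)-\lambda)}$, with the additive $1 + \Ltwo{\lambda}$ absorbing discrete-to-continuous rounding errors and a uniform per-slot bound involving $\max_{\mu\in\S}\Ltwo{\mu}$ and $\Ltwo{\lambda}$.

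The main obstacle, as flagged in the paper itself, is that \cite{AlTG19sensitivity} treats unrestricted hybrid subgradient dynamics, whereas here the state is confined to $\R_+^n$ by the reflection term $(\RM-I)y(\cdot)$. I would handle this via a complementary-slackness argument: the fluid relation \eqref{eq:fluid diff eq 5}, together with its discrete analogue, forces $y_i(t)$ to act only when the corresponding coordinate has been depleted, so that the reflections on the two trajectories act in effectively the same direction at the boundary and can only bring them closer together, preserving the non-expansive structure that underlies the continuous-time sensitivity estimate. A secondary bookkeeping obstacle is the mismatch between the piecewise-constant schedule $\mu(t)\in\S_w\big(Q(t)\big)$ used in continuous time and the convex combination $\sum_{\mu}s_{\mu}(t)\mu$ appearing in Definition \ref{def:fluid net}; this should be absorbed by a time-averaging argument over each unit slot, again at the cost of an $O(1)$ constant folded into $\divconst$.
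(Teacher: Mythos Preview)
Your high-level architecture matches the paper's (reduce WMW to MW, embed in continuous time as an FPCS subgradient system, invoke the sensitivity theorem from \cite{AlTG19sensitivity}), but the treatment of the boundary has a real gap.

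Your proposed handling of the orthant constraint---that reflections on the two trajectories ``act in effectively the same direction at the boundary and can only bring them closer together''---is not what the paper does, and in general such arguments fail: Skorokhod-type reflection is \emph{not} non-expansive in the Euclidean norm for multi-dimensional networks with routing, so you cannot simply graft a reflection term onto the FPCS bound and preserve it. The paper's key move is different: it shows (Claim~\ref{claim:fluid}) that, thanks to Assumption~\ref{assumption 1}, whenever $q_j=0$ for $j\in J$ and $\mu\in\S(q)$, the whole set $(\RM-I)D_J(\mu)$ already lies inside the subdifferential $F(q)$. Thus the idling term is absorbed into the FPCS dynamics themselves, and the fluid solution is literally an unperturbed trajectory of the \emph{unconstrained} FPCS system (Proposition~\ref{prop:simulation fluid}). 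No separate reflection argument is needed.

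A related issue arises in your embedding step. Your claim that ``$\mu(t)-y(t)$ is itself an allowable element of $\S_w\big(Q(t)\big)$'' is not correct as stated: Assumption~\ref{assumption 1} lets you \emph{zero} components of $\mu$, not reduce them to arbitrary intermediate values, so $\min(\mu(k),Q(k))$ lives only in the convex set $D_J(\mu)$, not in $\S$. More seriously, Claim~\ref{claim:fluid} requires $q_j=0$ on the idling coordinates, whereas in discrete time one can have $0<Q_j(k)\le\mu_j(k)$. The paper resolves this (Claim~\ref{claim:simul discrete}, Proposition~\ref{prop:simulation}) by finding a nearby point $y(k)$ with $y_j(k)=0$ for those $j$, via a quantitative half-space intersection bound (Lemma~\ref{lem:known lemma!}), and folding the bounded displacement $y(k)-Q(k)$ into the perturbation $U$. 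That is the origin of the additive constant; plain piecewise-linear interpolation does not do the job.

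Minor point: the WMW-to-MW change of coordinates is $\tilde Q=W^{1/2}Q$ with $\tilde\RM=W^{1/2}\RM W^{-1/2}$ (Lemma~\ref{lem:wmw2mw}), not $\tilde Q=WQ$.
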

Note that the result holds without having to assume that  $\lambda$ lies inside the capacity region.
The proof is given in Section \ref{sec:proof sensitivity}, and the key steps are as follows. We show that the study of WMW policies can be reduced to the study of MW policies. Furthermore, given
a network operating in discrete time under the MW policy, we introduce an associated continuous-time dynamical system, which we call the \emph{induced} dynamical system. 
Next, we  show that the fluid solutions and the queue length processes of the network can be viewed as  unperturbed and perturbed trajectories of the induced dynamical system, respectively. 
We finally argue that the induced dynamical system falls within the class of subgradient systems that were studied in \cite{AlTG19sensitivity}, 
 and apply the main result in that reference 
 to prove \eqref{eq:arrival bound}. The reductions that are developed in the course of the proof, may be of independent interest.

\subsection{Convergence to Fluid Model Solutions}
An immediate consequence of Theorem \ref{th:main sensitivity}, together with Lemma \ref{lem:fluid scale}, is a bound on the distance of the  fluid-scaled process $\hat{q}^r(t) =Q(\lfloor rt\rfloor)/r $ from a fluid solution $q(\cdot)$.

\begin{corollary}\label{c:fluid}
Consider a network operating under the WMW policy and let $C$ be the constant in Theorem \ref{th:main sensitivity}. Fix an arrival function $A(\cdot)$ and  some $q_0\in \R_+^n$. Let $Q^r(\cdot)$ be the process generated according to Eq.\ \eqref{eq:network evolution rule} when the arrival process is $A(\cdot)$ and the initial condition is $Q^r(0)=rq_0$. Let $\hat{q}^r(t) =Q^r(\lfloor rt\rfloor)/r $. 
Let $q(\cdot)$ be a fluid solution corresponding to some $\lambda\in\R_+^n$ and initialized at $q(0)=q_0$. Then, for any $T>0$,
\begin{equation}\label{eq:fluid scale close to FM}
\sup_{t\le T}\,\Ltwo{\hat{q}^r(t)-q(t)} \,\le\,  \frac{\divconst}{r} \, \max_{t< rT} \Ltwo{\sum_{\tau=0}^{t} \big(A(\tau) \,-\, \lambda\big)} \,+\, O\big(1/r\big).
\end{equation}
\end{corollary}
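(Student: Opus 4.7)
\textbf{Proof plan for Corollary \ref{c:fluid}.}
The plan is to reduce the statement to a single application of Theorem~\ref{th:main sensitivity} on the unscaled time axis, after first producing a fluid solution whose initial condition matches $Q^r(0)=rq_0$. Specifically, given the fluid solution $q(\cdot)$ with $q(0)=q_0$, define
\begin{equation}
\tilde{q}(s) \;=\; r\,q(s/r), \qquad s\ge 0.
\end{equation}
Applying Lemma~\ref{lem:fluid scale} with scale factor $1/r$ shows that $\tilde q(\cdot)$ is itself a fluid solution corresponding to the same $\lambda$, and by construction $\tilde q(0)=rq_0 = Q^r(0)$.

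Next, I would apply Theorem~\ref{th:main sensitivity} to the discrete-time pair $\big(Q^r(\cdot),\tilde q(\cdot)\big)$ for every integer $k \in \Z_+$:
\begin{equation}
\Ltwo{Q^r(k)-\tilde q(k)} \;\le\; \divconst\!\left(1+\Ltwo{\lambda}+\max_{\tau<k}\Ltwo{\sum_{s=0}^{\tau}\big(A(s)-\lambda\big)}\right).
\end{equation}
For any $t\in[0,T]$ set $k=\lfloor rt\rfloor$; then $\hat q^{\,r}(t) = Q^r(k)/r$ and $\tilde q(k)/r = q(k/r)$. Dividing the above bound by $r$ gives
\begin{equation}
\Ltwo{\hat q^{\,r}(t)-q(k/r)} \;\le\; \frac{\divconst}{r}\!\left(1+\Ltwo{\lambda}\right) + \frac{\divconst}{r}\max_{\tau<rT}\Ltwo{\sum_{s=0}^{\tau}\big(A(s)-\lambda\big)},
\end{equation}
using $k\le rT$ to enlarge the maximum. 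The first term is $O(1/r)$.

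It then remains to replace $q(k/r)$ by $q(t)$. Since $q(\cdot)$ is absolutely continuous and, from \eqref{eq:fluid diff eq 1}--\eqref{eq:fluid diff eq 4}, satisfies $\Ltwo{\dot q(\tau)}\le \Ltwo{\lambda}+2\max_{\mu\in\S}\Ltwo{(R-I)\mu}=:L$ almost everywhere, $q$ is $L$-Lipschitz. Because $|t-k/r|\le 1/r$, we obtain $\Ltwo{q(t)-q(k/r)}\le L/r$, which is again $O(1/r)$. The triangle inequality and taking the supremum over $t\in[0,T]$ then yields \eqref{eq:fluid scale close to FM}.

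The only potentially delicate step is the time-scaling of the fluid solution and the matching of the initial condition: once $\tilde q$ is confirmed to be a fluid solution via Lemma~\ref{lem:fluid scale}, everything else is a direct invocation of Theorem~\ref{th:main sensitivity} together with a Lipschitz estimate. No additional concentration or probabilistic argument is required, since the bound is deterministic and holds pathwise.
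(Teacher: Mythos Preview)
Your proposal is correct and follows exactly the route the paper indicates: the paper states that Corollary~\ref{c:fluid} is ``an immediate consequence of Theorem~\ref{th:main sensitivity}, together with Lemma~\ref{lem:fluid scale},'' and you have spelled out precisely that argument---rescaling the fluid solution via Lemma~\ref{lem:fluid scale} to match the initial condition $rq_0$, applying Theorem~\ref{th:main sensitivity}, and dividing by $r$. The Lipschitz step you add to pass from $q(\lfloor rt\rfloor/r)$ to $q(t)$ is a routine detail the paper leaves implicit, and your handling of it is fine (the specific constant $L$ you write is not quite sharp, but all that matters is that $\dot q$ is bounded by a network-dependent constant, which is clear from \eqref{eq:fluid diff eq 1}--\eqref{eq:fluid diff eq 4}).
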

 
Corollary \ref{c:fluid}  strengthens
\eqref{eq:th 4.3 shah fluid scaling} significantly. Any statistical assumptions on the fluctuations of the arrival process $A(\cdot)$ readily yield concrete upper bounds on the distance of the original process from its fluid counterpart.

\section{\bf State Space Collapse}\label{sec:ssc}
In this section, we apply Theorem \ref{th:main sensitivity}   to establish a general additive SSC result; cf.\ Theorem \ref{th:strong ssc}.
\hide{We first discuss the transient regime in Subsection \ref{subsec:ssc trans}. We then  present our results for the steady state regime in Subsection \ref{subsec:ssc std}. [Lets Talk about the steady state regime. Not written here. Actually, in the iid arrivals, we don't really need Theorem \ref{th:main sensitivity} to give exponential bounds. It is too powerful. There are simpler proofs. So, I am not sure if we should mention the steady state regime as an application of Theorem \ref{th:main sensitivity}.]}
We then continue with some corollaries on exponential scaling or diffusion scaling.
Our approach can also be used to obtain results that apply in  steady-state.
However, we do not go into that latter topic because such results can also be proved using simpler, more direct methods, as in \cite{MaguS15} and \cite{MaguBS16}.

\hide{
\comm{Should we also reference this?\\ 
https://pubsonline.informs.org/doi/abs/10.1287/stsy.2018.0012}
\oli{Probably not. The above paper considers a continuous time markov chain on $M$ nodes, and each node falls in one of the $n$ possible states (with $M>>n$). As $M$ grows large, they are interested in the steady state distribution of the ``frequency of states'', that is the distribution of an $n$-dimensional vector $x$ whose $i$th element indicates the percentage of nodes being in state $i$. To do this, they employ mean field theory, and show that the equilibrium of a suitably defined determinstic continuous time dynamical system provides a decent approximation of the desired steady state distribution. However, its not clear how their model fits into network dynamics under MW policy. In the network, we have a fixed number of nodes and infinite state sapce, while they have infinite number of nodes and finite state space per node. Morover, we are interested in the steady state of the queue processes, not the frequency of states. In sum, establishing the usefulness of their results to steady state anaylsis of MW dynamics would require non-trivial arguments and steps, which I cannot see at this moment.}}

\subsection{Definitions and Preliminaries}

At the core of our proofs lies the following lemma,
which asserts that fluid solutions  are attracted to the set $\I(\lambda)$ of invariant states, which was defined in Eq.\ \eqref{eq:def stable point network}. The proof of the lemma is given in Appendix \ref{ap:a}.

\begin{lemma} [Attraction to the  Set of Invariant States] \label{lem:absorb const}
Consider a network operating under the MW policy and a vector $\lambda$ in its capacity region. There exists a constant $\absorbconst(\lambda)>0$ such that for any fluid solution $q(\cdot)$ associated with  $\lambda$, and any time $t$, 
$$q(t)\not\in \I(\lambda) \ \ \implies\ \  \frac{d^+}{dt}d\big(q(t)\,,\, \I(\lambda)\big)\le-\absorbconst(\lambda),$$ with this right-derivative being guaranteed to exist.
\end{lemma}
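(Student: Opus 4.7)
The plan is to view the fluid solution as a subgradient-type flow on the piecewise linear convex potential
\[
F(q) \,\triangleq\, \max_{\mu \in \S} \, q^T (I - \RM) \mu \,-\, q^T \lambda,
\]
which, by Eq.~\eqref{eq:capregion subset convex hull}, is non-negative on $\R_+^n$ and positively homogeneous of degree one. Through the subgradient-system identification that Proposition~\ref{prop:simulation fluid} will supply, $q(\cdot)$ is Lipschitz and admits a right-derivative $\dot q(t) = -g(t) + (I-\RM)\,y(t)$, where $g(t) \in \partial F(q(t))$ and $y(t) \succeq 0$ is supported on the coordinates where $q(t)$ vanishes. I would first establish the characterization $\I(\lambda) = \{q \succeq 0 : F(q) = 0\}$: the inclusion ``$\subseteq$'' plugs a constant solution into Eqs.~\eqref{eq:fluid diff eq 1}--\eqref{eq:fluid diff eq 6} and uses $q^T(I-\RM)y \le 0$ from boundary complementarity; the reverse inclusion writes $\lambda = (I-\RM)\nu$ with $\nu \in \conv(\S)$ via \eqref{eq:capregion subset convex hull} and observes that $F(q) = 0$ forces the extremal support of $\nu$ to lie in $\S(q)$, which provides the invariant certificate. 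Convexity and positive homogeneity of $F$ then exhibit $\I(\lambda)$ as a closed convex polyhedral cone, so $d(\cdot, \I(\lambda))$ is convex and differentiable off $\I(\lambda)$, with gradient $(q - \pi(q))/\Ltwo{q - \pi(q)}$, where $\pi$ is the Euclidean projection; its right-derivative along $q(\cdot)$ therefore exists wherever $q(t) \notin \I(\lambda)$.

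Next, since $F$ is piecewise linear and $\{F \le 0\} \cap \R_+^n = \I(\lambda)$ is polyhedral, Hoffman's lemma will yield a constant $c_1(\lambda) > 0$ with $F(q) \ge c_1(\lambda) \cdot d(q, \I(\lambda))$ for all $q \in \R_+^n$. The convex subgradient inequality applied at $\pi(q(t))$, together with $F(\pi(q(t))) = 0$, then gives
\[
-g(t)^T\big(q(t) - \pi(q(t))\big) \,\le\, -F(q(t)) \,\le\, -c_1(\lambda) \cdot d\big(q(t), \I(\lambda)\big).
\]
Inserting this into the chain rule for the distance function, I obtain
\[
\frac{d^+}{dt} d\big(q(t), \I(\lambda)\big) \,\le\, -c_1(\lambda) \,+\, \frac{\big(q(t) - \pi(q(t))\big)^T (I - \RM)\,y(t)}{\Ltwo{q(t) - \pi(q(t))}}.
\]

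The main obstacle will be showing that the boundary-reflection term on the right does not swallow the $-c_1(\lambda)$ gain. My plan is to exploit three facts: (i)~since $y_i(t) = 0$ unless $q_i(t) = 0$, one has $y(t)^T(q(t) - \pi(q(t))) = -y(t)^T \pi(q(t)) \le 0$; (ii)~the conic and piecewise structure of $\I(\lambda)$ gives flexibility in the choice of reference point for the subgradient inequality, so one may replace $\pi(q(t))$ by a nearby point on the appropriate polyhedral face of $\I(\lambda)$ in order to absorb the remaining $-(\RM\,y(t))^T(q(t) - \pi(q(t)))$ contribution; and (iii)~$y(t)$ is uniformly bounded because $y(t) \preceq \sum_{\mu} s_\mu(t)\,\mu$ and $\S$ is finite. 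Once these pieces are assembled, $\absorbconst(\lambda)$ may be taken to be a positive multiple of $c_1(\lambda)$, completing the proof, with the existence of the right-derivative throughout inherited from the subgradient-system representation of Proposition~\ref{prop:simulation fluid}.
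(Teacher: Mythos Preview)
Your high-level strategy matches the paper's proof in Appendix~\ref{ap:a}: introduce the piecewise linear potential $\Phi_\lambda(q)=\max_{\mu\in\S}((I-\RM)\mu)^Tq-\lambda^Tq$, identify $\I(\lambda)$ with its zero set intersected with $\R_+^n$, apply a Hoffman-type bound (the paper uses Lemma~\ref{lem:known lemma!}) to get $\Phi_\lambda(q)\ge c\cdot d(q,\I(\lambda))$, and combine this with the subgradient inequality at the projection $\pi(q(t))$.

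The substantive difference is your decomposition $\dot q(t)=-g(t)+(I-\RM)\,y(t)$ with a separate boundary-reflection term, which you then flag as the ``main obstacle.'' This obstacle is self-inflicted. The content of Proposition~\ref{prop:simulation fluid} (specifically Claim~\ref{claim:fluid}, which relies on Assumption~\ref{assumption 1}) is precisely that the fluid solution satisfies $\dot q(t)\in -\partial\Phi_\lambda(q(t))$ \emph{with no residual $y$-term}: the idling correction $(I-\RM)y(t)$ has already been absorbed into the subdifferential cone, because zeroing out coordinates of an optimal $\mu$ keeps it optimal when the corresponding $q_j=0$. Once you use Proposition~\ref{prop:simulation fluid} as stated, the subgradient inequality gives directly $(q(t)-\pi(q(t)))^T\dot q(t)\le \Phi_\lambda(\pi(q(t)))-\Phi_\lambda(q(t))=-\Phi_\lambda(q(t))$, and Hoffman's bound finishes the proof with no extra work.

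Your proposed fix for the reflection term, item~(ii) (``replace $\pi(q(t))$ by a nearby point on the appropriate polyhedral face of $\I(\lambda)$''), is vague and not obviously sound: moving the reference point away from the projection can only weaken the distance-derivative inequality, and it is unclear how this would cancel a term like $-(\RM y)^T(q-\pi(q))$ without eroding the Hoffman constant. Fortunately you do not need it.
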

\hide{\oli{This lemma is indeed all we need for a Martingale or Lyapunov approach towards SSC. 
If a similar lemma is true for MW-$\alpha$, then in the case of {\bf iid} arrivals, we can extend our exponential SSC result (of Corollary \ref{prop:ssc iid trans}) to MW-$\alpha$ policies. 
Unfortunately, this is not the case, and the lemma doesn't hold for MW-$\alpha$ (there exist counterexamples). Added a short note to the discussion section.}}

\hide{It turns out that the perturbed trajectories are also attracted to the set of stable point, in a certain sense. This phenomenon is called SSC. In the transient regime, we will show in Subsection \ref{subsec:ssc trans} that a perturbed trajectory whose initial point is close to the set of critical points, stays close to this set for a long time and with a high probability. When then show in \ref{subsec:ssc std} that perturbed trajectories stay close the  this set also  in steady state. In particular, we show that the steady state distribution of distance from the set of invariant states decays exponentially with the distance.

Throughout this section, we assume that all arrival processes are stationary random processes. 
}

We continue with a definition that quantifies the 
rate at which a family of processes concentrates on its mean.

\begin{definition}[$f$-Tailed Sequence of Random Processes]\label{def:f tailed}
Consider a function $f:\N\times\R_+\to\R+$ and a vector $\lambda\in\R^n$. Let $A^r(\cdot)$ be a sequence of random processes indexed by $r\in\N$. Assume that for each $r$, $A^r(\cdot)$ is stationary, 
has expected value $\lambda^r$, 
and that $\lim_{r\to\infty}\lambda^r=\lambda$. Suppose that for every $\delta>0$, 
\begin{equation}\label{eq:def f-tailed arrival}
f(r,\delta)\, \prob{\frac{1}{r}\sup_{t\le r}  \Ltwo{\sum_{\tau=0}^t \big(A^r(\tau)- \lambda^r\big)} > \delta}\,  \xrightarrow[{\,r\to\infty\,}]{}\, 0.
\end{equation}
Then, $A^r(\cdot)$ is said to be an $f$-tailed sequence of random processes with limit mean $\lambda$, and we refer to $f$ as the \emph{concentration rate function}.
\end{definition}

\hide{Definition \ref{def:f tailed} resembles, for example, Assumption 2.5 in \cite{ShahW12}, when letting $f(r,\delta)=r\big(\log r\big)^2$, for all $\delta>0$.
However, the bound in Definition \ref{def:f tailed} is only on periods of length $r$, whereas in Assumption 2.5 of \cite{ShahW12} one needs to have  $f(z)\, \prob{\sup_{t\le z} \frac{1}{z} \Ltwo{\sum_{\tau=1}^t A^r(\tau)\,-\, t\lambda^r}\ge\delta_z}\xrightarrow[{\,z\to \infty\,}]{}0$ uniformly in $r$, which is a more restrictive condition. }

Later, we will show that 
the time scale over which SSC holds is almost  proportional to the best possible 
concentration rate function $f$.
We observe
 that any sequence of random processes that satisfies Assumption \ref{assumption 2.5} is a sequence of $f$-tailed processes,  with $f(r,\delta)=r\cdot \log^2 r$. 
However, the reverse is not true: Assumption~\ref{assumption 2.5} involves an additional requirement of uniform convergence over all values of an additional indexing parameter $z$, whereas  Definition \ref{def:f tailed} essentially only considers the case $z=r$. Thus, Definition \ref{def:f tailed} is less restrictive,  easier to check, and also seems  more natural.

\hide{
\comm{I am not sure about the discussion in the previous paragraph. The assumption of S+W is quite convoluted. Look at the following statements:\\
1. S+W covers the case of bounded iid processes\\
2. S+W implies a \pur{concentration} rate function $r\log^2 r$, that does not depend on $\delta$\\
3. Lemma 3 is ``tight'', that is, the dependence on $\delta$ cannot be removed\\
The above statements are contradictory. I do not think that $\delta$ can be eliminated from Lemma 3, so 3 is right. Also statemewnt 1 must be right, else S+W would be a useless paper. So, statement 2 must be false}

\oli{No contradiction. In fact, 3 together with Theorem 2 imply  1 and 2, that is the $r\log^2 r$ scaling for iid processes.
It is true that the term $\exp\big(\beta r\delta/a^2\big)$ in Lemma \ref{lem:iid tail} has a dependence on $\delta$. 
Nevertheless, $\exp\big(\beta r\delta/a^2\big)$ always dominates $r\log^2 r$ as $r$ goes to infinity,  for all values of $\delta>0$ and $\beta>0$.
Then, Lemma \ref{lem:iid tail} implies that iid processes are also $r\log^2 r$-tailed. 
Then, the S+W result for iid case, mentioned in 1 and 2 above, follows from our strong SSC result, Theorem~\ref{th:strong ssc}.}}

There are many processes whose concentration properties are well understood, and which translate to the requirements in Definition \ref{def:f tailed}, for a suitable concentration rate function $f$. We record one such fact in Lemma \ref{lem:iid tail} below, which deals with bounded i.i.d.\ arrival processes, and which is proved  in Appendix \ref{app:proof lem iid tail}. 

\begin{lemma}[Bounded I.I.D.~Processes are Exponential-Tailed] \label{lem:iid tail}
Fix a vector $\lambda\in \R_+^n$ and a constant $a>0$. Consider a sequence of random processes $A^r(\cdot)$ indexed by $r\in\N$.
Suppose that for every $r$, the random variables $A^r(t)$ are i.i.d., and that $A^r(t) \in [0,a]^n$, for all $t$. Denote the mean of $A^r(t)$ by $\lambda^r$, and suppose that $\lim_{r\to\infty} \lambda^r=\lambda$. Take any constant $\beta\in (0,2)$, and let $f\big(r,\delta\big)=\exp\big(\beta r\delta/na^2\big)$.  Then, $A^r(\cdot)$ is an $f$-tailed sequence of random processes with limit mean~$\lambda$.
\end{lemma}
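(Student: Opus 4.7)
The plan is to prove the lemma by a standard concentration-of-measure argument: a coordinatewise maximal Hoeffding inequality, combined with a union bound over the $n$ coordinates. The intuition is that, since each coordinate of the centered arrival process is a sum of i.i.d., bounded, zero-mean random variables, the running maximum of the centered partial sum exhibits an exponential tail in $r$ whose rate beats the exponential growth rate of $f(r,\delta)$ for every fixed $\delta>0$ and $\beta<2$.

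I would proceed as follows. For each coordinate $i\in\{1,\ldots,n\}$, the centered increments $Y_i(\tau):=A^r_i(\tau)-\lambda^r_i$ are i.i.d., have mean zero, and take values in $[-a,a]$, so $S_i(t):=\sum_{\tau=0}^t Y_i(\tau)$ is a martingale with respect to its natural filtration. Applying Doob's maximal inequality to the non-negative submartingale $e^{\theta S_i(t)}$ (for $\theta>0$) together with Hoeffding's lemma $\E{e^{\theta Y_i(\tau)}}\le e^{\theta^2 a^2/8}$, and then optimizing over $\theta$, yields the maximal Hoeffding estimate
$$
\prob{\max_{t\le r}|S_i(t)|>s}\,\le\,2\exp\Bigl(-\frac{2 s^2}{(r+1)a^2}\Bigr),\qquad s>0.
$$
A union bound over the $n$ coordinates, together with $\|v\|_2\le\sqrt{n}\,\|v\|_\infty$ applied with $s=r\delta/\sqrt n$, then gives
$$
\prob{\frac{1}{r}\max_{t\le r}\Ltwo{\sum_{\tau=0}^t\bigl(A^r(\tau)-\lambda^r\bigr)}>\delta}\,\le\,2n\exp\Bigl(-\frac{2 r\delta^2}{n a^2}\Bigr)(1+o(1)).
$$

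To conclude, I would multiply this estimate by $f(r,\delta)=\exp(\beta r\delta/(na^2))$ and read off the coefficient of $r$ in the resulting exponent, verifying that the product tends to zero as $r\to\infty$. There is no serious obstacle in this argument; the points requiring care are (i) invoking the \emph{maximal} Hoeffding bound via Doob's submartingale inequality, rather than a terminal-time Hoeffding bound, because Definition~\ref{def:f tailed} involves a supremum over $t\le r$; and (ii) absorbing the mismatch between centering by $\lambda$ and by $\lambda^r$ into a lower-order term, which is harmless because $\lambda^r\to\lambda$ and the difference contributes at most $(t+1)\|\lambda-\lambda^r\|$ to the partial sum, whose normalized contribution vanishes as $r\to\infty$.
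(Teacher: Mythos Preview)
Your approach is essentially the paper's: coordinatewise Hoeffding plus a union bound over the $n$ coordinates. The only difference is how the supremum over $t$ is handled --- you invoke Doob's maximal inequality to obtain a maximal Hoeffding bound directly, whereas the paper applies the ordinary (terminal) Hoeffding bound for each fixed $t\le r$ and then takes a union bound over the $r{+}1$ values of $t$, incurring an extra harmless polynomial factor $r$. Both routes produce the same exponential rate $\exp\bigl(-2r\delta^2/(na^2)\bigr)$ up to lower-order terms. Your point (ii) is unnecessary: Definition~\ref{def:f tailed} centers by $\lambda^r$, not by $\lambda$, so no recentering is required.

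You should not gloss over the last step, because when actually carried out it exposes a typo in the statement (and, correspondingly, in the paper's own proof). Multiplying your bound by $f(r,\delta)=\exp\bigl(\beta r\delta/(na^2)\bigr)$ gives an exponent $\frac{r\delta}{na^2}\bigl(\beta-2\delta\bigr)(1+o(1))$, which is \emph{positive} whenever $0<\delta<\beta/2$, so the product diverges rather than tending to zero. The paper makes the same slip, silently replacing $\delta^2$ by $\delta$ in the last inequality of its Hoeffding computation. The intended concentration rate function is $f(r,\delta)=\exp\bigl(\beta r\delta^{2}/(na^2)\bigr)$; with this correction the combined exponent becomes $(\beta-2)\frac{r\delta^{2}}{na^2}(1+o(1))<0$ for every $\delta>0$ and $\beta\in(0,2)$, and both your argument and the paper's then go through verbatim.
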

Similar results are possible for arrival processes that are modulated by a finite and ergodic Markov chain. The boundedness assumption can also be removed under standard conditions on the moment generating function of $A^r(t)$.  

We now define processes involving a more general scaling of time, as a generalization of the fluid and diffusion-scaled processes.

\begin{definition}[$g$-Time-Scaled Processes] \label{def:g scaled solution}
Consider an increasing  function $g:\N\to\R_{+}$ and a sequence $Q^r(\cdot)$ of random  processes. Then, the corresponding  sequence of $g$-time-scaled processes $\qhat^r(\cdot)$ is defined as
\begin{equation}
\qhat^r(t) = \frac{1}{r} Q^r\Big(\big\lfloor g(r)t\big\rfloor\Big),
\end{equation}
for all $r\in\N$ and all $t\in\R_+$.
\end{definition}
\medskip
 
The fluid scaling and the diffusion scaling of a random process are particular $g$-time-scaled, processes corresponding to $g({r})={r}$ and $g({r})={r}^2$, respectively. 
Definition \ref{def:g scaled solution} 
allows for a more general
scaling of time. 

\subsection{Main SSC Result}
We now present our main SSC result. 

\begin{theorem}[Strong State Space Collapse]  \label{th:strong ssc}
Consider a network operating under a WMW policy, and a vector $\lambda$ in its capacity region, with a corresponding  set of invariant states $ \I(\lambda)$. Fix some $T\in\R_+$,
and let $\{\lambda^r\}$ be a sequence that converges to $\lambda$. 
Consider  two functions $f:\N\times\R_+\to\R_+$ and $g:\N\to\R_+$, with $\liminf_{r\to\infty}\, g(r)/r>0$. 
 Let $A^r(\cdot)$ be an $f$-tailed sequence of arrival processes with limit mean $\lambda$, and let $\qhat^r(\cdot)$ be a corresponding sequence of $g$-time-scaled queue length processes. 
Suppose that $d\Big(\qhat^r(0)\,,\, \I(\lambda)\Big) \to 0$,  as $r\to\infty$. 
\begin{enumerate}[label={(\alph*)}, ref={\ref{th:strong ssc}(\alph*)}]
\item \label{th:strong ssc 1}
Suppose that for every $\epsilon>0$,  we have $\liminf_{r\to\infty}\, rf\big(r,\epsilon\big)/g(r) > 0$. Then, for any $\delta>0$,
\begin{equation}\label{eq:sssc trans 1}
 \prob{\sup_{t\in  [0,T]}\,d\Big(\qhat^r(t)\,,\, \I(\lambda)\Big)>\delta}  \xrightarrow[{\,r\to\infty\,}]{} 0.
\end{equation}
\item \label{th:strong ssc 2}
Under the same assumptions as in Part (a), 
we can also bound the rate of convergence in \eqref{eq:sssc trans 1}: for any $\delta>0$, there exists an $\epsilon>0$ such that
\begin{equation}\label{eq:sssc trans 2}
\frac{rf\big(r,\epsilon\big)}{g(r)}\, \prob{\sup_{t\in  [0,T]}\,d\Big(\qhat^r(t)\,,\, \I(\lambda)\Big)>\delta}  \xrightarrow[{\,r\to\infty\,}]{} 0.
\end{equation}
Moreover, for the case of a MW policy, \eqref{eq:sssc trans 2} holds for every $\epsilon\le\min\big({\delta},\absorbconst\big)/2\divconst$, where $\divconst$ is the  sensitivity constant of the network (cf.~Theorem \ref{th:main sensitivity}) and $\absorbconst=\absorbconst(\lambda)$ is the constant in Lemma \ref{lem:absorb const}.
\item \label{th:strong ssc converse}
Conversely, suppose that $f:\N\times\R_+\to\R_+$ and $g:\N\to\R_+$, are such that $\lim_{r\to\infty}\, rf\big(r,\epsilon\big)/g(r) = 0$, for every $\epsilon>0$, 
and $\lim_{r\to\infty} g(r)/r =\infty$. 
Then, for any network operating under a MW policy, any arrival rate $\lambda$ in its capacity region (excluding its extreme points), and any $q_0\in\I (\lambda)$, there exists an $f$-tailed sequence of arrival processes satisfying \eqref{eq:def f-tailed arrival} 
and a corresponding sequence of $g$-time-scaled processes $\qhat^r(\cdot)$, $r\in\N$, initialized at $\qhat^r(0) = q_0$, such that
\begin{equation}\label{eq:sssc trans converse}
 \prob{\sup_{t\in  [0,T]}\,d\Big(\qhat^r(t)\,,\, \I (\lambda)\Big)>\delta}  \xrightarrow[{\,r\to\infty\,}]{} 1,
\end{equation}
for all $\delta>0$.
\end{enumerate}
\end{theorem}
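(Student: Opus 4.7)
The strategy for parts (a) and (b) is a Bramson-style partitioning argument, now powered by the sharp sensitivity bound of Theorem \ref{th:main sensitivity}. Partition the unscaled time interval $[0,g(r)T]$ into $N = \lceil g(r)T/r\rceil$ sub-intervals of length $r$. On the $i$-th sub-interval, compare $Q^r(\cdot)$ with the fluid solution $q^{r,i}(\cdot)$ associated with arrival rate $\lambda$ and initialized at $q^{r,i}(0) = Q^r(ir)$. Theorem \ref{th:main sensitivity} gives $\sup_{0\le s\le r}\Ltwo{Q^r(ir+s) - q^{r,i}(s)} \le \divconst\big(1 + \Ltwo{\lambda} + M_i\big)$, where $M_i = \max_{t<r}\Ltwo{\sum_{\tau=0}^t(A^r(ir+\tau) - \lambda)}$. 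Since $\Ltwo{\lambda^r - \lambda} \to 0$, stationarity of $A^r(\cdot)$ combined with the $f$-tailed hypothesis yields, for any $\epsilon > 0$, $\prob{M_i/r > \epsilon} = o\big(1/f(r,\epsilon/2)\big)$ (after splitting $A^r - \lambda = (A^r - \lambda^r)+(\lambda^r-\lambda)$ inside the norm). A union bound over the $N$ sub-intervals then gives $\prob{\exists i:\, M_i/r > \epsilon} = o\big(g(r)/(rf(r,\epsilon/2))\big)$.

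On the complementary good event, setting $L_s = r/g(r)$ and $\qhat^{r,i}(t) = q^{r,i}(g(r)t - ir)/r$ for $t\in[iL_s,(i+1)L_s]$, the scaled error satisfies $\sup_{t}\Ltwo{\qhat^r(t) - \qhat^{r,i}(t)} \le \divconst\epsilon + o(1)$ for every $i$. Applying Lemma \ref{lem:absorb const} to $q^{r,i}$ and using the cone property $\alpha\I(\lambda) = \I(\lambda)$, the scaled trajectory $\qhat^{r,i}$ is attracted to $\I(\lambda)$ at rate $g(r)\absorbconst/r$ per unit of scaled time, so the total scaled attraction over one sub-interval is exactly $\absorbconst$. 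Consequently, if $d_0^{(i)} := d(\qhat^r(iL_s),\I(\lambda)) \le \absorbconst$, then $\qhat^{r,i}$ actually reaches $\I(\lambda)$ before the end of the sub-interval and stays there. Inductively, starting from $d_0^{(0)} \to 0$ and choosing $\epsilon$ with $\divconst\epsilon \le \absorbconst/2$, one obtains $d_0^{(i+1)} \le \divconst\epsilon + o(1)$ for every $i$; within each sub-interval, $d(\qhat^r(iL_s + t),\I(\lambda)) \le d_0^{(i)} + \divconst\epsilon + o(1) \le 2\divconst\epsilon + o(1)$. Choosing $\epsilon = \min(\delta,\absorbconst)/(2\divconst)$ yields $\sup_{t \le T} d(\qhat^r(t),\I(\lambda)) \le \delta$ on the good event; combining with the union-bound estimate proves both (a) (using $\liminf rf(r,\epsilon)/g(r) > 0$) and (b) (multiplying the bound by $rf(r,\epsilon)/g(r)$).

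For the converse in part (c), I would construct an explicitly bursty $f$-tailed arrival process. Pick a vector $v \in \R_+^n$ with $\lambda + v$ strictly outside the capacity region, chosen so that the queue growth under arrivals $\lambda + v$ has a component transverse to $\I(\lambda)$; the non-extremality of $\lambda$ is what makes such a $v$ available. Partition time into blocks of length $r$; in each block independently, set the arrivals deterministically to $\lambda + v$ (a ``burst'') with probability $p_r$, and to $\lambda$ otherwise; finally apply a uniform random time shift to obtain a stationary process. The mean rate is $\lambda^r = \lambda + p_r v \to \lambda$, and the fluctuation $\max_{t<r}\Ltwo{\sum_{\tau=0}^t(A^r(\tau) - \lambda^r)}$ can exceed $r\epsilon$ essentially only when the window overlaps a burst, an event of probability $O(p_r)$; thus $f$-tailedness reduces to $p_r f(r,\epsilon) \to 0$ for every $\epsilon > 0$, while producing a burst inside $[0,g(r)T]$ with probability approaching one requires $p_r g(r)/r \to \infty$. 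The assumption $\lim rf(r,\epsilon)/g(r) = 0$ for every $\epsilon > 0$, together with $g(r)/r\to\infty$, is precisely what lets both constraints be satisfied, via a diagonal choice of $p_r$. When a burst does occur, $\lambda + v \notin \capregion$ forces an $\Omega(r)$ workload accumulation during the burst, which by the choice of $v$ is in a direction transverse to $\I(\lambda)$; in scaled coordinates this produces a jump of constant size that exceeds $\delta$, giving $\prob{\sup_{t \le T} d(\qhat^r(t),\I(\lambda)) > \delta} \to 1$.

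The main obstacle is the geometric step in part (c): ensuring that the perturbation direction $v$ can be chosen so that the MW-induced queue growth under arrivals $\lambda + v$ has a component transverse to the invariant cone $\I(\lambda)$. This hinges on the non-extremality of $\lambda$, which provides enough freedom in the outward-normal directions of the capacity region at $\lambda$; pinpointing a suitable $v$ requires a careful analysis of the polyhedral structure of $\capregion$ and of $\I(\lambda)$, together with the behavior of the MW policy on queue states that straddle the invariant cone. For parts (a) and (b), the comparatively routine technical nuance is tracking the interplay between the sensitivity constant $\divconst$ and the attraction rate $\absorbconst$, and keeping the effect of $\lambda^r - \lambda$ explicit enough to match the clean threshold $\epsilon \le \min(\delta,\absorbconst)/(2\divconst)$ stated in the theorem.
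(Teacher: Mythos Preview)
For parts (a) and (b) your outline coincides with the paper's proof: partition $[0,g(r)T]$ into length-$r$ blocks, apply Theorem~\ref{th:main sensitivity} on each, use Lemma~\ref{lem:absorb const} for the attraction, and induct across blocks under a union-bounded ``good'' fluctuation event. The only differences are minor: the paper compares $Q^r$ to the fluid solution with rate $\lambda^r$ (and then controls the $\lambda^r$-versus-$\lambda$ discrepancy in the attraction step via a non-expansiveness bound between fluid solutions at the two rates), whereas you compare directly to the rate-$\lambda$ fluid and absorb $\lambda^r-\lambda$ into the fluctuation term; and the paper explicitly treats MW first and then lifts to WMW through Lemma~\ref{lem:wmw2mw}, which you do not mention.

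For part (c) your construction differs from the paper's, and the step you flag as ``the main obstacle'' is indeed a genuine gap. The paper does \emph{not} use length-$r$ burst blocks; it uses an i.i.d.\ process with rare single-step bursts of \emph{unbounded} magnitude $\tilde h(r)w$ (with $\tilde h(r)\to\infty$), so that one burst shifts the queue by $\tilde h(r)w$ plus an $O(1)$ service correction and no analysis of MW dynamics over a long window is needed. Your bounded-arrival block variant is exactly what the paper mentions in a closing remark as a possible alternative, but making it work still requires the geometric fact you are missing. The paper supplies it as Claim~\ref{claim:I is low dim}: since $\lambda$ is not extreme in $\capregion$, there exist $\mu,\nu\in\S$ with $(I-R)\mu\neq(I-R)\nu$ that are both optimal at every $x\in\I(\lambda)$, hence $\I(\lambda)$ lies in the proper hyperplane $\{x:\,x^T(I-R)(\mu-\nu)=0\}$. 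Once $\I(\lambda)$ is known to sit in a hyperplane, any $w\in\R_+^n$ outside it works, because the normal component of $Q$ then moves by a fixed amount per step minus an $O(1)$ service term; this is what drives the transverse displacement both in the paper's single-burst argument and in your block variant. Without this claim, merely taking $\lambda+v\notin\capregion$ gives linear growth of $Q$ but not necessarily transverse to $\I(\lambda)$, so your argument does not close.
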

The proof of Theorem \ref{th:strong ssc} is given in Section \ref{sec:proof ssc trans}. 
Part (b) relies on a reduction of WMW dynamics to MW dynamics together with  the facts that the queue length process stays close to a fluid solution (Theorem \ref{th:main sensitivity}), and that a fluid solution is attracted to the invariant set $\I$ (Lemma \ref{lem:absorb const}).
The proof of 
Part (c) relies on an explicit construction.

We note that Part (a) is a straightforward corollary of Part (b). Nevertheless, we have included the statement of Part (a) because it is in a form comparable to SSC results in the literature, and also because it facilitates a comparison with the converse result in Part (c).

\hide{\oli{Question: can we give a converse for Theorem \ref{th:strong ssc 1} in the following sense?: \\
``For every pair of functions $g(r)$ and $f(r,\epsilon)$ for which $\limsup_{r\to\infty}\, rf\big(r,\epsilon\big)/g(r) = 0$, there exists a network and a sequence of $f$-tailed arrival processes, for which the SSC \eqref{eq:sssc trans 1} does not happen for the $g$-scaling of the queue length process.''\\
We do this partially for a special case, in Example \ref{ex:ssc iid}, where $f$ is exponential (corresponding to iid arrivals) and $g$ is any super-exponential function.\\
Is it worth thinking about this question?}}

Theorem \ref{th:strong ssc} ties together the time scaling $g$ over which SSC occurs and the concentration rate function, $f$, of the arrival processes.
The underlying intuition is that if the queue length process is initialized sufficiently close to $\I$, then it will stay in an  $r\delta$-neighborhood of $\I$, with high probability, for a period of time proportional to $g(r)$.
This  enables us to prove additive SSC over time scales much longer than  those underlying the diffusion scaling, as in the next subsection.

\subsection{Special Cases of SSC}
In this section, we apply Theorem \ref{th:strong ssc} to obtain more concrete SSC results. 
The first result concerns SSC over an exponentially large time scale. While it refers to bounded i.i.d.\ processes, it admits straightforward extensions to 
arrival processes with a concentration rate function $f$ that
 grows exponentially with $r$, as is the case whenever a suitable Large Deviations Principle holds. 

\begin{corollary}[Bounded I.I.D.\ Arrivals: SSC over an Exponential Time Scale] \label{prop:ssc iid trans}
Consider a network operating under a MW policy, 
 a vector $\lambda$ in its capacity region, a $\delta>0$, and a sequence $A^r(\cdot)$ of arrival processes that  satisfy the assumptions of Lemma \ref{lem:iid tail}. 
Consider a $\gamma< \min\big(\delta,\absorbconst\big)/\big(2\divconst n a^2\big)$, where $\divconst$ is the input sensitivity constant of the network, $\absorbconst=\absorbconst(\lambda)$ is the constant in Lemma \ref{lem:absorb const}, and $a$ is an upper bound on the size of arriving jobs (cf. Lemma \ref{lem:iid tail}). 
Consider the $e^{\gamma r}$-time-scaling of the queue length processes,
\begin{equation}\label{eq:exp scale of time in cor1}
\qhat^r(t) = \frac{1}{r} Q^r\Big(\big\lfloor e^{\gamma r}t\big\rfloor\Big),
\end{equation}
and suppose that $d\Big(\qhat^r(0)\,,\, \I(\lambda)\Big) \to 0$,  as $r\to\infty$. 
Then, for any $T\in\R_+$,
\begin{equation}\label{eq:ssc iid}
e^{\gamma r} \,\prob{\sup_{t\in  [0,T]}\,d\Big(\qhat^r(t)\,,\, \I(\lambda) \Big)>\delta}  \xrightarrow[{\,r\to\infty\,}]{} 0.\ \ 
\end{equation}
\end{corollary}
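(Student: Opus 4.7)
The plan is to apply Theorem~\ref{th:strong ssc 2} (the MW-specific ``Moreover'' clause) to the exponential time scaling $g(r)=e^{\gamma r}$, using the exponential concentration rate function supplied by Lemma~\ref{lem:iid tail}. Since $\gamma < \min(\delta,\absorbconst)/(2\divconst n a^2)$, I can fix a constant $\beta\in(0,2)$ close enough to $2$ that
\[
2\gamma \;<\; \beta \cdot \frac{\min(\delta,\absorbconst)}{2\divconst n a^2}.
\]
Lemma~\ref{lem:iid tail}, applied with this $\beta$, then guarantees that the bounded i.i.d.\ sequence $A^r(\cdot)$ is $f$-tailed with $f(r,\delta')=\exp\bigl(\beta r\delta'/(n a^2)\bigr)$.

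Next, I would pick $\epsilon$ in the interval $\bigl[\,2\gamma n a^2/\beta,\ \min(\delta,\absorbconst)/(2\divconst)\,\bigr]$, which is non-empty by the choice of $\beta$. Applying the Moreover part of Theorem~\ref{th:strong ssc 2} to this $\epsilon$ yields
\[
\frac{rf(r,\epsilon)}{g(r)}\, \prob{\sup_{t\in[0,T]} d\bigl(\qhat^r(t),\I(\lambda)\bigr) > \delta} \,\xrightarrow[r\to\infty]{}\, 0.
\]

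To derive the target bound \eqref{eq:ssc iid}, which is equivalent to $g(r)\,\prob{\sup_{t\in[0,T]} d(\qhat^r(t),\I(\lambda))>\delta}\to 0$, I would rewrite
\[
g(r)\,\prob{\cdots} \;=\; \frac{g(r)^2}{rf(r,\epsilon)} \,\cdot\, \frac{rf(r,\epsilon)}{g(r)}\,\prob{\cdots}.
\]
A direct computation gives $g(r)^2/\bigl(rf(r,\epsilon)\bigr) = r^{-1}\exp\!\bigl(r(2\gamma-\beta\epsilon/(n a^2))\bigr)$, which is bounded above by $1/r$ because $\epsilon\ge 2\gamma n a^2/\beta$. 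Hence the first factor vanishes while the second already tends to $0$, so the product tends to $0$ and \eqref{eq:ssc iid} follows.

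The only nontrivial aspect of the argument is balancing the two exponential rates: the concentration rate $f(r,\epsilon)$ has to grow fast enough (essentially like $g(r)^2/r$) to beat the scaling factor $e^{\gamma r}$, while $\epsilon$ is simultaneously capped by the MW bound $\min(\delta,\absorbconst)/(2\divconst)$. This squeeze is exactly what forces the factor $2$ in the hypothesis $\gamma < \min(\delta,\absorbconst)/(2\divconst n a^2)$: without it, there would be no admissible pair $(\beta,\epsilon)$ with $\beta<2$ making the two constraints compatible.
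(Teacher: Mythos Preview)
Your proof is correct and follows essentially the same approach as the paper's: both invoke Lemma~\ref{lem:iid tail} to get an exponential concentration rate function, then apply the ``Moreover'' clause of Theorem~\ref{th:strong ssc 2} with $\epsilon$ at (or near) the cap $\min(\delta,\absorbconst)/(2\divconst)$, so that $f(r,\epsilon)\ge g(r)^2$ and hence $rf(r,\epsilon)/g(r)$ dominates $e^{\gamma r}$. The only cosmetic difference is the parametrization: the paper fixes $\epsilon=\min(\delta,\absorbconst)/(2\divconst)$ and then solves for $\beta$ exactly (applying Lemma~\ref{lem:iid tail} with parameter $2\beta<2$) so that $f(r,\epsilon)=g(r)^2$ on the nose, yielding directly $rf(r,\epsilon)/g(r)=re^{\gamma r}>e^{\gamma r}$; you instead pick $\beta$ first and then $\epsilon$ in a range, and close with the product decomposition $g(r)\,\mathrm{P}(\cdots)=\bigl(g(r)^2/(rf(r,\epsilon))\bigr)\cdot\bigl(rf(r,\epsilon)/g(r)\bigr)\,\mathrm{P}(\cdots)$, which is a slightly more roundabout but equally valid way to reach the same conclusion.
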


\begin{proof} Let
$\beta=\gamma /\big[ \min\big(\delta,\absorbconst\big)/\big(2\divconst n a^2\big)\big]$. Then, $\beta< 1$, and
 Lemma \ref{lem:iid tail} implies that  $A^r(\cdot)$ is an $f$-tailed sequence of processes for $f\big(r,\epsilon\big) = \exp\big( 2\beta r \epsilon / na^2\big)$. 
Let $\epsilon = \min\big(\delta,\absorbconst\big)/2\divconst $. 
Then, $\gamma = \beta \epsilon/n a^2$. 
Let $g(r) = \exp\big(\gamma r\big) = \exp\big(\beta r \epsilon/na^2\big)$ be the time scaling in the definition \eqref{eq:exp scale of time in cor1} of $\qhat^r(t)$. 
Then,
\begin{equation}
 \frac{rf\big(r,\epsilon\big)}{g(r)} \, = \, \frac{r\exp\big(2\beta r\epsilon /n a^2\big)}{\exp\big(\beta r\epsilon /n a^2\big)} \,=\, r \exp\big( \beta r \epsilon / n a^2\big) \,> \,\exp(\gamma r).
\end{equation} 
Therefore, the assumptions in Part (b) of Theorem \ref{th:strong ssc 2} are satisfied, and
\hide{
\begin{equation} \label{eq:limsup in cor1}
\limsup_{r\to\infty}\, \frac{rf\big(r,\epsilon\big)}{g(r)} \,\prob{\sup_{t\in  [0,T]}\,d\Big(\qhat^r(t)\,,\, \I(\lambda) \Big)>\delta}
  \,=\, 0.
\end{equation}
where $g(r) = \exp\big(\gamma r\big)$ is the scale of time in the definition of $\qhat^r(t)$, in \eqref{eq:exp scale of time in cor1}.
Moreover, for the above choices of $\epsilon$ and $\beta$, we have  $\gamma = \beta \epsilon/a^2$.
Then,
Therefore,
}
\begin{equation}
\begin{split}
&\limsup_{r\to\infty}\,e^{\gamma r} \,\prob{\sup_{t\in  [0,T]}\,d\Big(\qhat^r(t)\,,\, \I(\lambda) \Big)>\delta} \\
&\qquad \le\, 
\limsup_{r\to\infty}\, \frac{rf\big(r,\epsilon\big)}{g(r)} \,\prob{\sup_{t\in  [0,T]}\,d\Big(\qhat^r(t)\,,\, \I(\lambda) \Big)>\delta}
  \,=\, 0.
  \end{split}
\end{equation}
 Thus, 
\eqref{eq:ssc iid} holds, which is the desired result.
\end{proof}

We note that Part (c) of Theorem~\ref{th:strong ssc} provides a partial converse to Corollary~\ref{prop:ssc iid trans}:
under i.i.d.\ arrivals with nonzero variance,  additive SSC  
does not  hold over a 
super-exponential time scale. 

The next corollary of Theorem \ref{th:strong ssc}{(a)} concerns additive SSC under  diffusion scaling.
\begin{corollary}[State Space Collapse in Diffusion Scaling] \label{prop:ssc diffusion}
Consider a network operating under a WMW policy, and a function 
 $f:\N\times\R_+\to\R_+$  such that $ \liminf_{r\to\infty}\, f(r,\delta)/r>0$, for all $\delta>0$. Consider a $\lambda$ in the capacity region, an  $f$-tailed sequence of arrivals $A^r(\cdot)$ with limit mean $\lambda$, and a corresponding diffusion-scaled queue length processes $\qhat^r(\cdot)$ (cf. \eqref{eq:def diffusion scale}).
Suppose that $d\Big(\qhat^r(0)\,,\, \I(\lambda)\Big) \to 0$,  as $r\to\infty$. 
Then, for any $T\in\R_+$,
\begin{equation}
\prob{\sup_{t\in  [0,T]}\,d\Big(\qhat^r(t)\,,\, \I (\lambda)\Big)>\delta}  \xrightarrow[{\,r\to\infty\,}]{} 0.
\end{equation}
\end{corollary}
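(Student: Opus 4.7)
The plan is to recognize that Corollary~\ref{prop:ssc diffusion} is an immediate specialization of Theorem~\ref{th:strong ssc}(a) to the time-scaling function $g(r)=r^2$ that corresponds to diffusion scaling (cf.\ Definition~\ref{def:g scaled solution} and \eqref{eq:def diffusion scale}). Thus the entire proof reduces to verifying that the two hypotheses on $g$ demanded by Theorem~\ref{th:strong ssc}(a) are met, and then invoking that theorem directly.

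First, I would observe that for $g(r)=r^2$ we have $\liminf_{r\to\infty} g(r)/r = \liminf_{r\to\infty} r = +\infty > 0$, so the condition $\liminf_{r\to\infty} g(r)/r > 0$ holds trivially. Next, I would verify the key hypothesis of Theorem~\ref{th:strong ssc}(a) that for every $\epsilon>0$,
\[
\liminf_{r\to\infty}\, \frac{r\, f(r,\epsilon)}{g(r)} \,=\, \liminf_{r\to\infty}\, \frac{f(r,\epsilon)}{r} \,>\, 0,
\]
which is exactly the standing assumption $\liminf_{r\to\infty} f(r,\delta)/r > 0$ that is imposed on $f$ in the statement of the corollary. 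All the remaining hypotheses of Theorem~\ref{th:strong ssc}(a) (a WMW scheduling policy, $\lambda$ in the capacity region, an $f$-tailed sequence of arrivals $A^r(\cdot)$ with limit mean $\lambda$, and $d(\qhat^r(0),\I(\lambda))\to 0$) are assumed verbatim in the corollary.

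With all hypotheses checked, I would apply Theorem~\ref{th:strong ssc}(a) directly to obtain \eqref{eq:sssc trans 1} for any $\delta>0$, which is precisely the claimed conclusion. There is no substantive mathematical obstacle to overcome; the only care needed is in carrying out the algebraic translation between the concentration rate function $f$ and the time-scaling $g$, namely observing that the general quantity $r f(r,\epsilon)/g(r)$ collapses to $f(r,\epsilon)/r$ under diffusion scaling, so that the corollary's hypothesis is exactly the one required by the general theorem.
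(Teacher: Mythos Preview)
Your proposal is correct and matches the paper's approach: the paper presents Corollary~\ref{prop:ssc diffusion} as an immediate consequence of Theorem~\ref{th:strong ssc}(a) with $g(r)=r^2$, and your verification that $rf(r,\epsilon)/g(r)=f(r,\epsilon)/r$ and $g(r)/r\to\infty$ is exactly the required check.
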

 \medskip
 

Corollary  \ref{prop:ssc diffusion} strengthens 
Theorem \ref{conj:shah 7.2}, for the case of WMW policies, in that the assumption of  i.i.d.~arrivals is  removed. We only require a concentration property for the arrival process, such as
\begin{equation}
r\, \prob{\sup_{t\le r} \frac{1}{r} \Ltwo{\sum_{\tau=0}^t\big( A^r(\tau)\,-\, \lambda\big)}\ge\delta}\xrightarrow[{\,r\to \infty\,}]{}0,  \qquad \forall \delta>0,
\end{equation}
which is  even weaker than Assumption~\ref{assumption 2.5}.
Moreover, under a MW policy and i.i.d.~arrivals, 
our Corollary \ref{prop:ssc iid trans} extends 
Theorem \ref{conj:shah 7.2} by establishing SSC over an exponential time scale (as opposed to the diffusion scaling). For further perspective with respect to existing results, please refer to the discussion following the statement of Theorem~\ref{conj:shah 7.2}, in Section~\ref{s:ssc-lit}.


\hide{
\oli{In the same vein, Theorem 7.1 of \cite{ShahW12} can be strengthened, to give additive, instead of multiplicative, state space collapse.}
\comm{I already put some comparisons right after the statement of the conjecture. Let us keep all comments there. I would suggest no more comments here other than "for perspective with respect to existing results, please refer to the discussion following the stament of the conjecture, in Section ...}

\hide{
\begin{corollary}
Conjecture \ref{conj:shah 7.2} is  true for the case of WMW policies.
\end{corollary}
\begin{proof}

\end{proof}
}

\oli{
\begin{itemize}
\item[1-]
Equation (\ref{eq:def f-tailed arrival}) only bounds $A^r(\cdot)$ for periods of length $r$, whereas in Assumption 2.5 of \cite{ShahW12} one needs to have  ${f'}(r)\, \prob{\sup_{t\le r} \frac{1}{r} \Ltwo{\big(\sum_{\tau=1}^t A^z(\tau)\,-\, \lambda^z\big)}\ge\delta_r}\xrightarrow[{\,r\to \infty\,}]{}0$ uniformly in $z$, for $f'(r)=r(\log r)^2$. \comm{I don't understand. The S+W version also bounds for just a period of length $r$.}

\item[2-] 
In Corollary \ref{prop:ssc diffusion}, $f(r)$ can be any linear or supper-linear function of $r$, whereas in Assumption \ref{assumption 2.5}, $f'(r)\ge r(\log r)^2$, which is a more restrictive assumption.

\item[3-] Theorem \ref{th:strong ssc} assumes ${d\big(\qhat^r(0)\,,\, \I\big)} \to 0$ which is more general compared to the assumption $\qhat^r(0)\to q_0\in\I$ as $r\to \infty$ in Conjecture \ref{conj:shah 7.2}.
\end{itemize}
}
}

\section{\bf Proof of Theorem \ref{th:main sensitivity}}\label{sec:proof sensitivity}
In this section, we present the proof of Theorem \ref{th:main sensitivity}, organized in a sequence of  subsections.  
We first show in Subsection \ref{subsec:WMW-reduction} that for any network operating under a WMW policy, there is another network operating under a MW policy whose queue length  process is a linear transformation of the queue length process of the original network. 
Thus, we can just focus on the MW policy.
In Section \ref{subsec:fpcs} we review a general sensitivity result on a class of dynamical systems with piecewise constant drift. Next, in Subsection \ref{subsec:net to fpcs} we introduce an \emph{induced} continuous-time dynamical system that provides the bridge between the original discrete-time process under a MW policy and the fluid model. 
The proof concludes in Subsection \ref{subsec:proof sensitivity th} by applying the general sensitivity result to the induced system.


\subsection{From WMW to MW} \label{subsec:WMW-reduction}
In order to  leverage the tools that we will develop for  MW policies  and apply them to the more general WMW policies, we start with a reduction from WMW policies to a MW policy. 
This is accomplished through 
the following lemma, which shows that the queue lengths and fluid solutions under a WMW policy  are linear transformations of queue lengths and fluid solutions under a MW policy, in a transformed network.

\begin{lemma}[Reduction of WMW Dynamics to MW Dynamics]\label{lem:wmw2mw}
Consider a network $\net$ with action set $\S$ and a routing matrix $\RM$. Fix a weight vector $w$, an arrival function $A(\cdot)$, and an arrival rate vector $\lambda$. Let $Q(\cdot)$  be a queue length process  of $\net$ corresponding to the arrival $A(\cdot)$,  
under a $w$-WMW policy. 
Let $W=\diag(w)$, $\tilde{\lambda}=W^{1/2} \lambda$, 
and $\tilde{A}(t)=W^{1/2} A(t)$, for all $t\in\Z_+$. Let $\tilde{\net}$ be  a network with  action set $\tilde{\S}=W^{1/2} \S$ and  routing matrix $\tilde{\RM} = W^{1/2} \RM W^{-1/2}$.
Then, 
\begin{enumerate}[label={(\alph*)}, ref={\ref{lem:cp}(\alph*)}]
\item \label{lem:wmw2mw Q}
$\tilde{Q}(t)=W^{1/2} Q(t)$ is a queue length process of $\tilde{\net}$ corresponding to the arrival $\tilde{A}(\cdot)$, 
under a MW policy.
\item \label{lem:wmw2mw q}
$\tilde{q}(t)=W^{1/2} q(t)$ is a fluid solution of  $\tilde{\net}$ corresponding to   arrival rate $\tilde\lambda$ and unit weights (as in MW)
if and only if $q(t)$ is a  fluid solution of  $\net$ corresponding to  arrival rate $\lambda$ and WMW weights $w$.

\end{enumerate}
\end{lemma}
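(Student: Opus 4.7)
The strategy for both parts is to perform the change of variables $\tilde{Q}=W^{1/2}Q$, $\tilde{\mu}=W^{1/2}\mu$, $\tilde{y}=W^{1/2}y$ and verify, line by line, that the dynamical relations defining WMW in $\mathcal{N}$ transform into those defining MW in $\tilde{\mathcal{N}}$. The whole argument hinges on three elementary identities: (i) the commutation relation $W^{1/2}(\RM-I)=(\tilde{\RM}-I)W^{1/2}$, which is immediate from $\tilde{\RM}=W^{1/2}\RM W^{-1/2}$; (ii) the fact that $W^{1/2}$, being a positive diagonal matrix, commutes with the componentwise minimum, i.e. $W^{1/2}\min(\mu,Q)=\min(W^{1/2}\mu,W^{1/2}Q)$; and (iii) the quadratic-form identity $Q^{T}W(I-\RM)\nu=\tilde{Q}^{T}(I-\tilde{\RM})\tilde{\nu}$ for every $\nu\in\S$ and $\tilde{\nu}=W^{1/2}\nu$.

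For Part (a), I would start from the discrete-time recursion \eqref{eq:network evolution rule} for $Q(\cdot)$ and left-multiply it by $W^{1/2}$; using (i) and (ii) the right-hand side becomes $\tilde{Q}(t)+\tilde{A}(t)+(\tilde{\RM}-I)\min(\tilde{\mu}(t),\tilde{Q}(t))$, which is exactly the evolution rule of $\tilde{\mathcal{N}}$ when the service vector $\tilde{\mu}(t)=W^{1/2}\mu(t)$ is used. It remains to check that $\tilde{\mu}(t)$ is a MW choice for $\tilde{\mathcal{N}}$: by identity (iii), maximizing $\tilde{Q}^{T}(I-\tilde{\RM})\tilde{\nu}$ over $\tilde{\nu}\in\tilde{\S}=W^{1/2}\S$ is the same optimization problem as maximizing $Q^{T}W(I-\RM)\nu$ over $\nu\in\S$, so $\tilde{\mu}(t)\in\tilde{\S}(\tilde{Q}(t))$ iff $\mu(t)\in\S_{w}(Q(t))$. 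Non-negativity of $\tilde{Q}$ follows since $W^{1/2}\succ 0$.

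For Part (b), I would take the functions $s_{\mu}(\cdot)$, $y(\cdot)$ certifying that $q(\cdot)$ is a WMW fluid solution and define $\tilde{y}=W^{1/2}y$ and $\tilde{s}_{\tilde{\mu}}(t)=s_{\mu}(t)$ whenever $\tilde{\mu}=W^{1/2}\mu$. Left-multiplying \eqref{eq:fluid diff eq 1} by $W^{1/2}$ and using identity (i) immediately yields the fluid ODE for $\tilde{q}$ with rate $\tilde{\lambda}$. The remaining constraints \eqref{eq:fluid diff eq 3}--\eqref{eq:fluid diff eq 6} transform transparently: \eqref{eq:fluid diff eq 3} is invariant because the reindexing is a bijection $\mu\mapsto\tilde{\mu}$; \eqref{eq:fluid diff eq 4} and \eqref{eq:fluid diff eq 5} are preserved componentwise because $W^{1/2}$ is positive and diagonal, so each coordinate of the transformed inequalities is just the original one rescaled by $w_{i}^{1/2}>0$, and $\tilde{q}_{i}>0\iff q_{i}>0$; and \eqref{eq:fluid diff eq 6} translates correctly by identity (iii) exactly as in Part (a). Running the argument in reverse (multiplying by $W^{-1/2}$) establishes the converse direction in Part (b).

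I do not anticipate a genuine obstacle here; the lemma is essentially a bookkeeping calculation, and the only mild subtlety is confirming that the componentwise minimum interacts cleanly with the diagonal scaling, which is why the reduction would break if $W$ were not diagonal.
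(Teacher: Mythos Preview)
Your proposal is correct and follows essentially the same approach as the paper's proof: both establish the quadratic-form identity $\tilde{Q}^{T}(I-\tilde{\RM})\tilde{\mu}=Q^{T}W(I-\RM)\mu$ to show $\tilde{\S}(\tilde{Q})=W^{1/2}\S_{w}(Q)$, then left-multiply the evolution rule and fluid equations by $W^{1/2}$ and verify term by term. You are slightly more explicit than the paper in isolating identities (i) and (ii) (the commutation relation and the compatibility of the diagonal scaling with the componentwise minimum), but the substance is identical.
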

\begin{proof}
Given some $\mu\in \S$ and $Q\in\R_+^n$, 
we let 
$\tilde{\mu}=W^{1/2}\mu$ and $\tilde{Q} = W^{1/2}Q$. Then, 
\begin{eqnarray*}
\tilde{Q}^T\big(I-\tilde\RM\big) \tilde{\mu}  &=&  \big(W^{1/2} Q\big)^T  \big(I-W^{1/2}\RM W^{-1/2}\big) \,W^{1/2}\mu       \\
&=&  Q^T \,W^{1/2} W^{1/2}  \big(I-\RM \big) \,W^{-1/2} W^{1/2}\,\mu \\
&=& Q^T W \big(I-\RM\big) \mu.
\end{eqnarray*}
Therefore, $\tilde\mu\in \tilde{\S}$ is a maximizer of $\tilde{Q} \big(I-\tilde\RM\big) \tilde{\mu} $ if and only if $\mu\in\S$ is a maximizer of $Q^T W \big(I-\RM\big) \mu$, i.e., $\tilde{\S}(\tilde{Q}) = W^{1/2} \S_w(Q)$.

For Part (a), for any $t\in\Z_+$,
\begin{eqnarray*}
\tilde{Q}(t+1) &=& W^{1/2} Q(t+1)\\
&=& W^{1/2} Q(t)   \,+\, W^{1/2} A(t) \,+\, W^{1/2} \big(\RM-I\big) \min\big(  \mu(t),  Q(t)\big)\\
&=& \tilde{Q}(t)   \,+\, \tilde{A}(t) \,+\, W^{1/2} \big(\RM-I\big) W^{-1/2} \, W^{1/2}\min\big( \mu(t),Q(t)\big)\\
&=& \tilde{Q}(t)   \,+\, \tilde{A}(t) \,+\, \big(\tilde{\RM}-I\big) \min\big( \tilde{\mu}(t),\tilde{Q}(t)\big).
\end{eqnarray*}
Therefore, $\tilde{Q}(\cdot)$ satisfies the evolution rule \eqref{eq:network evolution rule} of $\tilde{\net}$, and is a queue length process corresponding to the arrival function $\tilde{A}(\cdot)$. Since $Q(t)$ evolves according to a $w$-WMW policy, we have $\mu(t)\in \S_w\big(Q(t)\big)$. As shown earlier, this implies that $\tilde{\mu}(t)\in\tilde{\S}(\tilde{Q})$, and thus $\tilde Q (t)$ indeed follows a MW policy.

For Part (b), consider a set of functions $y(\cdot)$ and $s_\mu(\cdot)$ for $\mu\in \S$, that together with $q(\cdot)$ satisfy \eqref{eq:fluid diff eq 1}--\eqref{eq:fluid diff eq 6}.
It is not difficult to see that all equations remain valid when $q(\cdot)$, $\lambda$, $\S$, $s_\mu(\cdot)$,  $y(\cdot)$, and $w$  are replaced with $\tilde{q}(\cdot)$, $\tilde{\lambda}$, $\tilde{\S}$, $s_{\tilde\mu}(\cdot)=s_{\mu}(\cdot)$, $W^{1/2}y(\cdot)$, and $\ones_n$, respectively. The reverse  direction is also true.
Therefore, $\tilde{q}(\cdot)$ is a fluid solution of $\tilde{\net}$ corresponding to the arrival rate vector  $\tilde{\lambda}$, with unit weights, if and only if ${q}(\cdot)$ is a fluid solution of ${\net}$ corresponding to the arrival rate vector  $\lambda$, with weight vector $w$.
\end{proof}


\subsection{FPCS Dynamical Systems}\label{subsec:fpcs}
In this subsection, we review some definitions and  results from \cite{AlTG19sensitivity}. 
A dynamical system is identified with a set-valued function $F:\R^n\to 2^{\R^n}$
and the associated differential inclusion $\dot{x}(t)\in F(x(t))$. We start with a formal definition, which allows for the presence of perturbations.

\begin{definition}[Trajectories of a Dynamical System]\label{def:integral pert traj} 
Consider a dynamical system $F:\R^n\to 2^{\R^n}$, and let $\Prt:\R\to\R^n$ be a right-continuous function, which we refer to as the \emph{perturbation}.
 Suppose that  $\ptraj({\cdot})$ and $\ff({\cdot})$ 
are measurable functions 
 of time that satisfy \begin{equation}\label{eq:def of integral pert 1}
\ptraj(t) = \int_0^t \ff(\tau)\,d\tau \,+\, \Prt(t),\qquad \forall\ t\ge 0,
\end{equation}
\begin{equation}\label{eq:def of integral pert 2}
\ff(t)\in F\big(\ptraj(t)\big),\qquad \forall\ t\ge 0.
\end{equation}
We then call  $\ptraj$ a 
\emph{perturbed trajectory}
 corresponding to $\Prt$.
  In the special case where 
$U$ is  identically zero, we also refer to $\ptraj$  as an \emph{unperturbed trajectory}.
\end{definition}

For a convex function $\Phi:\R^n\to\R$, we denote its subdifferential by $\partial \Phi(x)$. 
We say that $F$ is a \emph{subgradient dynamical system} if there exists a convex function $\Phi:\R^n\to\R$, 
such that for any $x\in\R^n$, $F(x)=-\partial\Phi(x)$.
Furthermore, if $\Phi$ is of the form
$$\Phi(x)=\max_{i}\big(-\mu_i^Tx+b_i\big),$$ 
for some $\mu_i\in\R^n$, $b_i\in\R$, and with $i$ ranging over a {\bf finite} set,
we say that $F$ is a \emph{Finitely Piecewise Constant  Subgradient} (FPCS, for short) system. Note that for such systems, $F(x)$ is always equal to the convex hull of the vectors $\mu_i$ that maximize $-\mu_i^T x+ b_i$.

FPCS systems admit a very special sensitivity bound.
\begin{theorem}[\cite{AlTG19sensitivity} Theorem 1] \label{th:sensitivity fpcs}
Consider an FPCS   system $F$.
Then, there exists a constant $\divconst$ such that for any unperturbed trajectory $x(\cdot)$, and for any 
perturbed trajectory $\ptraj( \cdot)$ with corresponding  perturbation $\Prt( \cdot)$ and the same initial conditions $\ptraj(0)=x(0)$,  we have
\begin{equation} \label{eq:bounded pert cont}
\Ltwo{\ptraj(t)- x(t)} \,\le \, 
\divconst\, \sup_{\tau\le t} \Ltwo{ \Prt(\tau)}, \qquad \forall\ t\in \R_+.
\end{equation}
Moreover, for any $\lambda\in\R^n$, the bound \eqref{eq:bounded pert cont} applies to the 
(necessarily FPCS)   system 
$F(\cdot)+\lambda$ 
 with the same constant  $\divconst$.
\end{theorem}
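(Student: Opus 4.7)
The plan is to combine two ingredients: the monotonicity of the subgradient operator $\partial\Phi$, which yields a general non-expansivity property for subgradient flows, and the finite polyhedral structure of $\Phi$, which is what upgrades the bound from a total-variation control of $\Prt$ to the sup-norm control asserted in \eqref{eq:bounded pert cont}. Throughout, I write $y(t) := \ptraj(t) - \Prt(t) = \int_0^t \ff(\tau)\,d\tau$, so that $\dot{y}(t) = \ff(t)$ but, crucially, $\ff(t) \in F(y(t)+\Prt(t))$ rather than $F(y(t))$; I will compare $y(t)$ to the unperturbed $x(t)$ (both starting at the origin by Definition~\ref{def:integral pert traj}), and then use $\Ltwo{\ptraj(t)-x(t)} \le \Ltwo{y(t)-x(t)} + \Ltwo{\Prt(t)}$.

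First, I would extract what monotonicity buys us. Writing $\eta(t) \in F(x(t))$ for the selection driving $x(\cdot)$, monotonicity of $\partial\Phi = -F$ gives
\begin{equation*}
\big\langle (y(t)+\Prt(t)) - x(t),\, \ff(t) - \eta(t)\big\rangle \,\le\, 0,
\end{equation*}
which rearranges to $\langle y(t) - x(t),\, \ff(t) - \eta(t)\rangle \le \Ltwo{\Prt(t)}\cdot\Ltwo{\ff(t)-\eta(t)}$. This already controls $\tfrac{d}{dt}\Ltwo{y-x}^2$, but only in terms of a time integral of $\Ltwo{\Prt(t)}$ times a velocity-difference, which is too weak by itself: for a generic convex $\Phi$ one cannot do better than a total-variation bound.

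Second, I would exploit the FPCS structure to localize this inequality. Since $\Phi(x) = \max_i(-\mu_i^Tx + b_i)$ with finitely many pieces, the space $\R^n$ partitions into finitely many polyhedral regions $R_i = \{x : \Phi(x) = -\mu_i^Tx + b_i\}$, and $F$ equals $\mu_i$ on the interior of $R_i$ and $\mathrm{Conv}\{\mu_i : i \in I(x)\}$ on a face indexed by $I(x)$. The key observation is that whenever $y(t)+\Prt(t)$ and $x(t)$ lie in the interior of the \emph{same} region, we have $\ff(t)=\eta(t)$, so $\Ltwo{y(t)-x(t)}$ is momentarily constant; the deviation can only grow when the two trajectories lie in different pieces. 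I would formalize this by writing $\tfrac{d^+}{dt}\Ltwo{y-x}^2$ as a sum over pairs of active index sets $(I(y+\Prt), I(x))$ and show that each such event contributes a bounded amount, uniformly.

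Third, following the strategy of \cite{AlTG19sensitivity}, the main technical step is to construct an auxiliary Lyapunov functional of the form $V(t) = \Ltwo{y(t)-x(t)}^2 + \Psi\big(y(t), x(t), \Prt(t)\big)$, where $\Psi$ is a bounded correction term tied to the active-set configuration, designed so that $\tfrac{d^+}{dt}V(t) \le 0$ whenever $\Ltwo{y(t)-x(t)}$ exceeds a fixed multiple of $\sup_{\tau\le t}\Ltwo{\Prt(\tau)}$. Finiteness of the set of possible $(I(y+\Prt), I(x))$ configurations is what makes such a $\Psi$ bounded by a constant depending only on $\{\mu_i\}$ and $\{R_i\}$; for general convex $\Phi$ no such finite bookkeeping is available, which is exactly why the sup-norm bound is specific to FPCS. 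This step is the main obstacle, and would probably occupy the bulk of a detailed proof; I would invoke the construction from \cite{AlTG19sensitivity} and verify the assumptions apply here.

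Finally, the translation invariance claim is almost immediate from the construction: replacing $F$ by $F+\lambda$ is equivalent to replacing $\Phi$ by $\Phi(\cdot) - \lambda^T\cdot$, which has an identical partition into regions $\{R_i\}$ (since the comparisons $-\mu_i^Tx + b_i \gtrless -\mu_j^Tx + b_j$ between affine pieces are unaffected by the common shift $-\lambda^Tx$) and merely shifts each $\mu_i$ uniformly by $\lambda$. Since the constant $\divconst$ depends only on the polyhedral geometry and the differences $\mu_i-\mu_j$, it is unchanged under this translation, giving the second assertion of the theorem.
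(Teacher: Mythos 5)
This statement is imported verbatim from \cite{AlTG19sensitivity} (Theorem~1), and the paper offers no proof beyond the citation, so your proposal --- which sketches the monotonicity and finite-polyhedral ingredients, correctly notes that monotonicity alone only yields a total-variation bound in $\Prt$, and then explicitly defers the crucial sup-norm upgrade to the construction in that reference --- takes essentially the same route as the paper. Your argument for the translation claim (the region partition and the differences $\mu_i-\mu_j$ are unchanged under $\Phi\mapsto\Phi-\lambda^T\cdot$, so the constant survives) is sound, though note the small slip that $y(0)=0$ while $x(0)=\ptraj(0)=\Prt(0)$ need not vanish, which only adds a harmless $\Ltwo{\Prt(0)}$ offset.
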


\subsection{Reduction of the MW Dynamics to an FPCS System}\label{subsec:net to fpcs}
Throughout this subsection, we restrict attention to a network operated under an (unweighted) MW policy.
In order to take advantage of Theorem \ref{th:sensitivity fpcs}, we  show that a  discrete-time network can also be represented as an associated (``induced'') FPCS dynamical system. 

\begin{definition}[Induced FPCS system] \label{def:induced net}
For a network with action set $\S$ and routing matrix $\RM$, the induced FPCS system is the subgradient dynamical system $F$ associated with the convex function
\begin{equation} \label{eq:def phi for fpcs}
\Phi(x)=\max_{\mu\in \S}\, \big((I-\RM)\mu\big)^Tx.
\end{equation}
In particular, $F(x)$ is the convex hull of the image of $\S(x)$ under the linear transformation $R-I$, where $\S(x)$ is the set of vectors $\mu\in\S$ that maximize $\big((I-R)\mu\big)^Tx$.
\end{definition}


We start with the observation that fluid solutions of a network are  trajectories of the induced FPCS system. 
Roughly speaking, this is because the service vectors chosen by the MW policy in \eqref{eq:maximal} are maximizers of the set of  linear functions $\big((I-\RM)\mu\big)^T Q $ over $\mu\in\S$, and the fluid solution moves along the negative of a convex combination of such maximizing service vectors. Thus,  fluid solutions move along the subgradients of  $\Phi$ (defined in \eqref{eq:def phi for fpcs}), and are therefore  trajectories of the induced FPCS system.

\begin{proposition}[Fluid Model Solutions as Trajectories  of the Induced FPCS System] \label{prop:simulation fluid}
Consider a network and its induced FPCS system $F$. Let $q(\cdot)$ be a fluid solution of the network corresponding to arrival rate $\lambda$.
Then, $q(\cdot)$ is an unperturbed trajectory of the dynamical system $\dot{q} \in F(q)+\lambda$.
Conversely, any  unperturbed trajectory $x(\cdot)$ of $F(\cdot)+\lambda$, with $x(0)\in\R_+^n$, is a fluid solution corresponding to $\lambda$.
\end{proposition}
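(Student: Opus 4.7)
The plan is to show that, on the positive orthant, the fluid-model relations \eqref{eq:fluid diff eq 1}--\eqref{eq:fluid diff eq 6} are equivalent to the differential inclusion $\dot q\in F(q)+\lambda$. By the definition of $\Phi$, $F(q)=(R-I)\conv(\S(q))$, so the crux is to show that the net drift $\sum_\mu s_\mu\mu-y$ in \eqref{eq:fluid diff eq 1} always lies in $\conv(\S(q))$.

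Forward direction. Let $\nu(t)=\sum_\mu s_\mu(t)\mu-y(t)$, so \eqref{eq:fluid diff eq 1} reads $\dot q=\lambda+(R-I)\nu$; the target is $\nu(t)\in\conv(\S(q(t)))$ almost everywhere. By \eqref{eq:fluid diff eq 3} and \eqref{eq:fluid diff eq 6}, $\sum_\mu s_\mu\mu$ is a convex combination of vectors in $\S(q(t))$. Let $J_0(t)=\{i:q_i(t)=0\}$; by \eqref{eq:fluid diff eq 5} the vector $y$ is supported on $J_0$, and by \eqref{eq:fluid diff eq 4} each $y_i\le\sum_\mu s_\mu\mu_i$. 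The decisive observation is that for any $\mu\in\S(q(t))$ and any $J'\subseteq J_0(t)$ the zeroed vector $\sigma_{-J'}(\mu)$ remains in $\S(q(t))$: Assumption \ref{assumption 1} places it in $\S$, and since $[(I-R^T)q]_i=-(R^Tq)_i\le 0$ for $i\in J_0$, zeroing those coordinates cannot decrease the objective $\mu^T(I-R^T)q$, so the zeroed vector is still a maximizer. Because the box $\prod_{i\in J_0}[0,\mu_i]$ is the convex hull of the $2^{|J_0|}$ vertices realized by the vectors $\sigma_{-J'}(\mu)$, one can (via a product-Bernoulli decomposition, say) express each atom $s_\mu\mu$ as a convex combination of $s_\mu\sigma_{-J'}(\mu)$ whose aggregate subtraction on $J_0$ equals any target $v^{(\mu)}\in\prod_i[0,s_\mu\mu_i]$. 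Distributing $y$ across atoms proportionally, $v^{(\mu)}_i=(s_\mu\mu_i/\sum_{\mu'}s_{\mu'}\mu'_i)\,y_i$ (and $0$ when the denominator vanishes, in which case $y_i=0$ too), one obtains $\sum_\mu v^{(\mu)}=y$ with $0\le v^{(\mu)}_i\le s_\mu\mu_i$, producing a valid convex representation of $\nu(t)$ in $\S(q(t))$.

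Converse direction. Given an unperturbed trajectory $x(\cdot)$ of $F(\cdot)+\lambda$ with $x(0)\in\R_+^n$, we have $\dot x=\lambda+(R-I)w$ with $w(t)\in\conv(\S(x(t)))$; a measurable selection writes $w=\sum_\mu s_\mu\mu$ supported on $\S(x(t))$, and taking $y\equiv 0$ then satisfies \eqref{eq:fluid diff eq 1}--\eqref{eq:fluid diff eq 6} provided $x(t)\in\R_+^n$ throughout. To confirm the latter (and the associated uniqueness alluded to in the footnote), I would invoke the zeroing observation in reverse: whenever $x_i(t)=0$, any $\mu\in\S(x(t))$ can be replaced by $\sigma_{-\{i\}}(\mu)\in\S(x(t))$, so a selection with $w_i(t)=0$ is available; then $\dot x_i(t)=\lambda_i+(Rw(t))_i\ge 0$, which prevents any downward crossing of zero. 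Combined with the non-expansiveness of subgradient flows for the convex $\Phi$, this forces the $\R_+^n$-confined trajectory to be unique, and therefore equal to the unique fluid solution.

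The main obstacle is the combinatorial decomposition in the forward direction, which must simultaneously split each atom into box corners and aggregate the slack $y$ across atoms so that the totals match exactly; the proportional allotment resolves this cleanly, but the whole construction relies on Assumption \ref{assumption 1}, which guarantees that zeroing individual coordinates of any schedule preserves feasibility. The converse direction's boundary confinement and the uniqueness claim rest on the same zeroing observation together with the standard non-expansiveness of subgradient flows, so once the forward decomposition is in place, the rest follows along familiar lines.
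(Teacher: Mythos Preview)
Your forward direction matches the paper's proof almost exactly: the paper also singles out $J=\{j:q_j(t)=0\}$, proves (its Claim~1) that $\sigma_{-K}(\mu)\in\S(q)$ for every $K\subseteq J$ via Assumption~1 and non-negativity of $R$, and then distributes $y$ across atoms by the same proportional rule $y^\mu_i=y_i\mu_i/\sum_\nu s_\nu\nu_i$, landing each $\mu-y^\mu$ in the box $D_J(\mu)\subseteq\conv(\S(q))$. So on that half you and the paper coincide.

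The converse direction has a genuine gap. You write that when $x_i(t)=0$, ``a selection with $w_i(t)=0$ is available; then $\dot x_i(t)=\lambda_i+(Rw(t))_i\ge 0$.'' But $x(\cdot)$ is a \emph{given} trajectory: its derivative at $t$ is already fixed, and the fact that $F\big(x(t)\big)+\lambda$ happens to contain some other vector with non-negative $i$-th component does not constrain $\dot x_i(t)$. Indeed, $F(x)+\lambda$ can contain vectors with strictly negative $i$-th component at a boundary point, so this step does not rule out a crossing of zero. You seem to sense the problem, since you then reach for non-expansiveness and ``the unique fluid solution''---but at that point your direct $y\equiv 0$ construction is no longer doing the work, and you have not actually exhibited an $\R_+^n$-confined trajectory (turning ``a good selection exists at the boundary'' into ``a confined trajectory exists'' would require a viability argument you do not supply).

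The paper's converse is different and cleaner: it takes a fluid solution $q(\cdot)$ with $q(0)=x(0)$ (existence cited from the literature), applies the already-proved forward direction to see that $q$ is itself an unperturbed trajectory of $F(\cdot)+\lambda$, and then invokes uniqueness of trajectories for subgradient systems (subdifferentials are maximal monotone maps) to conclude $x\equiv q$. This sidesteps entirely the need to argue directly that $x$ stays in $\R_+^n$ or to manufacture a fluid certificate by hand.
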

\begin{proof}
 For a vector $\mu\in\R_+^n$ and a set $J\subseteq\big\{ 1,\ldots,n\big\}$ of indices, we let 
\begin{equation} \label{eq:def DJ mu}
D_J(\mu)\triangleq \left\{ \xi\in\R_+^n \,\Big|\, \xi_i=\mu_i, \textrm{ for all } i\not\in J, \textrm{ and } 0\le\xi_j\le \mu_j, \textrm{ for all }j\in J   \right\}.
\end{equation}
Equivalently,
\begin{equation}\label{eq:Dj as convex hull}
D_J(\mu) \,  =\, \conv\Big({\big\{ \sigma_{-K}(\mu) \,\big|\,  K\subseteq J  \big\}  }\Big),
\end{equation}
where $\sigma_{-K}(\mu)$ is a vector whose $i$th entry is equal to the $i$th entry of $\mu$ if $i\not\in K$, and equal to zero if $i\in K$.
Recall that $\S(q)$ is defined as the set of all $\mu\in\S$ that maximize $\big((I-R)\mu\big)^T q$; cf.\
\eqref{eq:maximal}.

\begin{claim}\label{claim:fluid}
Fix a $q\in\R_+^n$ and a $\mu\in \S(q)$. Let $J=\big\{j \,\big|\, q_j=0\big\}$. Then, 
\begin{equation} \label{eq:claim statement D subset F}
\big(\RM-I\big) D_J(\mu)\subseteq F(q).
\end{equation}
\end{claim}
\begin{proof}[Proof of Claim]
\hide{We are assuming that $\mu\in \S(q)$, i.e., $\mu$ is a maximizer of the objective  $\big((I-R)\mu\big)^T q$, over all $\mu\in\S$. 
For any $j$ such that $q_j=0$, the contribution of the term $\mu_j$ to this objective
 comes only through the term $\big((-R)\mu\big)^T q$.
 Since $R$ and $q$ are both non-negative, if we reduce the value of $\mu_j$ to a smaller non-negative value,
 the objective can only improve. By repeating this argument  for all components $\mu_j$ with $j\in J$, we see that any vector $\mu\in D_J(\mu)$ is also a maximizer
 }
Note that for any $\mu\in\S$ and any set $K$ of indices, the vector $\sigma_{-K}(\mu)$ also belongs to $\S$, because of Assumption \ref{assumption 1}. We now fix some $q$ and the set $J$, as in the statement of the claim. For any $K\subseteq J$, we have
$\sigma_{-K}(\mu) \preceq \mu$. Furthermore, since  the entries of $\RM$ and $\mu$  are non-negative, we have  $q^T \RM \,\sigma_{-K}(\mu) \le q^T \RM\, \mu$. Therefore, 
\begin{equation*}
\begin{split}
q^T \big(I- \RM\big) \sigma_{-K}(\mu) \,& =\,  q^T \sigma_{-K}(\mu) -q^T \RM\, \sigma_{-K}(\mu)\\
&=\,  q^T \mu -q^T \RM \,\sigma_{-K}(\mu)\\
&\ge\,  q^T \mu -q^T \RM \, \mu\\
&=\, q^T \big(I-\RM\big) \mu,
\end{split}
\end{equation*}
where the second equality holds because $q_j=0$  whenever the $j$th entry of $\sigma_{-K}(\mu)$ is not equal to $\mu_j$.  
We have therefore established that if $ \mu\in\S(q)$, then $\sigma_{-K}(\mu)\in \S(q)$. 
Since $F(q)$ is the image under $\RM-I$ of the convex hull of $\S(q)$, we obtain 
\begin{equation} \label{eq:I-R sigma k mu in f}
\big(\RM-I\big) \sigma_{-K}(\mu) \in F(q).
\end{equation}
Therefore, 
\begin{eqnarray*}
\big(\RM-I\big) D_J(\mu)
&=&  \big(\RM-I\big)\, \conv\Big({\big\{ \sigma_{-K}(\mu) \,\big|\,  K\subseteq J  \big\}  }\Big)\\
&=& \conv\Big({\big\{ \big(\RM-I\big) \sigma_{-K}(\mu) \,\big|\,  K\subseteq J  \big\}  }\Big) \\
&\subseteq& \, F(q),
\end{eqnarray*}
where the first equality is due to  \eqref{eq:Dj as convex hull}, and the last relation is due to \eqref{eq:I-R sigma k mu in f} and the convexity of $F(q)$. This establishes the validity of the claim
\eqref{eq:claim statement D subset F}.
\end{proof}
\medskip

We now return to the proof of the proposition.
Consider a function $y(\cdot)$ and a set of functions $s_\mu(\cdot)$, $\mu\in \S$, that together with $q(\cdot)$ satisfy the fluid model equations \eqref{eq:fluid diff eq 1}--\eqref{eq:fluid diff eq 6}. 
We will show that  $q(\cdot)$ satisfies the differential inclusion $\dot{q}\in F(q)+\lambda$.
Fix some $t\ge 0$ and let $y=y(t)$.
For any $\mu\in\S$, let $s_\mu=s_\mu(t)$ and consider an $n$-dimensional vector $y^\mu$ with entries 
\begin{equation*}
y^\mu_i \,=\, \begin{cases} {y_i\mu_i }/\,{ \sum_{\nu\in \S}s_\nu \nu_i},\, & \textrm{if }\sum_{\nu\in \S}s_\nu \nu_i\ne 0,\\ 0,& \textrm{otherwise}.\end{cases}
\end{equation*}
It follows that for $i=1,\ldots,n$,
\begin{equation}
\sum_{\mu\in \S}s_\mu y^\mu_i \,=\, \sum_{\mu\in \S}s_\mu \frac{y_i\mu_i}{\sum_{\nu\in \S}s_\nu \nu_i} \, =\, y_i.
\end{equation}
Then,
\begin{equation} \label{eq:y as a convex commbination of y mu}
\sum_{\mu\in \S}s_\mu y^\mu \,=\, y.
\end{equation}
On the other hand, for any $\mu\in\S$ and for $i=1,\ldots,n$, either $y_i^{\mu}=0$ or \eqref{eq:fluid diff eq 4} implies that $y^\mu_i = \mu_i \big(y_i / \sum_{\nu\in \S}s_\nu \nu_i\big)\le \mu_i$. 
Moreover, for any $\mu\in\S$ and any $i\le n$, if  $q_i(t)>0$, then from \eqref{eq:fluid diff eq 5}, $y^\mu_i = y_i \big(\mu_i / \sum_{\nu\in \S}s_\nu \nu_i\big) =0$. 
Letting $J=\big\{j \,\big|\, q_j(t)=0\big\}$, it then follows from the definition of $D_J(\mu)$ that for any $\mu\in\S$, $\mu- y^\mu\in  D_J(\mu)$.
Claim \ref{claim:fluid} then implies that $\big(\RM-I\big) (\mu-y^\mu) \in F\big(q(t)\big)$.  
Therefore, in light of \eqref{eq:fluid diff eq 3} and the convexity of $ F\big(q(t)\big)$, we have
\begin{equation}
\sum_{\mu\in \S}s_\mu \big(\RM-I\big) (\mu-y^\mu) \, \in\,  F\big(q(t)\big).
\end{equation}
Finally, from \eqref{eq:fluid diff eq 1},
\begin{equation}
\begin{split}
\dot{q}(t)  
&\,= \, \lambda + \big(\RM-I\big) \left(\sum_{\mu\in\S}s_{\mu}\mu - {y}\right)\\
&\,= \, \lambda + \big(\RM-I\big) \left(\sum_{\mu\in\S}s_{\mu}\mu - \sum_{\mu\in \S}s_\mu y^\mu\right)\\
&\, =\, \lambda \,+\, \sum_{\mu\in \S} s_\mu \big(\RM-I\big) (\mu-y^\mu)\\
&\,\in\, F\big(q(t)\big) \,+\, \lambda,
\end{split}
\end{equation}
where the second equality is due to \eqref{eq:y as a convex commbination of y mu}.

\hide{
\oli{****DELETE FROM HERE*******}

Fix some $t\ge 0$ and let $\theta = \sum_{\mu\in \S} s_\mu(t) \mu - y(t)$.
Then,  \eqref{eq:fluid diff eq 4} implies that $\theta\succeq 0$.
Let  $J=\big\{j \,\big|\, q_j(t)=0\big\}$. 
It follows from \eqref{eq:fluid diff eq 5} that $\theta \in D_J(\mu)$.
\red{[This argument needs work, has to be rewritten. The issue is that $\theta$ is  defined through a sum over all $\mu$, hence you cannot calim that it has a property $\theta \in D_J(\mu)$ for a specific $\mu$ (which particular $\mu?)$]}
Claim \ref{claim:fluid} then implies that $(I-\RM) \theta \in F\big(q(t)\big)$.  
Finally, from \eqref{eq:fluid diff eq 1} we have
\begin{equation}
\dot{q}(t)  
\, =\, \lambda \,+\,\big(\RM-I\big) \theta 
\,\in\, F\big(q(t)\big) \,+\, \lambda.
\end{equation}
Therefore, $q(\cdot)$ is an unperturbed trajectory of $F(\cdot)+\lambda$.

\oli{****UP TO HERE*******}
}

\hide{****DELETE FROM HERE*******

Fix some $t\ge 0$ and let $J=\big\{j \,\big|\, q_j(t)=0\big\}$. 
 vector $\mu\in\S\big(q(t)\big)$, we define a vector $\tilde\mu\in\R^n$ with entries
\begin{equation}\label{eq:def tilde mu}
\tilde{\mu}_i = \left(1-\frac{y_i(t)}{\sum_{\pi\in\S}s_\pi(t)\mu_i}  \right), \quad i=1,\ldots,n.
\end{equation}
\comm{I cannot follow because the same symbol $\mu$ is used for the fixed vector of interest, but also as a free variable. The way it is written, the denominator
$\sum_{\pi\in\S}s_\pi(t)\mu_i$ should be equal to just $\mu_i$. Also, in the next equation, we should have a term $\mu_i\cdot\tilde\mu_i$. So, something is wrong.} \oli{Please see the purple fix above.}
For a $\mu\in \S$ if  $s_\mu(t)>0$, then \eqref{eq:fluid diff eq 6} implies that $\mu\in \S\big(q(t)\big)$. Therefore, for any $i\le n$,
\begin{equation}\label{eq:sum s - y = sum mu tilde}
\sum_{\mu\in \S} s_\mu(t) \mu_i - y_i(t) \,=\,  \sum_{\mu\in \S(q(t))} s_\mu(t) \mu_i -y_i(t) \,=\,  \sum_{\mu\in \S(q(t))} s_\mu(t) \tilde\mu_i.
\end{equation}

On the other hand, it follows from \eqref{eq:fluid diff eq 4} that $\tilde\mu \in \R_+^n$, and from  \eqref{eq:fluid diff eq 5} that $\tilde{\mu}\in D_J(\mu)$. Claim \ref{claim:fluid} then implies that $\big(\RM-I\big) \tilde{\mu}\in F\big(q(t)\big)$.  
Moreover, \eqref{eq:fluid diff eq 3} and \eqref{eq:fluid diff eq 6} certify that $\sum_{\mu\in \S(q(t))} s_\mu=1$. 
Therefore, it follows from the convexity of $F\big(q(t)\big)$ that
\begin{equation} \label{eq:sum mu in F}
(I-\RM) \sum_{\mu\in \S(q(t))} s_\mu(t) \tilde\mu
\,\in\,   \conv\Big({\Big\{ \big(I-\RM  \big) \tilde{\mu} \,\big|\,  \mu\in\S\big(q(t)\big) \Big\}  }\Big) 
\,\subseteq\, F\big(q(t)\big),
\end{equation}
\comm{Rewrite the above. Avoid the definition \eqref{eq:def tilde mu} (which could also involve a 0/0 issue). Use notation $\theta(\mu)$ instead of $\tilde \mu$, to make it clear that $\tilde \mu$ depends on $\mu$ and so that the meaning of the various summations is clear. Just define $\theta(\mu)$ by requiring \eqref{eq:sum s - y = sum mu tilde} to hold.
Then say that such $\theta(\mu)$ are guaranteed to exist, are non-negative, are smaller than the $\mu$, etc.} \oli{Please see the purple fix above.}

Finally, from \eqref{eq:fluid diff eq 1} we have
\begin{eqnarray*}
\dot{q}(t) & = & \lambda + \big(\RM-I\big) \left(\sum_{\mu\in\S}s_{\mu}(t)\mu - {y}(t)\right)\\
& =&  \lambda \,+\,\big(\RM-I\big) \sum_{\mu\in \S(q(t))} s_\mu(t) \tilde\mu \\
& \in & F\big(q(t)\big) \,+\, \lambda ,
\end{eqnarray*}
where the second  equality is due to \eqref{eq:sum s - y = sum mu tilde}.
Therefore, $q(\cdot)$ is an unperturbed trajectory of $F(\cdot)+\lambda$.

\pur{*****DELETE UP TO HERE******}}

We now prove the converse
 part of the proposition, that every unperturbed  trajectory $x(\cdot)$ of $F(\cdot)+\lambda$, initialized in the positive orthant, is a fluid solution. 
Consider  a fluid solution $q(\cdot)$ corresponding to the arrival rate $\lambda$, initialized with $q(0)=x(0)$ 
(for proofs of existence, see Appendix~A of \cite{DaiL05} and Lemma~9 of \cite{MarkMT18}).

It then follows from the first part of this proposition that $q(\cdot) $ is also an unperturbed trajectory of $F(\cdot)+\lambda$. On the other hand, 
it is shown in \cite{Rock70} that any subgradient dynamical system is a maximal monotone map\footnote{A set-valued function $F:\R^n\to 2^{\R^n}$ is a monotone map if for any $x_1,x_2\in\R^n$ and any $v_1\in F(x_1)$ and $v_2\in F(x_2)$, we have $\big(v_1-v_2\big)^T\big(x_1-x_2\big)\le 0$. 
It is called a maximal monotone map if it is monotone, and for any monotone map $\tilde{F}$, that satisfies $F(x)\subseteq \tilde{F}(x)$ for all $x$, we have $\tilde{F}=F$.}. 
Then,  Corollary 4.6 of \cite{Stew11} implies that there is a unique unperturbed trajectory with initial point $x(0)$. 
Therefore, $x(t)=q(t)$, for all $t\ge 0$, and the desired result follows. 
\end{proof}

Proposition \ref{prop:simulation fluid} has established that  a fluid solution is a trajectory of the  induced FPCS system. We now show, in the next proposition, 
that the actual discrete-time queue length process is close to  
a perturbed trajectory of the induced FPCS system.
Note that even if the discrete-time system has completely deterministic and steady arrivals (no stochastic fluctuations) it can still ``chatter'' around the boundary separating two regions with different drifts. The idea behind the proof is that this chattering can also be viewed as a perturbation of a straight trajectory.
This is conceptually straightforward, but some of the details 
of the behavior in the vicinity of such boundaries are tedious.

\begin{proposition}[Queue Length Processes as Trajectories  of the Induced FPCS System] \label{prop:simulation}
For any network, there exists a constant $\beta$ that satisfies the following statement.
Fix a $\lambda\in\R_+^n$, and let $A(\cdot)$ be an arrival function and $Q(\cdot)$  be a corresponding queue length process. Then, there exists a right-continuous (perturbation) function $\Prt(\cdot):\R_+\to\R^n$, satisfying 
\begin{equation} \label{eq: U bounded cond}
\sup_{\tau\le t} \Ltwo{\Prt(\tau)}\, \le\, \lVert  \lambda \rVert + \beta 
+ \sup_{\tau\le t} \Ltwo{ \sum_{\substack{k < \tau \\ k\in\Z_+ }} \big(A(k) -\lambda   \big) },\qquad \forall\ t\in \R_+,
\end{equation}
and a corresponding perturbed trajectory $\ptraj(\cdot)$ of $F(\cdot)+\lambda$ such that 
\begin{equation} \label{eq: X=Q cond}
\big\| \ptraj(k) - Q(k) \big\| \leq \beta,\quad \forall k\in\Z_+.
\end{equation}
\end{proposition}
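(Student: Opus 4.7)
The plan is to construct $\ptraj(\cdot)$, $\Prt(\cdot)$, and a measurable selection $\ff(\cdot) \in F(\ptraj(\cdot)) + \lambda$ inductively over unit intervals $[k,k+1]$, maintaining the invariant $\Ltwo{\ptraj(k) - Q(k)} \le \beta$ at every integer $k$ while letting $\Prt$ accumulate essentially the arrival fluctuation $\sum_{j<k}(A(j)-\lambda)$. I initialize with $\ptraj(0) = Q(0)$, so $\Prt(0) = Q(0)$, with the $\|\lambda\| + \beta$ slack in \eqref{eq: U bounded cond} accommodating the initial condition. On each interval $[k,k+1)$, the natural selection is
\[
\ff(\tau) \,=\, \lambda \,+\, (\RM - I)\bigl(\mu(k) - y(\tau)\bigr),
\]
where $\mu(k) \in \S(Q(k))$ is the MW service vector chosen at time $k$, and $y(\tau) \in \R_+^n$ is a Skorohod-type reflection satisfying $y_i(\tau) \in [0, \mu_i(k)]$ with $y_i(\tau) = 0$ unless $\ptraj_i(\tau) = 0$. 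By Claim~\ref{claim:fluid} from the proof of Proposition~\ref{prop:simulation fluid}, together with Assumption~\ref{assumption 1}, this selection lies in $F(\ptraj(\tau)) + \lambda$ as long as $\mu(k) \in \S(\ptraj(\tau))$, yielding a valid perturbed-trajectory segment.

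At $\tau = k+1$, I apply a jump of size $\Delta\Prt_k := A(k) - \lambda$ to $\Prt$, absorbing only the arrival fluctuation. Comparing the discrete update $Q(k+1) - Q(k) = A(k) + (\RM - I)(\mu(k) - [\mu(k) - Q(k)]^+)$ with the integrated continuous drift yields
\[
\ptraj(k+1) - Q(k+1) \,=\, \bigl(\ptraj(k) - Q(k)\bigr) \,+\, (\RM - I)\Bigl([\mu(k) - Q(k)]^+ \,-\, \int_k^{k+1} y(\tau)\,d\tau\Bigr).
\]
Both the discrete idleness $[\mu(k) - Q(k)]^+$ and the integrated continuous reflection $\int y\,d\tau$ are componentwise bounded by $\mu_i(k) \le \mu_{\max}$, so the per-interval discrepancy is bounded by a network-dependent constant. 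Moreover, each reflection episode drives both $\ptraj_i$ and $Q_i$ toward the same boundary $\{x_i = 0\}$, giving a self-correcting mechanism that prevents accumulation across intervals. This will establish $\Ltwo{\ptraj(k) - Q(k)} \le \beta$ with $\beta$ depending only on $|\S|$, $\|\RM\|$, and $\mu_{\max}$. Summing the integer-time jumps then yields $\Prt(t) - \Prt(0) = \sum_{k<t}(A(k) - \lambda)$ plus bounded corrections, matching \eqref{eq: U bounded cond}.

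The main obstacle is the subtle mismatch between the two reflection mechanisms — the discrete one is triggered when $Q_i(k) < \mu_i(k)$, while the continuous one requires $\ptraj_i(\tau) = 0$ exactly — together with the possibility that $\ptraj(\tau)$ exits the region where $\mu(k)$ is a maximizer of $\S(\cdot)$ partway through $[k, k+1)$. At such a transition, the nominal selection $\ff$ must be amended by switching to some $\mu' \in \S(\ptraj(\tau))$, and the resulting change in the integrated drift becomes an additional correction contributing to $\ptraj(k+1) - Q(k+1)$. Since $\S(\cdot)$ partitions $\R^n$ into finitely many polyhedral cones determined by the finite set $\S$, the number of region transitions per interval is bounded by $|\S|$, and each such transition contributes only a network-bounded correction. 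The core technical task is to track all of these boundary-reflection mismatches and region-crossing corrections carefully, showing that they combine to an error bounded by $\beta$ per integer step without amplifying across steps — a conclusion made possible by the non-expansive character of the underlying FPCS dynamics.
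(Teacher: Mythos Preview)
Your framework is reasonable, but the proposal has a genuine gap precisely where you flag the ``main obstacle,'' and the paper resolves it by a different idea that you are missing.

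The paper does \emph{not} let $\ptraj$ evolve continuously on $[k,k+1)$. Instead, it proves (Claim~2 in the proof, via a Hoffman-type bound, Lemma~\ref{lem:known lemma!}) that for every $Q(k)$ and every $\mu(k)\in\S(Q(k))$ there is a point $y(k)\in\R_+^n$ with $\|y(k)-Q(k)\|\le\kappa$ such that $y_j(k)=0$ whenever $Q_j(k)\le\mu_j(k)$ \emph{and} $\mu(k)\in\S(y(k))$. By Claim~\ref{claim:fluid}, the actually applied service $(\RM-I)\min(\mu(k),Q(k))$ then lies in $F(y(k))$. The paper sets $\ptraj(t)\equiv y(\lfloor t\rfloor)$, piecewise constant, and chooses $U$ so that the drift and $\dot U$ cancel on $(k,k+1)$. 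This eliminates both of your obstacles at once: there are no region transitions (since $\ptraj$ never moves within an interval), and the discrete/continuous reflection mismatch is absorbed entirely into the bounded displacement $y(k)-Q(k)$, which is controlled \emph{uniformly} by the Hoffman constant, not inductively.

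Your proposed resolution of the obstacles does not hold up. The assertion that region transitions are bounded by $|\S|$ per unit interval is unjustified: a perturbed trajectory of an FPCS system can chatter across a face many times, and you are simultaneously constructing the perturbation, so there is no a~priori bound. More seriously, your recursion
\[
\ptraj(k+1)-Q(k+1)=\bigl(\ptraj(k)-Q(k)\bigr)+(\RM-I)\Bigl([\mu(k)-Q(k)]^{+}-\textstyle\int_k^{k+1}y(\tau)\,d\tau\Bigr)
\]
has a per-step increment that is bounded, but you have given no mechanism preventing these increments from accumulating linearly in $k$. The appeal to the ``non-expansive character of the underlying FPCS dynamics'' is circular: non-expansiveness compares two trajectories of the \emph{same} system, whereas here you are comparing a perturbed FPCS trajectory with the discrete MW recursion, which is exactly the content of the proposition you are trying to prove. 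The paper's Hoffman-bound trick is what breaks this circularity, by resetting the error to at most $\kappa$ at every integer time rather than propagating it.
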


It is possible to strengthen Proposition \ref{prop:simulation} and ensure that we actually have  $X(k)=Q(k)$ for every  $k\in\Z_+$, thus strengthening \eqref{eq: X=Q cond}. However, this stronger result is not needed for our future development, and would require a much more tedious construction of $U(\cdot)$. 
\hide{\comm{Question: Is it guaranteed that $X$ is always non-negative? If yes, it may be worth pointing out. If not, again worth pointing out here.} \oli{Yes. According to \eqref{eq:x-y},
$X(t) =  y(\tf)$; and according to Claim \ref{claim:simul discrete}, $y(t)\in\R_n^+$.}}
It is also worth pointing out here that the perturbed trajectory $X(\cdot)$  in our construction is always non-negative.

Before presenting the detailed proof, we provide some intuition on the issues that arise.
Recall the network evolution rule $Q(k+1)=Q(k)+A({k})+ \big(\RM-I\big) \min\big(\mu(k),Q(k)\big)$ in \eqref{eq:network evolution rule}.
Consider a time $k$ 
at which there is a unique maximizer $\mu(k)$, so that $F(Q(k))$ is a singleton, consisting of the single element $(R-I)\mu(k)$. Suppose furthermore  that
$ \mu(k)\preceq Q(k)$. 
 In this case,  \eqref{eq:network evolution rule} becomes $Q(k+1)=Q(k)+A({k})+ \big(\RM-I\big)\mu(k)$.
Let 
\begin{equation*}
U(t)\,=\,
\begin{cases}
  - (t-k)\big[\big(\RM-I\big)\mu(k) + \lambda\big],  &\textrm{for}\quad t \in (k,k+1),\\
  A(k)-\lambda,& \textrm{for}\quad   t=k+1.\\
\end{cases}
\end{equation*}
In the interval $t\in(k,k+1)$, we  have  $\dot{U}(t)\,= \, -\big[\big(\RM-I\big)\mu(k) + \lambda\big] \,\in\, -F\big(Q(k)\big)-\lambda$. 
Suppose now that
$X(k)=Q(k)$. From the dynamics of $X(\cdot)$ (cf.\ Definition \ref{def:integral pert traj}), we have 
$$\dot{\ptraj}(t)=F\big(X(t)\big)+\lambda + \dot{U}(t)
=F\big(Q(k)\big)+\lambda + \dot{U}(t)
\,=\,
0,$$ 
and the perturbed trajectory remains constant: 
$X(t)=Q(k)$, for $t\in(k,k+1)$. 
Finally, a discontinuity in $U(\cdot)$, at $t=k+1$, forces $X(t)$ to jump to the new value $Q(k+1)$. 
Thus, in this example, we have a perturbed trajectory that agrees with the queue process at integer times.

The above argument will however  fail when $\min\big(\mu(k),Q(k)\big) \ne \mu(k)$, because the received service in time slot $k$, i.e., $\big(\RM-I\big)\min\big(\mu(k),Q(k)\big)$, 
need not belong to $F\big(Q(k)\big)$. 
To circumvent this problem, we find a nearby point $y(k)$ such that $\big(\RM-I\big)\min\big(\mu(k),Q(k)\big) \, \in\, F\big(y(k)\big)$. 
We then construct $\Prt(\cdot)$ 
so that it forces $\ptraj(t)$ to jump to $y(k)$ at time $t=k$, and stay there for $t\in [k,k+1)$.

\begin{proof}
We now provide the detailed proof of Proposition~\ref{prop:simulation}. 
We will be making use of the following known result:
\begin{lemma}[\cite{MannT05} Lemma 5.1]\label{lem:known lemma!}
Given a finite collection  of half-spaces  $H_i\subset \R^n$ with non-empty intersection, there exists a constant $c>0$ such that  
\begin{equation}
 d\Big(x\,,\,\bigcap_i H_i\Big)
 \le c  \cdot
 \max_i d\big(x\,,\,H_i\big)
 ,\qquad
{\forall \ x\in \R^n.}
\end{equation}
\end{lemma}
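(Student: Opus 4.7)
This is Hoffman's classical error bound for polyhedra. The plan is to derive it by combining the KKT conditions for projection onto a polyhedron with a Carath\'eodory-type reduction to finitely many linearly independent subsystems. First, I normalize each half-space as $H_i = \{x \in \R^n : a_i^T x \le b_i\}$ with $\|a_i\| = 1$, so that $d(x, H_i) = (a_i^T x - b_i)_+$. Setting $P = \bigcap_i H_i$, which is non-empty by hypothesis, the task reduces to exhibiting a constant $c$ with $d(x, P) \le c \max_i (a_i^T x - b_i)_+$ for every $x$. The case $x \in P$ is trivial; otherwise, let $y = \pi_P(x)$ be the (unique) Euclidean projection of $x$ onto the closed convex polyhedron $P$.

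Next, I apply the KKT conditions for this projection: there exist multipliers $\lambda_i \ge 0$, vanishing for all $i$ with $a_i^T y < b_i$, such that $x - y = \sum_i \lambda_i a_i$. By Carath\'eodory's theorem applied to this conic combination, I may prune to an index set $J \subseteq \{i : a_i^T y = b_i\}$ with $|J| \le n$ and $\{a_i : i \in J\}$ linearly independent, redistributing the multipliers so that $\lambda_i \ge 0$ for $i \in J$ and $\lambda_i = 0$ for $i \notin J$. Letting $A_J$ denote the matrix whose columns are $\{a_i\}_{i \in J}$, we then have both $x - y = A_J \lambda_J$ and $A_J^T(x - y) = A_J^T x - b_J$, where the latter uses the identity $a_i^T y = b_i$ for $i \in J$. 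Combining these gives the central identity
\begin{equation*}
\|x - y\|^2 \,=\, \lambda_J^T A_J^T (x - y) \,=\, \sum_{i \in J} \lambda_i \, (a_i^T x - b_i).
\end{equation*}

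The key observation is that the non-negativity of each $\lambda_i$ gives $\lambda_i(a_i^T x - b_i) \le \lambda_i (a_i^T x - b_i)_+$, so Cauchy--Schwarz yields $\|x - y\|^2 \le \sqrt{|J|}\, \|\lambda_J\| \cdot \max_i (a_i^T x - b_i)_+$. Since the columns of $A_J$ are linearly independent, $\sigma_{\min}(A_J) > 0$ and $\|x - y\| = \|A_J \lambda_J\| \ge \sigma_{\min}(A_J) \|\lambda_J\|$, hence after dividing by $\|x - y\|$ we obtain
\begin{equation*}
\|x - y\| \,\le\, \frac{\sqrt{|J|}}{\sigma_{\min}(A_J)}\, \max_{i}\, d(x, H_i).
\end{equation*}
There are only finitely many index subsets $J \subseteq \{1,\ldots,m\}$ of size at most $n$ with $\{a_i\}_{i \in J}$ linearly independent, so the uniform constant $c = \max_J \sqrt{|J|}/\sigma_{\min}(A_J)$ is finite and works for every $x$.

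The main obstacle is the Carath\'eodory reduction step, where the active set must be pruned to a linearly independent subfamily while preserving both the non-negativity of the multipliers and the representation $x - y = \sum_{i \in J} \lambda_i a_i$; this is a standard application of Carath\'eodory's theorem to convex cones, but deserves explicit verification before the uniform finiteness argument can be invoked.
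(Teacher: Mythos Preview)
The paper does not prove this lemma; it simply quotes it as a known result from \cite{MannT05} (Lemma~5.1 there), so there is no ``paper's own proof'' to compare against.

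Your argument is correct and is a standard derivation of Hoffman's error bound. The KKT representation $x-y=\sum_i\lambda_i a_i$ with $\lambda_i\ge 0$ supported on the active set, the Carath\'eodory reduction to a linearly independent subfamily $J$, the identity $\|x-y\|^2=\sum_{i\in J}\lambda_i(a_i^Tx-b_i)$ obtained from $a_i^Ty=b_i$ on $J$, and the bound $\|\lambda_J\|\le\|x-y\|/\sigma_{\min}(A_J)$ are all valid. One minor quibble: the step you label ``Cauchy--Schwarz'' is really H\"older ($\ell_1$--$\ell_\infty$) followed by $\|\lambda_J\|_1\le\sqrt{|J|}\,\|\lambda_J\|_2$; the conclusion is unaffected. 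The finiteness of the collection of linearly independent subfamilies then gives the uniform constant, exactly as you wrote.
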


We begin with a claim.
\begin{claim} \label{claim:simul discrete}
There exists a constant $\kappa$ that satisfies the following. Consider any $x\in\R_+^n$ and any $\mu\in\S(x)$. Let $J = \big\{j\,\big|\, x_j\le \mu_j\big\}$.
Then, there exists a $y\in\R_+^n$, such that $\Ltwo{y-x}\le \kappa$ and 
\begin{equation}\label{eq:DJ in Fy}
\big(\RM-I  \big) D_J(\mu)\subseteq F(y),
\end{equation}
where $D_J(\mu)$ is defined in \eqref{eq:def DJ mu}.
\end{claim}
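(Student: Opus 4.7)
The strategy is to reduce \eqref{eq:DJ in Fy} to the special case already treated inside the proof of Proposition~\ref{prop:simulation fluid}, where the indexing set is the exact zero-set of a non-negative vector. Concretely, it suffices to exhibit $y\in\R_+^n$ close to $x$ satisfying (i) $y_j=0$ for all $j\in J$ and (ii) $\mu\in\S(y)$: if both hold, then the set $J'=\{j:y_j=0\}$ contains $J$, the sub-claim inside Proposition~\ref{prop:simulation fluid} yields $(\RM-I)D_{J'}(\mu)\subseteq F(y)$, and since $J\subseteq J'$ implies $D_J(\mu)\subseteq D_{J'}(\mu)$ by direct inspection of \eqref{eq:def DJ mu}, the desired inclusion \eqref{eq:DJ in Fy} follows.

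The set of admissible $y$ can be written as a finite intersection of half-spaces,
\[
Y_{\mu,J} \,:=\, \R_+^n \,\cap\, \bigcap_{j\in J}\{y:y_j\le 0\} \,\cap\, \bigcap_{\nu\in\S}\{y:y^T(I-\RM)(\mu-\nu)\ge 0\},
\]
where the non-negativity constraints combined with $y_j\le 0$ for $j\in J$ enforce $y_j=0$ on $J$, and the last intersection is precisely the condition $\mu\in\S(y)$. This intersection is non-empty since $0\in Y_{\mu,J}$ trivially, so I would invoke Lemma~\ref{lem:known lemma!} to obtain a constant $c_{\mu,J}>0$ with $d(x,Y_{\mu,J})\le c_{\mu,J}\max_H d(x,H)$, the maximum being over the defining half-spaces.

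For our particular $x$, every such distance is controlled: the non-negativity distances vanish since $x\in\R_+^n$; the maximality distances vanish since $\mu\in\S(x)$; and for $j\in J$, $d(x,\{y_j\le 0\})=x_j\le\mu_j\le M$, where $M:=\max_{\nu\in\S,\,i}\nu_i$ is a network-dependent constant. Hence a closest point $y\in Y_{\mu,J}$ satisfies $\Ltwo{y-x}\le c_{\mu,J}M$. Since $\S$ is finite and $J$ ranges over subsets of $\{1,\ldots,n\}$, there are only finitely many pairs $(\mu,J)$, and setting $\kappa:=M\cdot\max_{(\mu,J)}c_{\mu,J}$ produces the uniform bound $\Ltwo{y-x}\le\kappa$ claimed. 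The main obstacle is guaranteeing non-emptiness of $Y_{\mu,J}$ so that Lemma~\ref{lem:known lemma!} applies with a finite constant; this is resolved immediately by $0\in Y_{\mu,J}$, after which the remainder of the argument is a routine case analysis of distances to the individual half-spaces.
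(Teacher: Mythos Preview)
Your proposal is correct and follows essentially the same approach as the paper: define the target set $Y_{\mu,J}$ (the paper calls it $B$) as a finite intersection of half-spaces encoding non-negativity, vanishing on $J$, and maximality of $\mu$; observe that the origin lies in it; apply Lemma~\ref{lem:known lemma!} together with the bound $x_j\le\mu_j\le M$ on the only nonzero distances; take the maximum constant over the finitely many pairs $(\mu,J)$; and finally invoke Claim~\ref{claim:fluid} with $J'=\{j:y_j=0\}\supseteq J$. The paper's argument differs only in notation and in the order of exposition.
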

\begin{proof}[Proof of Claim]
We will leverage Lemma \ref{lem:known lemma!} to find $y$, and then use Claim \ref{claim:fluid} to prove \eqref{eq:DJ in Fy}.
To every $\mu\in\S$, we associate an effective region $\region_\mu$:
\begin{equation} \label{eq:effective region}
\region_\mu = \left\{ z \,\big| \, \mu\in \S(z)    \right\}.
\end{equation}
Fix some $x\in\R_+^n$ and some $\mu\in\S(x)$. The effective region $\region_\mu$ is then the intersection of   half-spaces of the form
\begin{equation}
H_\pi = \left\{  z\in\R^n \,\big|\, z^T \big( I - \RM  \big) \mu \ge z^T \big( I - \RM  \big) \pi   \right\},\qquad \pi\in\S.
\end{equation}
Since $\mu\in\S(x)$, we have $x\in H_{\pi}$ and  $d\big(x,H_\pi\big)=0$, for all $\pi\in \S$.
Let 
\begin{equation}
b \triangleq \max_{\pi\in\S} \, \max_{i\le n } \pi_i
\end{equation}
be the maximum service capacity of any queue over all service vectors.

We also define, for every $i\le n$, two half spaces $H_{j^+} = \big\{z\in\R^n\, \big|\, z_j\ge0 \big\}$ and $H_{j^-} = \big\{z\in\R^n\, \big|\, z_j\le0 \big\}$.
It  follows from the definition of $J$ that for any $j\in J$, $d\big(x,H_{j^-}\big)\le b$, and from $x\in\R_+^n$ that $d\big(x,H_{i^+}\big)=0$, for all $i\le n$.
We define a set $B$, which is determined by the chosen $x\in\R^n_+$ and $\mu\in\S(x)$, as follows:
$$B=\big\{z\in\R^n_+ \cap \region_\mu \mid z_j=0\mbox{ whenever }x_j\leq\mu_j\big\}.$$
Note that the set $B$ is the intersection of finitely many half-spaces of the form $H_{\pi}$, 
$H_{j^+}$, and $H_{j^-}$.
Note furthermore that $B$ contains the origin and  is therefore non-empty.
Finally note that $x$ has a distance of at most $b$ from each of the half-spaces defining $B$. Therefore,
Lemma \ref{lem:known lemma!} implies that 
\begin{equation} \label{eq:cb}
d\big( x,  B  \big) \le c\,b,
\end{equation}
for some constant $c$. In general, the constant $c$ will depend on the particular $x$ and $\mu$ under consideration. Note, however, that the set $B$ is completely determined by $\mu\in\S$ and the set of indices $J=\big\{j\,\big|\, x_j\le \mu_j\big\}$. There are finitely many choices for $\mu$ and for $J$, hence finitely many possible sets $B$. By taking the largest of the constants $c$ associated with different sets $B$, we see that $c$ in \eqref{eq:cb} can be taken to be an absolute constant, independent of $x$ and $\mu$.

Let $y$ be the closest point to $x$ in the set $B$. 
Letting $\kappa=cb$, \eqref{eq:cb} implies that 
\begin{equation}
\Ltwo{y-x}\le \kappa,
\end{equation}
where $\kappa$ is an absolute constant. 

Since $y\in B$, we have $y\in\region_\mu$, so that $\mu\in S(y)$. 
Moreover, for any $j\in J$ i.e., if $x_j\leq \mu_j$, we have  $y_j=0$. 
Let $J'=\{j\mid y_j=0\}$. We then have $J\subseteq J'$.
We now apply this inclusion together with Claim \ref{claim:fluid}, with $y$ and $J'$ playing the role of $q$ and $J$ in the statement of the claim, to obtain 
$$\big(\RM-I\big)D_J(\mu) \,\subseteq\, \big(\RM-I\big)D_{J'}(\mu)
\,\subseteq\, F(y).$$
This completes the proof of the Claim.
\end{proof}

For every $t\in \Z_+$, let $\mu(t)\in\S\big(Q(t)\big)$ be the action  taken  by the scheduler at time $t$, $J(t) = \left\{ j\,\big|\, Q_j(t)\le \mu_j(t)\right\}$, and $y(t){\in\R^n_+}$   be a vector that satisfies $\Ltwo{y(t)-Q(t)}\le \kappa$ and 
\begin{equation}\label{eq:DJ  Q in Fy}
\big(\RM-I  \big) D_J\big(\mu(t)\big)\subseteq F\big(y(t)\big),
\end{equation}
as in Claim \ref{claim:simul discrete}.

We now proceed to the main part of the proof of the proposition. 
With a slight abuse of notation, we will write expressions such as $\sum_{k<t}$ even if $t$ is non-integer, which we will interpret as $\sum_{\{k\in \Z_+ \mid k<t\}}$.
We  define the right-continuous perturbation function $U(\cdot)$ as
\begin{equation} \label{eq:def of U (not right cont) in the simulation lemma}
\begin{split}
&U(t) =\sum_{k\le t -1} \big( A(k)-\lambda \big) \,+\, y(\tf)-Q(\tf)  \\
&\qquad \qquad \, -\,  \tr \Big(\lambda+ \big(\RM-I\big) \min\big( \mu(\tf),Q(\tf)\big) \Big), 
\end{split}
\end{equation}
for all $t\in\R_+$.
 
Let
\begin{equation}\label{eq:def constant c}
b \triangleq \max_{\mu\in\S} \,\sup_{Q\in\R_+^n}  \,\Ltwo{\big(\RM-I\big) \min\big( \mu,Q\big)},
\end{equation}
which is a finite constant. 
For any $t\in \R_+$,
\begin{equation}
\begin{split}
\sup_{\tau\le t} \Ltwo{\Prt(\tau)}\, & \le \, 
 \sup_{\tau\le t} \Ltwo{ \sum_{{k < \tau}} \big(A(k) -\lambda   \big) } \,+\, \sup_{{k \le} \tau} \Ltwo{y(k)-Q(k) }  \,+\, \lVert  \lambda \rVert \,  \\
 \quad  &+\, \sup_{{k \le} \tau} \Ltwo{ \big(\RM-I\big) \min\big( \mu(k),Q(k)\big)}   \\
&\le\,  \sup_{\tau\le t} \Ltwo{ \sum_{{k < \tau}} \big(A(k) -\lambda   \big) } \, +\, \kappa \,+\, \lVert  \lambda \rVert \,+\, b.
\end{split}
\end{equation}	
Hence, \eqref{eq: U bounded cond} follows for $\beta= \kappa + b$.

For every $t\in\R_+$, let 
\begin{equation}\label{eq:x-y}
X(t) =  y(\tf).
\end{equation}
Since $\big\|y(t)-Q(t)\big\|\leq\kappa$, by construction,
the desired equality \eqref{eq: X=Q cond} at integer times is trivially true, with $\kappa$ playing the role of $\beta$. 
It remains to show that $\ptraj(\cdot)$ is a perturbed trajectory.

Let
\begin{equation}
\xi(t) =  \big(\RM-I\big) \min\big( \mu(\tf),Q(\tf)\big)+\lambda,\qquad \forall\ t\in\R_+.
\end{equation}
Since $ \min\big( \mu(\tf),Q(\tf)\big) \in D_{J(\tf)}\big(\mu(\tf) \big)$, it follows from \eqref{eq:DJ  Q in Fy} and \eqref{eq:x-y} that  $\xi(t)\in F\big(\ptraj(t)\big)+\lambda$, for all $t\in \R_+$. 

We  now show that 
\begin{equation}\label{eq:xi}
\ptraj(t) = \int_0^t \xi(\tau)\,d\tau + \Prt(t),\qquad \forall t\geq 0.
\end{equation}
For any $k\in\Z_+$, 
\begin{equation}\label{eq:int = ptraj 1}
\begin{split}
\int_k^{k+1} \xi(\tau)\,d\tau \,+\, \Prt(k+1)- \Prt(k) \, &= \, \Big[\big(\RM-I\big) \min\big( \mu(k),Q(k)\big) 
+ \lambda \Big] \\
&\quad+\, \Big[ A(k)-\lambda + y(k+1)-Q(k+1) \\
&\quad\qquad -\big(y(k)-Q(k)\big)   \Big]
 \\
& =\,\Big[\big(\RM-I\big) \min\big( \mu(k),Q(k)\big) + A(k) \Big]\\
&\quad \,- \big(Q(k+1)-Q(k)\big) + \big(y(k+1)-y(k)\big)    
\\
&=\, y(k+1)-y(k)\\
&=\, X(k+1)-X(k),
\end{split}
\end{equation}
where the third equality follows from the evolution rule \eqref{eq:network evolution rule}.
Moreover, for any $t\not\in\Z$,
\begin{equation} \label{eq:int = ptraj 2}
\begin{split}
\int_{\tf}^{t} \xi(\tau)\,d\tau \,&+\, \Prt(t)- \Prt(\tf) \,  \\
&= \, \tr \Big( \lambda + \big(\RM-I\big) \min\big( \mu(\tf),Q(\tf)\big) \Big) \\
&\quad \ -\,  \tr \Big(\lambda+ \big(\RM-I\big) \min\big( \mu(\tf),Q(\tf)\big) \Big)    \\
&=\, 0\\
&=\, X(t)-X(\tf).
\end{split}
\end{equation} 
Then, a simple induction based on \eqref{eq:int = ptraj 1} and \eqref{eq:int = ptraj 2} implies \eqref{eq:xi},  and therefore $\ptraj(\cdot)$ is a perturbed trajectory of $F(\cdot)+\lambda$ corresponding to $\Prt(\cdot)$, which is the desired result.
\end{proof}


\subsection{Proof of Theorem \ref{th:main sensitivity}} \label{subsec:proof sensitivity th}

Having established a reduction from WMW policies to a MW policy (in Lemma \ref{lem:wmw2mw}), and a reduction from a network, operating under MW policy, to its induced FPCS system (cf. Propositions \ref{prop:simulation fluid} and \ref{prop:simulation}), we can now leverage Theorem \ref{th:sensitivity fpcs} to prove Theorem \ref{th:main sensitivity}.

\medskip
\begin{proof}[Proof of Theorem \ref{th:main sensitivity}]
Consider  a network $\net$ that operates under a $w$-WMW policy. Let  $W=\diag(w)$. Consider a queue length process $Q(\cdot)$  of $\net$ corresponding to an arrival $A(\cdot)$, and a fluid solution $q(\cdot)$ of $\net$ corresponding to an arrival rate vector $\lambda$,  initialized at $q(0)=Q(0)$. 
For each time $t$, let $\tilde{Q}(t) = W^{1/2}Q(t)$, $\tilde{q}(t) = W^{1/2}q(t)$, $\tilde{A}(t)= W^{1/2}A(t)$, and $\tilde{\lambda}= W^{1/2}\lambda$. 
Let $\theta_{\min{}}\triangleq\min_{i}w_i^{1/2}$ and $\theta_{\max{}}\triangleq\max_{i}w_i^{1/2}$. Then, for any time $t\in \Z_+$,
\begin{equation} \label{eq: Qt-qt le Q-q}
\Ltwo{\tilde{Q}(t) - \tilde{q}(t)} \,=\, \Ltwo{W^{1/2} \big(Q(t) - q(t)\big)} \,\ge\, \theta_{\min{}}  \Ltwo{Q(t) - q(t)},
\end{equation}
and
\begin{equation} \label{eq: At-lamt le A-lam}
\Ltwo{\sum_{k\le t}\big(\tilde{A}(k) - \tilde{\lambda}\big)} \,=\, \Ltwo{W^{1/2}\, \sum_{k\le t}\big(A(k) - \lambda\big)} \,\le\, \theta_{\max{}}  \Ltwo{\sum_{k\le t}\big(A(k) - \lambda\big)}.
\end{equation}

Let $\tilde{\net}$ be, as in Lemma \ref{lem:wmw2mw}, a network that operates under a MW policy, and for which $\tilde{Q}(\cdot)$ and $\tilde{q}(\cdot)$ are a queue length process and a fluid solution, respectively. 
Consider the induced  FPCS system  $F$ of $\tilde{\net}$. 
It follows from Proposition~\ref{prop:simulation} that there exists a right-continuous perturbation function $\Prt(\cdot)$ satisfying for any $t\in \R_+$,
\begin{equation} \label{eq: U bounded cond in proof of th}
\sup_{\tau\le t} \Ltwo{\Prt(\tau)}\, \le\, \lVert  \tilde\lambda \rVert + \beta 
+ \sup_{\tau\le t} \Ltwo{ \sum_{k < \tau} \big(\tilde{A}(k) -\tilde\lambda   \big) },
\end{equation}
and a corresponding perturbed trajectory $\ptraj(\cdot)$ of $F(\cdot)+\tilde{\lambda}$ such that 
\begin{equation} \label{eq: X=Q cond in proof of th}
\big\| \ptraj(k) - \tilde{Q}(k) \big\| \leq \beta,\quad \forall k\in\Z_+,
\end{equation}
where $\beta$ is a constant independent of $\tilde{\lambda}$.
Moreover, from Proposition~\ref{prop:simulation fluid}, $\tilde{q}(\cdot)$ is an unperturbed trajectory of $F(\cdot)+\tilde{\lambda}$.
Then, applying Theorem~\ref{th:sensitivity fpcs} for the FPCS sytem $F$, we obtain for any $t\in \R_+$,
\begin{equation} \label{eq:application of Th fpcs to Th1}
\Ltwo{\ptraj(t)- \tilde{q}(t)} \,\le \, \tilde{\divconst}\, \sup_{\tau\le t} \Ltwo{ \Prt(\tau)},
\end{equation}
for some constant $\tilde{\divconst}\ge 1$  that is independent of $\lambda$.
Let $\divconst =  \tilde{\divconst}\, \max(\theta_{\max{}},2\beta)\,/\theta_{\min{}}$. Then, for any $t\in\Z_+$,
\begin{eqnarray*}
\Ltwo{Q(t)-q(t)} &\le & \frac{1}{\theta_{\min{}}}\Ltwo{\tilde{Q}(t)-\tilde{q}(t)}\\
&\le& \frac{1}{\theta_{\min{}}} \Big( \Ltwo{\ptraj(t)- \tilde{q}(t)}  \ +\beta \Big)\\
&\le&  \frac{ 1}{\theta_{\min{}}}\,\Big( \tilde{\divconst}\sup_{\tau\le t} \Ltwo{ \Prt(\tau)} \ +\beta \Big)\\
&\le & \frac{ \tilde{\divconst}}{\theta_{\min{}}}\,\Big( \sup_{\tau\le t} \Ltwo{ \Prt(\tau)} \ +\beta \Big)\\
&\le & \frac{ \tilde{\divconst}}{\theta_{\min{}}} \, \bigg( \lVert  \tilde\lambda \rVert + 2\beta  
+ \sup_{\tau\le t} \Ltwo{ \sum_{{k < \tau}} \big(\tilde{A}(k) -\tilde\lambda   \big) }       \bigg)\\
&\le & \frac{ \tilde{\divconst}}{\theta_{\min{}}} \, \bigg( \theta_{\max{}}\lVert \lambda \rVert + 2\beta  
+\theta_{\max{}} \sup_{\tau\le t} \Ltwo{ \sum_{{k < \tau}} \big(A(k) -\lambda   \big) }       \bigg)\\
&\le & \divconst \, \bigg( 1+\lVert  \lambda \rVert  
+ \sup_{\tau\le t} \Ltwo{ \sum_{{k < \tau}} \big(A(k) -\lambda   \big) }       \bigg),
\end{eqnarray*}
where the relations are due to  \eqref{eq: Qt-qt le Q-q}, \eqref{eq: X=Q cond in proof of th}, \eqref{eq:application of Th fpcs to Th1}, $\tilde{\divconst}\ge 1$, \eqref{eq: U bounded cond in proof of th}, \eqref{eq: At-lamt le A-lam}, and the definition of $\divconst$, respectively.
\end{proof}


\section{\bf Proof of Theorem \ref{th:strong ssc}} \label{sec:proof ssc trans}

In this section we present the proof of Theorem \ref{th:strong ssc}.
Part (a)  is a corollary of Part (b). In the following, we first prove part (b), and then Part (c).

{\bf Proof of Part (b).}  We first  consider a MW policy. 
We then use Lemma \ref{lem:wmw2mw} to extend the result to the case of WMW policies. 
The proof for a MW policy goes along the following lines. We break down a $g(r)$-long interval into subintervals of length $r$. 
We define a ``good'' event $\ee_r$, that the aggregate arrival in each $r$-long interval does not deviate much from its average, and show that this event happens with high probability.
We then use Theorem \ref{th:main sensitivity} to show that  $\ee_r$ implies that the queue length process stays close to a fluid solution, in every subinterval.
These fluid solutions are attracted to $\I{(\lambda)}$ (cf.\ Lemma \ref{lem:absorb const}), and hence also keep the queue length process near $\I{(\lambda)}$.

We now present a detailed proof. 
We fix some $\delta>0$ and some
$\epsilon>0$ such that
\begin{equation}\label{eq:def eps le del/c}
\epsilon  \leq \min\big(\delta,\absorbconst\big)/4\divconst,
\end{equation}
where $\divconst$ is the  sensitivity constant provided by  Theorem \ref{th:main sensitivity}, and  $\absorbconst=\absorbconst(\lambda)$ is the constant in Lemma \ref{lem:absorb const}, associated with $\lambda$.
For  $r\in\N$, we define a good event $\ee_r$:
\begin{equation}
\ee_r
=\Big\{\,  \frac{1}{r} \cdot \sup_{ t\in [ir, (i+1)r)} \Ltwo{\sum_{\tau=ir}^{t}\big( A^r(\tau) -\lambda^r\big)} \le \epsilon ,\quad \forall \, i\in\big[ 0,\,g(r)T/r
\big]
\Big\}.
\end{equation}
We denote the complement of an event $\ee$ by $\ee^c$.
Then, for any $r$,
\begin{equation} \label{eq:prob of eec}
\begin{split}
\probb{\ee_r^c}\,
& =\,\probbB{ \exists\  i\in [ 0,\,g(r)T/r],\ \textrm{s.t. }\ \frac{1}{r} \sup_{ t\in [ir, (i+1)r)} \Ltwo{\sum_{\tau=ir}^{t}\big( A^r(\tau) -\lambda^r\big)} > \epsilon}\\
& \le\, \sum_{i=0}^{\lfloor g(r)T/r\rfloor}\probbb{\frac{1}{r}\sup_{ t\in [ir, (i+1)r)} \Ltwo{\sum_{\tau=ir}^{t}\big( A^r(\tau) -\lambda^r\big)} > \epsilon}\\
& =\, \Big(\big\lfloor g(r)T/r\big\rfloor +1\Big) \,\probbb{\frac{1}{r}\sup_{ t\in[ 0, r)} \Ltwo{\sum_{\tau=0}^{t}\big( A^r(\tau) -\lambda^r\big)} > \epsilon}\\
& =\,\Big(\big\lfloor g(r)T/r\big\rfloor +1\Big)\,  o\Big(1/f\big(r,\epsilon\big)\Big),
\end{split}
\end{equation} 
where the inequality is due to the union bound, the second equality holds because $A^r(\cdot)$ is  a stationary process, and the last equality is because $A^r(\cdot)$, $r\in\N$, is an   $f$-tailed sequence of processes (cf.~Definition \ref{def:f tailed}).
Also note that in the last line, $\epsilon$ is a 
fixed constant, and the $o(\cdot)$ notation is with respect to $r$, as $r$ goes to infinity.

From now on, and since $\lambda$ is fixed, we use the simpler notation $\I$, instead of $\I(\lambda)$. 
Consider an  $r_0\in\N$ such that for every $r\ge r_0$,
\begin{align}
\Ltwo{\lambda^r-\lambda}&\le \divconst\epsilon   \label{eq:lambda r close to lambda}\\
d\big(\qhat^r(0)\,,\, \I\big) &\le 2\divconst\epsilon, \label{eq:cond r_0 q_0} \\
\lVert{\lambda^r}\rVert  + 1 &\le r\epsilon	.\label{eq:cond r_0 largee than lambda}
\end{align}
Such an $r_0$ exists because of the convergence assumptions in the statement of the theorem, which also imply that $\lambda^r$ is a bounded sequence.
For every $r,i\in\Z_{+}$, we define two events, $E_{r,i}$ and $E'_{r,i}$:
\begin{equation}
\begin{split}
E_{r,i}\triangleq \textrm{ the event that }&\,d\Big(Q^r\big(ir\big)\,,\,\I\Big) \le 2\divconst r \epsilon,\\
E'_{r,i}\triangleq \textrm{ the event that }&\,d\Big(Q^r(t)\,,\,\I\Big) \le r\delta ,\quad \forall \, t\in [ir, (i+1)r),
\end{split}
\end{equation}
where $Q^r(\cdot)$ is the queue length process corresponding to the arrival $A^r(\cdot)$.
Using Theorem \ref{th:main sensitivity}, we will now show that for any $r\ge r_0$, $\ee_r$ implies $E_{r,i}$ and $E_{r,i}$, for all $i< g(r)T/r$.
\begin{claim} \label{claim:E and E'}
Fix some
 $r\ge r_0$. The occurrence of the event $\ee_r$ implies the occurrence of the events $E_{r,i}$ and $E'_{r,i}$, for all $i< g(r)T/r$.
\end{claim}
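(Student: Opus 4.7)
The plan is to prove the claim by induction on $i$, combining the sensitivity bound of Theorem~\ref{th:main sensitivity} with the attraction property of fluid solutions from Lemma~\ref{lem:absorb const}. Informally, the event $\ee_r$ guarantees that in any length-$r$ window the effective perturbation is $O(r\epsilon)$, so the queue length stays within $O(r\epsilon)$ of a fluid solution starting from $Q^r(ir)$; meanwhile, Lemma~\ref{lem:absorb const} drives that fluid solution into $\I$ well before the window ends. The choice of $\epsilon$ in \eqref{eq:def eps le del/c}, with $4\divconst$ in the denominator, is designed precisely to make both $E_{r,i+1}$ and $E'_{r,i}$ close within a single induction step.

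For the base case $i=0$, the fact that $\I(\lambda)$ is a cone (Eq.~\eqref{eq:I is conic}) gives $d(Q^r(0),\I)=r\cdot d(\hat q^r(0),\I)\le 2\divconst r\epsilon$ by \eqref{eq:cond r_0 q_0}, so $E_{r,0}$ holds. For the inductive step, assume $E_{r,i}$. Let $q_i(\cdot)$ be the fluid solution with arrival rate $\lambda$ and $q_i(0)=Q^r(ir)$. Applying Theorem~\ref{th:main sensitivity} to the shifted arrival $\tilde A(\tau)=A^r(ir+\tau)$, for every $k\in[0,r]$
\begin{equation*}
\Ltwo{Q^r(ir+k)-q_i(k)}\le \divconst\Big(1+\Ltwo{\lambda}+\max_{t<k}\Big\|\sum_{\tau=0}^{t}(\tilde A(\tau)-\lambda)\Big\|\Big).
\end{equation*}
A triangle-inequality split of the inner sum into $\tilde A(\tau)-\lambda^r$ plus $\lambda^r-\lambda$, combined with $\ee_r$ and \eqref{eq:lambda r close to lambda}, bounds the $\max$ by $r\epsilon(1+\divconst)$, while \eqref{eq:cond r_0 largee than lambda} and \eqref{eq:lambda r close to lambda} ensure $1+\Ltwo{\lambda}=O(r\epsilon)$. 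Hence $\Ltwo{Q^r(ir+k)-q_i(k)}$ is bounded by a constant multiple of $r\epsilon$.

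Next, Lemma~\ref{lem:absorb const} yields $d(q_i(k),\I)\le\max\bigl(0,\,d(Q^r(ir),\I)-\absorbconst k\bigr)$. Using the inductive hypothesis $d(Q^r(ir),\I)\le 2\divconst r\epsilon$ together with $\epsilon\le\absorbconst/(4\divconst)$, we have $2\divconst r\epsilon\le \absorbconst r/2$, so $q_i$ reaches $\I$ no later than time $k=r/2$ and stays there; in particular $d(q_i(r),\I)=0$. By the triangle inequality, $d(Q^r(ir+k),\I)\le \|Q^r(ir+k)-q_i(k)\|+d(q_i(k),\I)$ is a constant multiple of $r\epsilon$ throughout $k\in[0,r)$; invoking $\epsilon\le\delta/(4\divconst)$ makes this at most $r\delta$, proving $E'_{r,i}$. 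Setting $k=r$ and using $d(q_i(r),\I)=0$ yields $E_{r,i+1}$.

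The main obstacle is tracking the constants so that both $E_{r,i+1}$ (which demands the tight bound $2\divconst r\epsilon$) and $E'_{r,i}$ (which only demands $r\delta$) are simultaneously preserved by the induction; this is ensured by the combined choice $\epsilon\le\min(\delta,\absorbconst)/(4\divconst)$, together with picking $r_0$ large enough that the $O(1)$ contributions ($1+\Ltwo{\lambda}$ and the $C\epsilon$ arising from $\lambda^r\to\lambda$) are dominated by $r\epsilon$. A minor technical point is that the sensitivity bound is applied to a shifted process at time $ir$, which is valid because Theorem~\ref{th:main sensitivity} makes no requirement on the initial condition beyond $q(0)=Q(0)$, and fluid solutions exist for arbitrary non-negative initial states.
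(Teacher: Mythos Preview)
Your inductive structure is right, and the two ingredients (Theorem~\ref{th:main sensitivity} for tracking, Lemma~\ref{lem:absorb const} for attraction) are the correct ones. However, the constants do not close the induction as written. The event $E_{r,i+1}$ requires the \emph{tight} bound $d\big(Q^r((i+1)r),\I\big)\le 2\divconst r\epsilon$, and you apply Theorem~\ref{th:main sensitivity} with reference rate $\lambda$ rather than $\lambda^r$. After your triangle split you correctly get $\max_{t<k}\big\|\sum_{\tau\le t}(\tilde A(\tau)-\lambda)\big\|\le r\epsilon(1+\divconst)$, but this, together with $1+\|\lambda\|\le r\epsilon+\divconst\epsilon$, yields $\|Q^r((i+1)r)-q_i(r)\|\le \divconst(2+\divconst)\,r\epsilon+\divconst^2\epsilon$, which is strictly larger than $2\divconst r\epsilon$ for any $\divconst>0$. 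Hence even with $d(q_i(r),\I)=0$ you cannot conclude $E_{r,i+1}$, and the induction breaks. The culprit is the term $(t+1)\|\lambda^r-\lambda\|$: it is not an $O(1)$ contribution as your last paragraph suggests, but $O(r\epsilon)$, because the per-step discrepancy $\divconst\epsilon$ is accumulated over up to $r$ steps.

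The paper sidesteps this by applying Theorem~\ref{th:main sensitivity} with reference rate $\lambda^r$, so that $\ee_r$ bounds the arrival fluctuation by $r\epsilon$ with no split, giving exactly $\|Q^r-q_i^r\|\le \divconst(r\epsilon+r\epsilon)=2\divconst r\epsilon$, where $q_i^r$ is the fluid solution with rate $\lambda^r$ started at $Q^r(ir)$. The mismatch $\lambda^r\ne\lambda$ is then absorbed at the fluid level: a Lipschitz comparison of $q_i^r$ with a $\lambda$-fluid (via a monotone-operator bound) gives $\frac{d^+}{dt}d(q_i^r,\I)\le\|\lambda^r-\lambda\|-\absorbconst\le \divconst\epsilon-\absorbconst<-2\divconst\epsilon$ (using $\epsilon\le\absorbconst/4\divconst$), so $q_i^r$ reaches $\I$ within time $r$ and stays within $2\divconst r\epsilon$ of $\I$ throughout. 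This is what makes both $E_{r,i+1}$ and $E'_{r,i}$ close with exactly the constants fixed before the claim.
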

\begin{proof}
The proof is by induction on $i$. For the base case, $E_{r,0}$ follows from \eqref{eq:cond r_0 q_0}, because of $Q^r(0)=r \qhat^r(0)$ and the conic property of $\I$, in \eqref{eq:I is conic}. 
 For the induction step, we will show that for any $i<g(r)T/r$, the events $\ee_r$ and $E_{r,i}$ imply $E_{r,i+1}$ and $E'_{r,i}$.

For any $r\in\Z_+$,  let $q_i^r(t)$ be the fluid solution corresponding to arrival rate $\lambda^r$, and  initialized with $q_i^r(ir)=Q^r(ir)$ at time $ir$. 
Fix an arbitrary $t_0\ge ir$, and let $q(\cdot)$ be a fluid solution corresponding to arrival rate $\lambda$, initialized at $q(t_0) = q_i^r(t_0)$.
From Proposition~\ref{prop:simulation fluid}, $q(\cdot)$  and $q_i^r(\cdot)$ are solutions of $\dot{q}\in F(q)+\lambda$ and $\dot{q}_i^r \in F(q_i^r)+\lambda^r$, respectively, where $F$ is the induced FPCS system of the network.
It then follows from Lemma~4.5 of \cite{Stew11} that for any $t\ge t_0$, $\Ltwo{q_i^r(t)-q(t)}\le (t-t_0) \Ltwo{\lambda^r-\lambda}$. 
As a result,
\begin{equation} \label{eq:d+dt d qir-q}
 \frac{{d^+}}{dt} \Ltwo{q_i^r(t)-q(t)} \,\Big|_{t=t_0}\,\le\, \Ltwo{\lambda^r - \lambda}.
\end{equation}
Suppose that $q_i^r(t_0)\not\in \I$. Then, we also have $q(t_0)\not\in \I$ and Lemma \ref{lem:absorb const} implies that
\begin{equation}
\begin{split}
\frac{{d^+}}{dt} d\big(q_i^r(t),\I\big) \Big|_{t_0} & 
\,\le\, \frac{{d^+}}{dt} \Ltwo{q_i^r(t) - q(t)} \Big|_{t=t_0} \,+\, \frac{{d^+}}{dt} d\big(q(t),\I\big)\Big|_{t=t_0} 
\\
&\,\le\, \Ltwo{\lambda^r-\lambda} -\alpha \\
&\,\le\,  C\epsilon -\alpha\\
&\,\le\,  C\epsilon  -4C\epsilon \\
& \,<\, -2C\epsilon,
\end{split}
\end{equation}
 where the second inequality is from \eqref{eq:d+dt d qir-q}, and the fourth inequality is due to \eqref{eq:def eps le del/c}. 
Moreover, under $E_{r,i}$, we have 
$$d\big(q_i^r(ir),\, \I \big)\,
= d\big(Q^r(ir),\, \I \big)\, \le \, 2Cr\epsilon.$$
 Therefore, under $E_{r,i}$,
\begin{eqnarray}
&& d\big(q_i^r(t),\, \I \big)  \, \le \,  2Cr\epsilon, \qquad \forall\, t\in\big[ir,\, (i+1)r\big), \label{eq:q_ir(t) comes closer to I}\\
&& d\Big(q_i^r\big((i+1)r\big),\, \I \Big) \,=\, 0.    \label{eq:q_ir((i+1)r) in I} 
\end{eqnarray}
Then, for any $r,i\in\Z_+$, and under $E_{r,i}$,
\begin{equation*} \label{eq:ec given e prob}
\begin{split}
d\Big(Q^r\big((i+1)r\big),\, \I\Big)\,
& \le\, \Ltwo{Q^r\big((i+1)r\big)-q_i^r\big((i+1)r\big)} \,+\, d\Big(q_i^r\big((i+1)r\big),\, \I\Big)\\
&= \, \Ltwo{Q^r\big((i+1)r\big)-q_i^r\big((i+1)r\big)}\\
&\le\,  \divconst \bigg(1 \,+\, \lVert \lambda^r\rVert\,+\, \sup_{t\in[ir,(i+1)r)} \,  \Ltwo{\sum_{\tau=ir}^{t} \big(A^r(\tau)\,-\, \lambda^r\big)} \bigg) \\
&\le\, C \big(r\epsilon +  r\epsilon\big)\\
&=\, 2Cr\epsilon,
\end{split}
\end{equation*}
where the second inequality is due to Theorem \ref{th:main sensitivity}, and the last inequality follows from \eqref{eq:cond r_0 largee than lambda} and $\ee_r$. 
This implies $E_{r,i+1}$.
Moreover,
\begin{equation*}\label{eq:e'c given e prob}
\begin{split}
\sup_{ t\in[ir,(i+1)r)}\, d\big(Q^r(t),\, \I\big) \,
&\le\, \sup_{ t\in[ir,(i+1)r)}\,\Big(\Ltwo{Q^r(t)-q_i^r(t)} + d\big(q_i^r(t),\, \I\big)\Big)\\
&\le\, \sup_{ t\in[ir,(i+1)r)}\,\Ltwo{Q^r(t)-q_i^r(t)} \,+\,      2 Cr\epsilon\\
&\le\, \divconst\, \bigg(1 \,+\, \lVert \lambda^r\rVert\,+\,\sup_{t\in[ir,(i+1)r)} \,  \Ltwo{\sum_{\tau=0}^{t-1}\big( A^r(\tau) -\lambda^r\big)} \bigg)\,+\,        2Cr\epsilon \\
&\le\, C \big(r\epsilon +  r\epsilon\big) \,+\,   2C r\epsilon \\
&\le r\delta,
\end{split}
\end{equation*}
where the second inequality is due to \eqref{eq:q_ir(t) comes closer to I}, the third inequality follows from Theorem \ref{th:main sensitivity}, the fourth inequality is from  \eqref{eq:cond r_0 largee than lambda} and $\ee_r$, and the last inequality is due to  the definition of $\epsilon$ in \eqref{eq:def eps le del/c}. This implies $E'_{r,i}$ and completes the proof of the claim.
\end{proof}

Back to the proof of the theorem, let us again fix some $r\ge r_0$. We have
\begin{equation} \label{eq:prob sup le prob eec}
\begin{split}
\prob{\sup_{t\in  [0,T]}\,d\Big(\qhat^r(t)\,,\, \I\Big)> \delta} \,
&=\, \prob{\sup_{t\le  g(r)T}\,d\Big( Q^r(t)/r\,,\, \I\Big)> \delta}\\ 
&=\, \prob{\sup_{t\le  g(r)T}\,d\Big( Q^r(t)\,,\, \I\Big)> r\delta}\\ 
&\le\, \prob{\bigcup_{i\le g(r)T/r  } {E'}_{r,i}^c}\\ 
&\le\, \probb{\ee_r^c},
\end{split}
\end{equation}
where the second equality is due to \eqref{eq:I is conic}, the first inequality is from the definition of $E_{r,i}'$, and the last inequality is due to Claim \ref{claim:E and E'}.
Thus,
\begin{eqnarray*}
 \frac{rf\big(r,\epsilon\big)}{g(r)}\, \prob{\sup_{t\in  [0,T]}\,d\Big(\qhat^r(t)\,,\, \I\Big)> \delta} 
&\le &   \frac{rf\big(r,\epsilon\big)}{g(r)}\, \probb{\ee^c}\\
&\le &  \frac{rf\big(r,\epsilon\big)}{g(r)}\, \Big(\big\lfloor g(r)T/r\big\rfloor +1\Big)\,  o\Big(1/f\big(r,\epsilon\big)\Big)  \\
&\le & o\left(\frac{\big\lfloor g(r)T/r\big\rfloor +1}{g(r)/r}  \right)  \\
&=& o(1) \, \xrightarrow[{\,r\to\infty\,}]{} 0,
\end{eqnarray*}
where the first two inequalities are due to \eqref{eq:prob sup le prob eec} and \eqref{eq:prob of eec}, respectively; and the equality follows from the assumption $\liminf_{r\to\infty} g(r)/r >0$. This completes the proof of Part (b) for the case of a MW policy.

We now present the proof of Part (b) for WMW policies. 
Suppose that a network $\net$ operates under a $w$-WMW policy, and consider an associated network $\tilde{\net}$ as in Lemma \ref{lem:wmw2mw}, along with the variables and processes therein.
It follows from Lemma \ref{lem:wmw2mw} that if the constant function $q(t)=q_0$ is a fluid solution for network $\net$, then the constant function $\tilde{q}(t)=W^{1/2} q_0$ is a fluid solution for $\tilde{\net}$ under a MW policy, and vice versa.
Therefore, $\tilde{\I}=W^{1/2}\I$ is the set of invariant states for $\tilde{\net}$, corresponding to arrival rate $\tilde{\lambda}=W^{1/2}\lambda$. Let $\theta_{\min{}}\triangleq\min_{i}w_i^{1/2}$ and $\theta_{\max{}}\triangleq\max_{i}w_i^{1/2}$. 
Let $\tilde q(\cdot)=W^{1/2}\hat q(\cdot)$, which is the scaled version  of the MW-driven process $\tilde Q(\cdot)$. 
Then, for any $r\in \N$ and any time $t$,
\begin{equation} \label{eq:relation of qhat r and qtilder}
d\Big(\qhat^r(t)\,,\, \I\Big) \, = \, d\Big(W^{-1/2}\tilde{q}^r(t)\,,\, W^{-1/2}\tilde{\I}\Big) \, \le \,\frac{1}{\theta_{\min{}}} d\Big(\tilde{q}^r(t)\,,\, \tilde{\I}\Big).
\end{equation}
In the same vein,
\begin{equation}
\Ltwo{\tilde{A}^r(t)-\tilde\lambda^r} \, = \, \Ltwo{W^{1/2}\big({A}^r(t)-\lambda^r\big)} \,\le\,  \theta_{\max{}} \Ltwo{{A}^r(t)-\lambda^r}.
\end{equation}
As a result, $\tilde{A}^r(\cdot)$, $r\in\N$, is a $\big(\theta_{\max{}} f\big)$-tailed sequence of processes.

As  in \eqref{eq:def eps le del/c}, fix some $\delta>0$ and some
$\epsilon>0$ such that
$
\epsilon  \leq \min\big(\delta\theta_{\min{}} ,\absorbconst\big)/4\divconst,
$
where $\divconst$ is the  sensitivity constant of the network operating under a MW policy (cf. ~Theorem \ref{th:main sensitivity}) and $\absorbconst=\absorbconst(\lambda)$ is the constant in Lemma \ref{lem:absorb const}, associated with $\tilde{\lambda}$. 
Using what we have already established for MW policies, it follows
that 
\begin{equation} \label{eq:collapse of the induced MW system}
 \frac{r\theta_{\max{}}f\big(r,\epsilon\big)}{g(r)}\, \prob{\sup_{t\in  [0,T]}\,d\Big(\tilde{q}^r(t)\,,\, \tilde{\I}\Big)> \delta\theta_{\min{}}}   \xrightarrow[{\,r\to\infty\,}]{} 0.
\end{equation}
This together with \eqref{eq:relation of qhat r and qtilder} implies that
\begin{equation}\label{eq:collapse of MW yields collapse of WMW}
\begin{split}
\frac{rf\big(r,\epsilon\big)}{g(r)}\, &\prob{\sup_{t\in  [0,T]}\,d\Big(\qhat^r(t)\,,\, \I\Big)> \delta}   \\
&\, \le \,
\frac{rf\big(r,\epsilon\big)}{g(r)}\, \prob{\sup_{t\in  [0,T]}\,\frac{1}{\theta_{\min{}}}d\Big(\tilde{q}^r(t)\,,\, \tilde{\I}\Big)> \delta}  \,
\xrightarrow[{\,r\to\infty\,}]{} 0,
\end{split}
\end{equation}
and Part (b) of the theorem follows. 


\medskip
\medskip
{\bf Proof of Part (c).} Throughout this proof we assume that we have fixed a network operated under a MW policy, as well as functions $f(\cdot)$ and $g(\cdot)$ with the properties in the statement of the result, namely, $\lim_{r\to\infty}\, r f\big(r,\epsilon\big)/g(r) = 0$, for all $\epsilon>0$,  and $\lim_{r\to\infty} g(r)/r =\infty$. It is not hard to see that these properties guarantee that there exists
a function $h:\N\to\R_+$  such that
\begin{eqnarray}
\lim_{r\to\infty} \frac{rf\big(r,\delta\big)}{h(r)} &=& 0,\qquad \forall\,\delta>0, \label{eq:rf over h to 0}\\
\lim_{r\to\infty} \frac{h(r)}{g(r)} &=& 0, \label{eq:hr over gr to 0 }\\
\lim_{r\to\infty} \frac{h^2(r)}{r g(r)} &=& \infty.
\end{eqnarray}

Let $\tilde{h}(r)=rg(r)/h(r)$. Then, 
\begin{eqnarray}
\lim_{r\to\infty} \frac{\tilde{h}(r)}{r} = \lim_{r\to\infty} \frac{g(r)}{h(r)} &=& \infty ,  \label{eq:hrr to 0}\\
\lim_{r\to\infty} \frac{\tilde{h}(r)}{h(r)} = \lim_{r\to\infty} \frac{rg(r)}{h^2(r)} &=& 0. \label{eq:hrhr to inf}
\end{eqnarray}

Before continuing with the main part of the proof, we establish
that $\I(\lambda)$ is contained in a low-dimensional subspace. 
The intuition behind this fact is that
$\I(\lambda)$ is  contained in the intersection of different effective regions, each of which is a polyhedron. Recall that $\capregion$ stands for the capacity region of the network.
\begin{claim} \label{claim:I is low dim}
Suppose that $\lambda\in\capregion$ but $\lambda$ is not an extreme point of $\capregion$. Then,
 there exists a nonzero vector $v\in\R^n$ such that $v^T \I(\lambda)=\{0\}$.
\end{claim}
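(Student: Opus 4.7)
The plan is to produce the vector $v$ by leveraging the maximization structure encoded in $\I(\lambda)$ together with a nontrivial convex decomposition of $\lambda$ inside $\capregion$. First, I would establish the key identity
\[
q_0^T \lambda \,=\, \max_{\mu\in\S}\, q_0^T (I-\RM)\mu \,=:\, M(q_0), \qquad \forall\, q_0 \in \I(\lambda),
\]
which says that every invariant state $q_0$ is an outer normal at $\lambda$ to the polytope $(I-\RM)\conv(\S)$. Given this identity, the claim follows quickly: take distinct $\lambda_1,\lambda_2 \in \capregion$ with $\lambda = \tfrac{1}{2}(\lambda_1+\lambda_2)$ and set $v = \lambda_1-\lambda_2$.

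To prove the identity, fix $q_0 \in \I(\lambda)$ and use that the constant trajectory $q(t)\equiv q_0$ is a fluid solution (Definition~\ref{def:fluid net}) to extract coefficients $\{s_\mu\}$ supported on the maximizing set $\S(q_0)$, together with a vector $y\in\R_+^n$ satisfying $y\preceq\nu:=\sum_\mu s_\mu\mu$ and $y_i=0$ whenever $q_{0,i}>0$, such that $\lambda=(I-\RM)(\nu-y)$. Since $s_\mu$ is supported on maximizers of $q_0^T(I-\RM)\mu$, we get $q_0^T(I-\RM)\nu = M(q_0)$; and by complementarity $q_0^T y = 0$. Hence
\[
q_0^T\lambda \,=\, q_0^T(I-\RM)\nu \,-\, q_0^T(I-\RM)y \,=\, M(q_0) \,+\, q_0^T \RM\, y \,\geq\, M(q_0),
\]
because $q_0$, $\RM$, and $y$ are componentwise non-negative. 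The matching upper bound $q_0^T \lambda \leq M(q_0)$ follows from $\lambda\in\capregion\subseteq (I-\RM)\conv(\S)$ in \eqref{eq:capregion subset convex hull}.

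For the concluding step, non-extremality of $\lambda$ in $\capregion$ yields distinct $\lambda_1, \lambda_2 \in \capregion$ with $\lambda = \tfrac{1}{2}(\lambda_1+\lambda_2)$. For any $q_0 \in \I(\lambda)$, each $\lambda_i\in (I-\RM)\conv(\S)$ gives $q_0^T \lambda_i \leq M(q_0)$, and their average equals $M(q_0)$ by the identity, which forces $q_0^T\lambda_1 = q_0^T\lambda_2 = M(q_0)$. Therefore $v := \lambda_1-\lambda_2$ is nonzero and satisfies $v^T q_0 = 0$ for every $q_0\in\I(\lambda)$.

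The main subtlety I anticipate is handling the boundary term $y$: the quantity $q_0^T\RM\, y$ is non-negative and could in principle make $q_0^T\lambda$ strictly exceed $M(q_0)$ if only the MW maximization property of $\nu$ were available, so the identity crucially depends on the outer inclusion $\capregion\subseteq (I-\RM)\conv(\S)$ (which itself rests on Assumption~\ref{assumption 1}). Beyond pinning down this two-sided bound, the argument is purely linear-algebraic and invokes no dynamical behavior of the fluid model beyond the definition of $\I(\lambda)$.
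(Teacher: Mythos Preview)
Your proof is correct and follows a route that is close in spirit to the paper's but differs in two notable respects.

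First, the paper does not work directly with the fluid model Definition~\ref{def:fluid net}; instead it invokes Proposition~\ref{prop:simulation fluid} to conclude that $0\in F(q_0)+\lambda$ and hence $\lambda\in (I-\RM)\conv\big(\S(q_0)\big)$ outright, bypassing any explicit handling of the idling vector $y$. Your argument achieves the weaker (but sufficient) scalar identity $q_0^T\lambda=M(q_0)$ by working through the complementarity condition $q_0^Ty=0$ and the non-negativity of $q_0^T\RM y$, together with the outer inclusion \eqref{eq:capregion subset convex hull}. This makes your proof more self-contained: it does not rely on the FPCS machinery, only on the raw fluid equations and Assumption~\ref{assumption 1} through \eqref{eq:capregion subset convex hull}.

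Second, the vector $v$ you produce is different. The paper fixes a convex decomposition $\lambda=(I-\RM)\sum_\mu\alpha_\mu\mu$, argues that every $\mu$ with $\alpha_\mu>0$ must lie in $\S(q_0)$ for all $q_0\in\I(\lambda)$, picks two such $\mu,\nu$ with $(I-\RM)\mu\neq(I-\RM)\nu$ (which exists by non-extremality), and sets $v=(I-\RM)(\mu-\nu)$. You instead take $v=\lambda_1-\lambda_2$ directly from a nontrivial convex decomposition of $\lambda$ in $\capregion$ and use the averaging argument to force $q_0^T\lambda_1=q_0^T\lambda_2$. Both constructions yield a valid $v$; yours is slightly cleaner in that it does not need to discuss which service vectors are active.
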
 
\begin{proof}
Since $\lambda\in\capregion$, it follows from \eqref{eq:capregion subset convex hull} that $\lambda \in \big( I-\RM\big) \conv(\S) $. Therefore, 
\begin{equation} \label{eq:lambda as a convex comb}
\lambda = \big( I-\RM\big) \sum_{\mu\in \S} \alpha_\mu \mu,
\end{equation}
for some non-negative coefficients $\alpha_\mu$ that sum to one. 
Let us assume that we have fixed one particular set of such coefficients. 

Consider some $x\in \I(\lambda)$, and let $F$ be the induced dynamical system of the network. 
It follows from Proposition \ref{prop:simulation fluid} and the definition of $\I(\lambda)$ that $0\in F(x)+\lambda$.
Therefore, $\lambda\in -F(x)$.
Since $F(x)$ is the convex hull of the vectors $\mu$ that maximize $x^T(I-R)\mu$, we have
\begin{equation}
\lambda = \big( I-\RM\big) \sum_{\nu \in \S(x)} \beta_\nu \nu,
\end{equation}
for some non-negative coefficients $\beta_\nu$ that sum to one. This together with \eqref{eq:lambda as a convex comb} implies that
\begin{equation}
 \big( I-\RM\big) \sum_{\mu\in \S} \alpha_\mu \mu \, =\, \big( I-\RM\big) \sum_{\nu\in \S(x)} \beta_\nu \nu,
\end{equation}
 and as a result,
 \begin{equation} \label{eq:alpha beta conv combs}
 \sum_{\mu\in \S}  \alpha_\mu\, x^T\big( I-\RM\big)  \mu \, =\,  \sum_{\nu \in \S(x)} \beta_\nu\,  x^T \big( I-\RM\big) \nu.
\end{equation}
 Since $\S(x)$ is the set of maximizers of $x^T \big( I-\RM\big) \nu$ over $\nu\in \S$, it follows from \eqref{eq:alpha beta conv combs} that if $\alpha_\mu>0$, then 
 \begin{equation} \label{eq:mu is in the S}
 \mu\in \S(x),
 \end{equation}
and this relation is true for all $x\in\I(\lambda)$. 
This is because otherwise, the left-hand side of \eqref{eq:alpha beta conv combs} would be strictly smaller than the right-hand side. 
 
 On the other hand, since $\lambda$ is not an extreme point of $\capregion$, it follows from \eqref{eq:capregion subset convex hull} that  $\lambda$ is not an extreme point of $\big( I-\RM\big) \conv(\S)$. Then,
 there are at least two service vectors $\mu,\nu\in \S$, for which $\alpha_\mu,\alpha_\nu>0$ and $\big( I-\RM\big)\mu \ne \big( I-\RM\big)\nu$. 
Let $v = \big( I-\RM\big)\big(\mu-\nu\big)$, which is a nonzero vector. 
As already shown in \eqref{eq:mu is in the S}, 
 for any $x\in\I(\lambda)$, we have $\mu,\nu\in \S(x)$.
Therefore, for any $x\in\I(\lambda)$, we have $x^T\big( I-\RM\big)\mu = x^T\big( I-\RM\big)\nu$,  i.e., $v^T x=0$, and the claim follows.
 \end{proof}
 
 
Using the above claim, consider an $(n-1)$-dimensional subspace  $Z$ containing $\I(\lambda)$ and let $w$ be a vector in $\R_+^n\backslash Z$. By suitably scaling $w$,  we can assume that $d\big(w,Z\big) = 1$. Then, $w$ can be decomposed as $w=z+v$ for some $z\in Z$ and some $v\neq 0$ which is orthogonal to $Z$ and has unit norm, $\ltwo{v}=1$.
We also let  
\begin{equation}\label{eq:def b in converse}
b\triangleq\max_{\mu\in\S} \sup_{Q\in \R_+^n} \Ltwo{\big(I-\RM\big) \min\big(\mu,Q\big)},
\end{equation}
which is the maximum instantaneous change in the queue lengths due to service, where $\RM$ and $\S$ are the routing matrix and the set of service vectors, respectively. 
It  follows from \eqref{eq:hrr to 0}  and \eqref{eq:hrhr to inf} that there exists some  $r_0\in\N$ such that for any $r\ge r_0$,
\begin{equation} \label{eq:def of r0 in converse}
2r\delta  + \lVert \lambda\rVert + b \, < \, \tilde{h}(r) . 
\end{equation}

For every $r\in\N$, let $A^r(\cdot)$ be an i.i.d.\ process 
with values 
\begin{equation}\label{eq:def Art for converse}
A^r(t)=\begin{cases}  
\lambda+\tilde{h}(r)\, w,&     \mathrm{w.p.}\quad \frac{1}{h(r)},\\
\lambda   
,&     \mathrm{w.p.}\quad1-\frac{1}{h(r)}.
\end{cases}
\end{equation}
Since $w\in\R_+^n$,  
 $A^r(t)$ is non-negative for all $r$ and $t$, 
 and   from \eqref{eq:hrhr to inf},
\begin{equation}
\E{A^r(t)} \,=\,  \lambda + \frac{\tilde{h}(r)}{h(r)}w\,  \xrightarrow[{\,r\to\infty\,}]{}\, \lambda.
\end{equation}

For any $r\in\N$ and any $T\in\Z_+$, consider an event $E_T^r$:
\begin{equation*}
E_T^r: \textrm{ the event that } A^r(t)=\lambda+\tilde{h}(r)\,w, \textrm{ for at least  one } t\in\big[0,T).
\end{equation*}
Consider some $r\ge r_0$. If $E_r^r$ does not occur, then for any $t< r$ and any $\delta>0$,
\begin{equation*}
\Ltwo{\frac{1}{r}\sum_{\tau=0}^t \big(A^r(\tau)-\lambda\big)} \, =\, 0\,< 
 \, \delta.
\end{equation*}
Therefore,
\begin{equation*}
\begin{split}
 \prob{\frac{1}{r}\sup_{t< r}  \Ltwo{\sum_{\tau=0}^t \big(A^r(\tau)- \lambda \big)} > \delta} \,&\le\, \probb{E_r^r} \,
\le\, \frac{r}{h(r)},
\end{split}
\end{equation*}
where the second inequality follows from \eqref{eq:def Art for converse} and the union bound.
Thus, for any $\delta>0$,
\begin{equation*}
\begin{split}
\lim_{r\to\infty}\,f(r,\delta)\, \prob{\frac{1}{r}\sup_{t< r}  \Ltwo{\sum_{\tau=0}^t \big(A^r(\tau)- \lambda^r\big)} > \delta}\, 
\le \, \lim_{r\to\infty} f(r,\delta)\,\frac{r}{h(r)}
\,=\,0,
\end{split}
\end{equation*}
where the equality is due to \eqref{eq:rf over h to 0}. Hence, $A^r(\cdot)$, $r\in\N$, is an $f$-tailed sequence of processes.

According to our definition of  $v$, we have $v=w-z$, $v$ is orthogonal to the subspace $Z$ containing $\I(\lambda)$, and $\ltwo{v}= d\big(w,Z\big)= 1$. 
Therefore, for any $r\ge r_0$, if $A^r(t)=\lambda + \tilde{h}(r)\,w$ for some  $t$,  then
\begin{eqnarray*}
v^T\,\Big( Q^r(t+1)-Q^r(t)\Big)
&= &  v^T\,A^r(t) \,+\,v^T \big(\RM-I\big) \min\big(\mu(t),Q(t)\big)\\
&\ge &  v^T\,A^r(t) \,- \, b\,\lVert v\rVert \\
& = &  \tilde{h}(r)  \, +\, v^T\lambda \,- \, b   \\
& \ge &   \tilde{h}(r)  \, - \, \lVert \lambda \rVert \,- \, b \\
& > & 2 r \delta,
\end{eqnarray*}
where the inequalities are due to \eqref{eq:def b in converse}, $\lVert v\rVert =1$, and \eqref{eq:def of r0 in converse}, respectively. 
Now recall that $v$ is orthogonal to  the subspace $Z$ containing $\I(\lambda)$. Whenever we have $A^r(t)=\lambda + \tilde{h}(r)\,v$, we have a jump of size at least $2r\delta$ in a direction orthogonal to $\I(\lambda)$, from which it is not hard to see that 
\begin{equation}
\max \bigg( d\Big(Q^r(t+1),\,\I(\lambda)\Big) \,,\, d\Big(Q^r(t),\,\I(\lambda)\Big)\bigg)>\, r\delta.
\end{equation}
This implies that for any $r\ge r_0$, if the event $E_{g(r)T}^r$ occurs, then $$\sup_{t\in  [0,g(r)T]}\,d\Big(Q^r(t)\,,\, \I(\lambda)\Big)>r\delta.$$ Therefore,
\begin{eqnarray*}
\lim_{r\to \infty}\, \prob{\sup_{t\in  [0,T]}\,d\Big(\qhat^r(t)\,,\, \I(\lambda)\Big)>\delta} 
& = &  \lim_{r\to \infty}\, \prob{\sup_{t\in  [0,g(r)T]}\,d\Big(Q^r(t)\,,\, \I(\lambda)\Big)>r\delta} \\
&\ge & \lim_{r\to \infty}\, \probb{ E_{g(r)T}^r }\\
&\ge & \lim_{r\to \infty}\, \Big( 1\,-\, \big( 1- {1}/{h(r)}\big) ^{g(r) T} \Big) \\
&\ge & \lim_{r\to \infty}\, \Big( 1\,-\, e^{-g(r)T/h(r)}  \Big) \\
&=& 1,
\end{eqnarray*}
where the last equality is due to \eqref{eq:hr over gr to 0 }. This completes the proof of Part (c).

\medskip
\noindent
{\bf Remark.} The requirement, in Part (c) of the theorem, that $\lambda$ is not an extreme point of $\capregion$ cannot be removed. For a trivial example, consider a single queue with 
a single (one-dimensional) service vector $\mu=1$. Then, $\capregion=[0,1]$.
If $\lambda=1$, then every $q\geq 0$ is an invariant state: $\I(1)=[0,\infty)$. The conclusion of Claim \ref{claim:I is low dim} fails to hold and it is certainly impossible for the state to be outside $\I(1)$.

\medskip
\noindent
{\bf Remark.} The process $A^r(t)$, used in the proof of Part (c) is not uniformly bounded, as it can have bursts of size $\tilde{h}(r)$.
With some additional effort, and a slightly more complicated proof, 
it is possible to carry out a construction under which each component of $A^r(t)$ is bounded by some constant, independent of $r$ or $t$. The basic idea is that having excess arrivals (but of bounded size) over a time period of length  $O(\tilde{h}(r))$ has an  effect comparable to a single burst of size $O(\tilde{h}(r))$.

\hide{
Therefore, for any $r\ge r_0$, $E_{0,i}$ follows from \eqref{eq:cond r_0 q_0}. Then,
\begin{equation}\label{eq:bound the prob if union of e'c}
\begin{split}
\prob{\sup_{t\in  [0,T]}\,d\Big(\qhat^r(t)\,,\, \I\Big)> \delta} \,
&=\, \prob{\sup_{k\le  g(r)T}\,d\Big(\tfrac{1}{r} Q^r(k)\,,\, \I\Big)> \delta}\\ 
&=\, \prob{\sup_{k\le  g(r)T}\,d\Big( Q^r(k)\,,\, \I\Big)> r\delta}\\ 
&\le\, \prob{\bigcup_{i\le g(r)T/r  } {E'}_{r,i}^c}\\ 
&\le\, \prob{ \bigg\{\bigcup_{i\le g(r)T/r  } {E}_{r,i}^c\bigg\}\,\,\bigcup\,\, \bigg\{\bigcup_{i\le g(r)T/r  } {E'}_{r,i}^c\bigg\}}\\
&=\, \prob{ \bigg\{\bigcup_{i\le g(r)T/r  } {E}_{r,i}^c\bigg\}\,\,\bigcup\,\, \bigg\{\bigcup_{i\le g(r)T/r  } \left\{{E'}_{r,i}^c \bigcup E_{r,i}\right\}\bigg\}}\\
&=\, \prob{ {E}_{r,0}^c  \,\bigcup\,\bigg\{\bigcup_{i\le g(r)T/r  } \left\{ {E}_{r,i+1}^c  \bigcap E_{r,i}\right\}\bigg\}\,\,\bigcup\,\, \bigg\{\bigcup_{i\le g(r)T/r  } \left\{{E'}_{r,i}^c \bigcap E_{r,i}\right\}\bigg\}}\\
&\le\, \prob{E_{r,0}^c}\, + \,\sum_{i=0}^{ g(r)T/r  } \bigg(   \prob{E_{r,i+1}^c \bigcap E_{r,i}} \,+\, \prob{E_{r,i}'^c \bigcap E_{r,i}}    \bigg)\\
&=\, \sum_{i=0}^{ g(r)T/r  } \bigg(   \prob{E_{r,i+1}^c  \,\big| \, E_{r,i}} \,+\, \prob{E_{r,i}'^c  \,\big| \, E_{r,i}}    \bigg)\,\prob{ E_{r,i}}\\
&\le \sum_{i=0}^{ g(r)T/r  } \bigg(   \prob{E_{r,i+1}^c  \,\big| \, E_{r,i}} \,+\, \prob{E_{r,i}'^c  \,\big| \, E_{r,i}}    \bigg)\\
\end{split}
\end{equation}
where the second equality is due to \eqref{eq:I is conic}, the first inequality is from the definition of $E_{r,i}'$, third inequality is an application of the union bound, and the last equality is because $E_{r,0}$ holds for all $r\ge r_0$. 
We now elaborate on bounding the two terms in the above summation.

For any $r,i\in\Z_+$,
\begin{equation}\label{eq:e'c given e prob}
\begin{split}
\prob{E'_{r,i}\,\big| \, E_{r,i}} \,&\,= \prob{\sup_{ t\in[ir,(i+1)r)}\, d\big(Q^r(t),\, \I\big)\,\le\, r\delta \,\bigg| \, E_{r,i}}\\
&\ge\, \prob{\sup_{ t\in[ir,(i+1)r)}\,\Big(\Ltwo{Q^r(t)-q_i^r(t)} + d\big(q_i^r(t),\, \I\big)\Big)\,\le\, r\delta \,\bigg| \, E_{r,i}}\\
&\ge\, \prob{\sup_{ t\in[ir,(i+1)r)}\,\Big(\Ltwo{Q^r(t)-q_i^r(t)} + d\big(q_i^r(ir),\, \I\big)\Big)\,\le\, r\delta \,\bigg| \, E_{r,i}}\\
&\ge\, \prob{\sup_{ t\in[ir,(i+1)r)}\, \Ltwo{Q^r(t)-q_i^r(t)}\,\le\, r\delta - \divconst r \epsilon \,\bigg| \, E_{r,i}}\\
&\ge\, \prob{\sup_{ t\in[ir,(i+1)r)}\, \Ltwo{Q^r(t)-q_i^r(t)}\,\le\, \divconst r \epsilon \,\bigg| \, E_{r,i}}\\
&\ge\, \prob{\sup_{t\in[1,r)} \,  \Ltwo{\sum_{\tau=0}^{t-1}\big( A^r(\tau+ir) -\lambda^r\big)} \le\,  r\epsilon \,\bigg| \, E_{r,i}}\\
&= \, \prob{\sup_{t\in[1,r)} \,  \Ltwo{\sum_{\tau=0}^{t-1}\big( A^r(\tau+ir) -\lambda^r\big)} \le\,  r\epsilon }\\
&= \, 1-o\Big(1/f\big(r,\delta\big)\Big),
\end{split}
\end{equation}
where  the second inequality is due to\eqref{eq:q_ir(t) comes closer to I}, the fifth inequality is an application of Theorem \ref{th:main sensitivity}, the last equality is because  $A^r(\cdot)$ is stationary, and the last  inequality is by definition (\ref{eq:def f-tailed arrival}).  \oli{Is the last equality true?}

If $E_{r,i}$ is true, then $d\big(q_i^r(ir),\, \I\big)\le\divconst r\epsilon<r\absorbconst$. In this case, it follows from  Lemma \ref{lem:absorb const} that $q_i^r\big((i+1)r\big)\in \I$. Then,
\begin{equation} \label{eq:ec given e prob}
\begin{split}
\prob{E_{r,i+1}\,\big| \, E_{r,i}} \,&\ge \prob{d\Big(Q^r\big((i+1)r\big),\, \I\Big)\le \divconst r\epsilon \,\bigg| \, E_{r,i}}\\
&\ge \prob{\Ltwo{Q^r\big((i+1)r\big)-q_i^r\big((i+1)r\big)} + d\Big(q_i^r\big((i+1)r\big),\, \I\Big)\le \divconst r\epsilon \,\bigg| \, E_{r,i}}\\
&=\probB{\Ltwo{Q^r\big((i+1)r\big)-q_i^r\big((i+1)r\big)}\le \divconst r\epsilon  \,\bigg| \, E_{r,i} }\\
&\ge \prob{\sup_{t\in[1,r]} \,  \Ltwo{\sum_{\tau=0}^{t-1} A^r(\tau+ir)\,-\, t\lambda^r} \le\,  r\epsilon  \,\bigg| \, E_{r,i}}\\
&\ge \prob{\sup_{t\in[1,r]} \,  \Ltwo{\sum_{\tau=0}^{t-1} A^r(\tau+ir)\,-\, t\lambda^r} \le\,  r\epsilon }\\
&= 1-o\Big(1/f\big(r,\delta\big)\Big).
\end{split}
\end{equation}
 
Plugging \eqref{eq:e'c given e prob}   and \eqref{eq:ec given e prob}   in \eqref{eq:bound the prob if union of e'c}, we obtain
\begin{equation}
\begin{split}
\frac{rf\big(r,\delta\big)}{g(r)}\, \prob{\sup_{t\in  [0,T]}\,d\Big(\qhat^r(t)\,,\, \I\Big)> \delta} 
&\le \, \frac{rf\big(r,\delta\big)}{g(r)}\, \sum_{i=0}^{ g(r)T/r  } \bigg(   \prob{E_{r,i+1}^c  \,\big| \, E_{r,i}} \,+\, \prob{E_{r,i}'^c  \,\big| \, E_{r,i}}    \bigg)\\
&= \, \frac{rf\big(r,\delta\big)}{g(r)} \, \sum_{i=0}^{ g(r)T/r  } \bigg(   o\Big(1/f\big(r,\delta\big)\Big)\,+\,o\Big(1/f\big(r,\delta\big)\Big)  \bigg)\\
&= \, \frac{rf\big(r,\delta\big)}{g(r)}  \,\,\big\lceil g(r)T/r\big\rceil \, o\Big(1/f\big(r,\delta\big)\Big)\\
&= \, o(1) \, \xrightarrow[{\,r\to\infty\,}]{} 0,
\end{split}
\end{equation}
where the last equality follows from the assumption $g(r)=\Omega(r)$.
This completes the proof of the theorem.
 }


\section{Discussion} \label{sec:discussion}


In this section we review our main results, their implications, and directions for future research.

\subsection{Main Results}
We have established a deterministic bound on the sensitivity of  queue length processes with respect to arrivals, under a Max-Weight policy.
In particular, 
we showed that the distance between a queue length process and a fluid solution remains bounded by a constant multiple of the deviation of the aggregate arrival process from its average. 
The  bound allows for tight approximations of the queue lengths in terms of fluid solutions, which are much easier to analyse, and leads to a simple derivation of a fluid limit result; cf.\ Corollary~\ref{c:fluid}. 
We then exploited this sensitivity result to prove matching upper and lower bounds for the time scale over which  additive SSC occurs  under a MW policy. 
As a corollary, we established strong (additive) SSC of MW dynamics in diffusion scaling under conditions more general than previously available.

For the case of i.i.d.\ arrivals, we established additive SSC 
over time intervals whose length scales exponentially with $r$.
Such a result could also be proved with a more elementary argument, by viewing the distance from the invariant set as a Lyapunov function and using the drift properties that we established; cf.\ Lemma~\ref{lem:absorb const}.
Nonetheless, such Lyapunov-based approaches are hard to generalize to broader classes of arrival processes.  
In contrast, our sensitivity bounds in Theorem \ref{th:main sensitivity} allow for the arrival processes to be arbitrary and yield strong approximation results as long as the driving process has some reasonable concentration properties.

\subsection{Other Applications and Extensions}
\hide{An immediate consequence of our sensitivity result is the closeness of a fluid-scaled process ($\hat{q}(t) =\frac{1}{r}Q(\lfloor rt\rfloor) $) to a fluid solution. One such bound is given in \cite{ShahW12} Theorem 4.3: $d\big(\hat{q}^r(t), \I  \big)\to 0$ as $\max_{k< t} \frac1r \Ltwo{\sum_{\tau\le k} \big(A^r(\tau)-\lambda\big)}\to 0$. Our sensitivity bound further implies that the rate of convergence in the former limit is upper bounded by the convergence rate of the latter.}

A similar sensitivity bound can also be proved, using the same line of argument,  for continuous time networks e.g., with Poisson arrivals, operating under a MW policy.  Similarly, 
 for a more general class of stochastic processing networks, and under Assumption 1 of \cite{DaiL05}, it is not hard to see that the fluid dynamics 
will again be a  subgradient flow, and that our results can be extended to such systems.

In the same spirit, we believe that the results, including additive SSC, can be extended to the case of backpressure policies\footnote{ Backpressure policies are extensions of the MW policy, in which routing is no longer fixed. 
In particular, there is  a fixed set of service vectors, where each service vector $\mu$ associates a rate $\mu_{ij}$ to each link $ij$. A backpressure policy then chooses at each time a service vector $\mu$ that maximizes $\sum_{ij} \mu_{ij} (Q_i-Q_j)$, where the sum is taken over all links $ij$.} \cite{TassE92,GeorNT06,Neel10}, for networks in which the routing is no longer fixed, and where the different vectors $\mu$ determine the set of links to be activated.

Another direction concerns SSC results in steady-state, that is, the extent to which
 the steady-state distribution will be concentrated in a neighbourhood of the invariant set.  
 Following a Lyapunov-based approach, several works \cite{MaguS15,MaguBS16,ShahTZ16} have proved exponential tail bounds for the steady-state distribution, for the case of i.i.d.\ arrivals. We believe that Theorem \ref{th:main sensitivity} provides an approach for establishing similar bounds  for the case of non-i.i.d.\ and non-Markovian arrivals.

Finally, another problem where the fluid model turns out to be analytically beneficial  concerns delay stability under a MW policy in the presence of heavy-tailed traffic. The references \cite{MarkMT16} and \cite{MarkMT18} studied the question whether a certain queue has finite expected delay (``delay stability'') in the presence of other queues that are faced with heavy-tailed arrivals.
They provided a necessary condition for this to be the case, in terms of certain properties of the associated fluid model, and raised the question whether 
under some assumptions, this condition is also sufficient.
Using the sensitivity results of the current paper, we 
are able to resolve a variant of this question 
as will be reported in  a forthcoming publication.

\old{
\subsection{Extensions to Backpressure Policies}
The model we considered in this paper concerns MW policies with fixed routing, determined in terms of a routing matrix $\RM$.
However, our results are valid also for variants of MW policies with unfiexd (or backpressure) routing, under which the real time routing decisions made by the MW policy rely on the current backlogs in the network \cite{TassE92,GeorNT06,Neel10}.
More specifically, consider a network comprising a number of 
traffic sessions corresponding to packets with certain destinations 
and a set of link activation actions $\S$, where each $\mu\in\S$ determines for every link $ij$ a rate $\mu_{ij}$ that can be transmitted on that link upon activation of $\mu$.
A MW policy with backpressure routing chooses at time $t$ an action $\mu\in\S$ that maximizes 
\begin{equation}
\sum_{\textrm{links }ij} \max_{s} \big(Q_i^s(t) - Q_j^s(t)\big) \mu_{ij},
\end{equation}
where $Q_i^s(\cdot)$ is the queue length at node $i$ corresponding to session $s$ (equivalently the the share of backlog in node $i$ destined towards node $s$); and maximum ranges over all sessions.
Upon activation of $\mu\in\S$, for any link $ij$, a packet of size $\min\big(Q_i^{s^*}(t),\mu_{ij}\big)$ leaves $Q_i^{s^*}$ and joins  $Q_j^{s^*}$, where  $s^*$ is a maximizer of $Q_i^s(t) - Q_j^s(t)$. 
\oli{This should suffice to capture the situation when queues run out of packets.}

It is not difficult to show via a straightforward extension that Theorems \ref{th:main sensitivity} and \ref{th:strong ssc} are valid under the above variant of MW policy with backpressure routning.
The proof of the sensitivity result (i.e., Theorem~\ref{th:main sensitivity}) is essentially the same, except for some of the details in the proofs of Propositions~\ref{prop:simulation fluid} and \ref{prop:simulation}. 
\oli{Acually the definition of the fluid model in \eqref{eq:fluid diff eq 1}--\eqref{eq:fluid diff eq 6} would change.}
The prood of the SSC results (i.e., Theorem~\ref{th:strong ssc} and its corollaries)  do not depend on the details of routing considered, and would work in their current form.}

\subsection{Open Problems}
The sensitivity bound in Theorem \ref{th:main sensitivity}  applies only to  MW and WMW policies. 
It is not clear whether a similar bound holds for the more general classes of MW-$\alpha$ and MW-$f$ Policies.  
A similar question also arises about SSC: does Theorem \ref{th:strong ssc} hold under a MW-$\alpha$ policy?

\hide{
\oli{Is it good to give an analogue theorem for sensitivity of  continuous time networks? \\
What about relaxing the fixed routing assumption?}}


\medskip
\medskip

\appendix


\section{Proof of Lemma \lowercase{\ref{lem:absorb const}}}  \label{ap:a}
In this appendix, we present the proof of Lemma \ref{lem:absorb const}. The high level idea is that $q(\cdot)$  is a trajectory of a subgradient dynamical system corresponding to a convex function that has $\I(\lambda)$ as its  set of minimizers. We then show that in such a system, all trajectories are attracted to the set of minimizers, at a uniform rate.

Consider the convex function $\Phi$ in \eqref{eq:def phi for fpcs},
\begin{equation*}
\Phi(x)=\max_{\mu\in \S}\, \big((I-\RM)\mu\big)^Tx,
\end{equation*}
and let $F=-\partial \Phi$ be its subgradient  field. Then, $F$ is the induced FPCS system of the network (cf. Definition \ref{def:induced net}), and Proposition~\ref{prop:simulation fluid} states that 
fluid solutions are the same as the non-negative unperturbed trajectories of $\dot{q}\in F(q)+\lambda$. 

We define another convex function $\Phi_\lambda$ by $\Phi_\lambda(x) = \Phi(x)-\lambda^Tx $, for all $x\in\R^n$.
Since $\lambda$ is in the capacity region, \eqref{eq:capregion subset convex hull} implies that $\lambda\in \big( I-\RM\big) \conv(\S)$. Then, for any $x\in \R^n$,
\begin{equation}\label{eq:phi non neg}
\begin{split}
\Phi_\lambda(x)\, &= \, \Phi(x) -\lambda^Tx\\
&=\, \max_{\mu\in\S }   \big((I-\RM) \mu   \big)^Tx -\lambda^Tx\\
&\ge\, \lambda^Tx-\lambda^Tx \\
&=\, 0.
\end{split}
\end{equation} 
On the other hand, we have $\Phi_\lambda(0)=0$. It follows that the set of minimizers of $\Phi_\lambda$, denoted by $\Gamma$, is
\begin{equation} \label{eq:omega cap orthant = I}
 \Gamma = \left\{x\in\R^n\, \big|\, \Phi_\lambda(x)= 0\right\}.
\end{equation}

We use the shorthand notation $\I$ for $\I(\lambda)$.
We now develop a characterization and a simple polyhedral description of the set $\I$.
Recall that  $\I$ is the set of all $x_0\in\R^n_+$ for which the constant trajectory $q(t)=x_0$ is a fluid solution. 
As pointed out earlier, 
fluid solutions are the same as the non-negative unperturbed trajectories of the system
$F(\cdot)+\lambda$, where $F +\lambda=-\partial\Phi_\lambda$. 
In particular,  $q(t)=x_0$ is a constant trajectory of this system if and only if 
$\partial \Phi_\lambda(x_0)$ contains the zero vector, which is the case if and only if $x_0$ is a minimizer of
$\Phi_\lambda$, i.e., $x_0\in\Gamma$.
Therefore, 
\begin{equation}\label{eq:omega = I}
\I\, =\,\Gamma \cap \R_+^n .
\end{equation}

For any $\mu\in\S$, we define the half-space
$H_\mu=\big\{x\in\R_n\,|\, \big((I-\RM)\mu \big)^Tx\le \lambda^Tx\big\}$. We also let, for $i=1,\ldots,n$, $H_i=\big\{x\in\R_n\,|\, x_i\ge 0\big\}$. 
We will now show that
\begin{equation}\label{eq:I is the intersection of Hs}
\I\, =  \, \Big(\bigcap_{\mu\in\S} H_\mu\Big)\,  \bigcap \, \Big(\bigcap_{i=1}^{n} H_i \Big).
\end{equation}
Suppose that $x_0\in\I$. Then, $x_0\in\R^n_+$, i.e., $x_0\in H_i$, for all $i$. Furthermore, 
$x_0\in\Gamma$ and, from \eqref{eq:omega cap orthant = I},  $\Phi_\lambda(x_0)=0$. From 
\eqref{eq:phi non neg}, this implies that $\max_{\mu\in\S}\big((I-R)\mu\big)^Tx_0=\lambda^Tx_0$, so that $\big((I-R)\mu\big)^Tx_0\leq \lambda^Tx_0$, or equivalently $x_0\in H_\mu$, for all $\mu$.
This argument can be reversed. If $x_0$ belongs to all of the half-spaces $H_{\mu}$, we have $\Phi_\lambda(x_0)\leq 0$, which in light of \eqref{eq:phi non neg} implies that $\Phi_\lambda(x_0)=0$, or $x_0\in\Gamma$. If in addition $x_0\in\R^n_+$, then, from 
\eqref{eq:omega = I}, we obtain $x_0\in\I$. This concludes the proof of \eqref{eq:I is the intersection of Hs}.

Having characterized the set $\I$, we now turn our attention to the dynamics that drive trajectories towards $\I$. Fix some 
$\mu\in\S$.   
For any $x\not\in H_\mu$, 
the closest point to $x$ in $H_\mu$ lies on the boundary of $H_\mu$, i.e., on the subspace 
$W_\mu=\left\{ z\in\R^n\,\big|\,   g_\mu ^Tz= 0 \right\}$,
where $g_\mu= (I-\RM)\mu - \lambda$. 
For any $x\in\R^n$ and $\mu\in\S$, either  $d(x,H_\mu)=0$ (which includes the case  where $g_\mu=0$, so that $W_\mu=H_\mu=\R^n$),  
or $g_\mu\neq 0$ and $x\not\in H_\mu$, in which case
\begin{equation}
d\big(x,\, H_\mu\big)\,=\, d\big(x,\, W_\mu\big)\, =\,  \frac{g_\mu^T x}{\Ltwo{{g_\mu}}},
\end{equation}
which is the length of the projection of $x$ on the normal vector to $W_\mu$. 
Thus, in both cases, we have 
\begin{equation} 
 {\|g_\mu\|} \cdot d\big(x,\, H_\mu\big)\,=\, \max\left( g_\mu^T x,\, 0 \right).  
\end{equation}
Then, for any $x\in\R^n$, 
\begin{eqnarray*} 
\Phi_\lambda(x) &=& \max_{\mu\in \S} \big((I-\RM)\mu - \lambda\big)^T x\\
&=& \max_{\mu\in \S} g_\mu^T x\\
&=&  \max_{\mu\in \S}  \,\max\big(  g_\mu^T x,\, 0 \Big)\\
&=&  \max_{\mu\in \S} \Ltwo{g_\mu}  \cdot  d\big(x,\, H_\mu\big).
\end{eqnarray*}
Let $\epsilon = \min \big\{  \|{g_\mu}\| \, :\, \mu\in\S,\,  g_\mu\neq 0  \big\}$. 
Without loss of generality, we assume that $\epsilon>0$; otherwise, we would be dealing with a trivial system where every $g_\mu$ is zero, and $\Phi_\lambda$ is identically zero.
Note that for any $x\in \R_+^n$, we have
\begin{equation} \label{eq:phi ge eps dH}
\begin{split}
\Phi_\lambda(x) &\, \ge \, \epsilon \max_{\mu\in\S } d\big(x\,,\,H_\mu\big) \\
&\,=\,  \epsilon \max\Big(  \max_{\mu\in\S } d\big(x\,,\,H_\mu\big)\,,\, \max_{i=1,\ldots,n } d\big(x\,,\,H_i\big)\Big),
\end{split}
\end{equation}
where the equality is because when $x\in \R_+^n$, we have $d\big(x\,,\,H_i\big)=0$, for all $i$.

Consider a fluid solution $x(\cdot)$, and 
fix a time $t$ such that $x(t)\not\in \I$. Let $x_0 = x(t)$ and let $z$ be the element of $\I$ that is closest  to $x_0$.  Then, at time $t$, 
\begin{eqnarray*}
\frac{d}{dt} {d\big(x(t),\,\I\big)}
&=& \lim_{h{\downarrow} 0} \frac{d\big(x(t+h),\,\I\big)\,-\, d\big(x(t),\,\I\big)}{h}\\
&=& \lim_{h{\downarrow} 0} \frac{d\big(x(t+h),\,\I\big)\,-\, d\big(x(t),\,z\big)}{h}\\
&\le & \lim_{h{\downarrow} 0} \frac{d\big(x(t+h),\,z\big)\,-\, d\big(x(t),\,z\big)}{h}\\
&=& \frac{d^+}{dt} {d\big(x(t),\,z\big)}\\
&=&  \frac{\big(x(t)-z\big)^T \dot{x}(t)}{\Ltwo{x(t)-z}}\\
&=&  \frac{\big(x_0-z\big)^T \dot{x}(t)}{d\big(x_0,\,\I\big)}.
\end{eqnarray*}
Putting everything together, we obtain
\begin{eqnarray*}
\frac{d}{dt} {d\big(x(t),\,\I\big)} 
&\le &  \frac{\big(x_0-z\big)^T \dot{x}(t)}{d\big(x_0,\,\I\big)}\\
&\le & \frac{\Phi_\lambda(z)-\Phi_\lambda(x_0)}{d\big(x_0,\,\I\big)}\\
& =& -\frac{\Phi_\lambda(x_0)}{{d\big(x_0\,,\,\I\big)}}\\
&\le & -\,  \frac{    \epsilon \,\max\Big(  \max_{\mu\in\S } d\big(x_0\,,\,H_\mu\big)\,,\, \max_{i=1,\ldots, n } d\big(x_0\,,\,H_i\big)\Big)    }{d\Big(x_0\,,\, \Big(\bigcap_{\mu\in\S} H_\mu\Big)\,  \bigcap \, \Big(\bigcap_{i=1}^{n} H_i \Big) \Big)}\\
& \le & -{c\,\epsilon };
\end{eqnarray*}
where the second inequality above is because $-\dot{x}(t)$ is a subgradient of $\Phi_\lambda$ at $x_0$; the equality is 
because $z\in\I$ and $\Phi_\lambda$ vanishes on $\I$, by
\eqref{eq:omega = I}, \eqref{eq:omega cap orthant = I}; the third inequality is due to \eqref{eq:phi ge eps dH} and \eqref{eq:I is the intersection of Hs}; and the last inequality follows from Lemma \ref{lem:known lemma!}, for the constant $c$ therein. 
This completes the proof of Lemma \ref{lem:absorb const}, with
 $\absorbconst(\lambda) = c\, \epsilon  $.

\medskip

\section{Proof of Lemma \lowercase{\ref{lem:iid tail}}} \label{app:proof lem iid tail}
Let $X_1,\ldots,X_t$ be i.i.d.\ random variables, taking values in $[0,a]$, and let 
$\bar{X}=\big(X_1+\cdots+X_t\big)/t$. Then, for any $\delta>0$, Hoeffding's  inequality \cite{Hoef63} yields
\begin{equation}   \label{eq:hoeffding}     
\probb{\left\lvert\bar{X}-\E{\bar{X}}\right\rvert>\delta} \le 2 \exp\left(-\frac{2t\delta^2}{a^2}  \right).
\end{equation}
For any fixed  $r\in\N$ and  $t\le r$, 
\begin{equation}\label{eq:first use of hoef}
\begin{split}
\prob{\frac{1}{r} \Ltwo{\sum_{\tau=0}^t \big(A^r(\tau)- \lambda^r\big)} > \delta}\, 
& \le\, \sum_{i=1}^n \prob{\frac{1}{r}\big\lvert{\sum_{\tau=0}^t \big(A_i^r(\tau) - \lambda_i^r\big)}\big\rvert > \frac{\delta}{\sqrt{n}}}  \\
&=\, \sum_{i=1}^n \prob{\frac{1}{t+1} \big\lvert{\sum_{\tau=0}^t \big(A_i^r(\tau) - \lambda_i^r\big)}\big\rvert > \frac{r\delta}{(t+1)\sqrt{n}}}\\
&\le\, 2n \exp\left(-\frac{2 \big(t +1\big)}{a^2}\left(\frac{r\delta}{\big(t+1\big) \sqrt{n}}\right)^2 \right)\\
& \le\, 2n \exp\left(-\left(\frac{2r}{r+1}\right)\frac{ r\delta}{na^2} \right),
\end{split}
\end{equation}
where the first inequality holds because if the Euclidean norm is above $\delta$, then at least  one of the components must be above $\delta/\sqrt{n}$, together with the union bound, 
and the  third inequality is due to (\ref{eq:hoeffding}). 
As in the statement of the lemma, let $\beta\in(0,2)$ and $f(r,\delta)=\exp\big(\beta r\delta/n a^2\big)$. We then have
\begin{equation}
\begin{split}
f\big(r,\delta\big)\,&\prob{\frac{1}{r}\sup_{t\le r} \Ltwo{\sum_{\tau=0}^t \big(A^r(\tau)- \lambda^r\big)} > \delta}\\
&\le\,f\big(r,\delta\big)\,\sum_{t\le r} \prob{\frac{1}{r} \Ltwo{\sum_{\tau=0}^t \big(A^r(\tau)- \lambda^r\big)} > \delta}\\
&\le\, f\big(r,\delta\big)\, 2nr \exp\left(-\left(\frac{2r}{r+1}\right) \frac{r\delta}{na^2} \right)\\
&= \, 2nr\exp\left(\frac{\beta r\delta}{na^2}-\left(\frac{2r}{r+1}\right) \frac{r\delta}{na^2} \right)\, \xrightarrow[{\,r\to\infty\,}]{}\, 0,
\end{split}
\end{equation}
where   the second inequality is due to (\ref{eq:first use of hoef}), and the last implication is because $\beta<2$.


\hide{
\section{\bf Proof of Proposition \ref{prop:simulation}} \label{app:proof simul}

In this appendix we provide the missing details of the proof of Proposition \ref{prop:simulation}. We use the same notation and definitions as in the partial proof provided in Section~\pur{\ref{subsec:net to fpcs}}. The key step is to modify the definition of the perturbation $U(\cdot)$ in \pur{\eqref{eq:def of U (not right cont) in the simulation lemma}}, so as to enforce right-continuity, which is done as follows. Instead of having $X(\cdot)$ jump to $y(\lceil t \rceil)$ immediately after the integer time $\lceil t \rceil$, we have this jump happen after a small delay.

For any integer time, we work with a vector $y(t)$ that has the properties described in Section~\pur{\ref{subsec:net to fpcs}}; cf. Claim~\ref{claim:simul discrete} \dela{and (???)}.
We define
\begin{equation}
\ptraj(t) =  \begin{cases} Q(\tf), &\textrm{if }  t-\tf < 2^{-\tf}  , \\
y(\tf), &\textrm{if }  t-\tf \ge 2^{-\tf},    \end{cases}
\end{equation}
and
\begin{equation}
\xi(t) =  \begin{cases} \big(\RM-I\big) \mu(\tf) +\lambda, & \textrm{if }  t-\tf < 2^{-\tf}, \\
 \big(\RM-I\big) \min\big( \mu(\tf),Q(\tf)\big)+\lambda, & \textrm{if }  t-\tf \ge 2^{-\tf}.    \end{cases}
\end{equation}
Once again, and as in the argument provided in Section~\ref{subsec:net to fpcs},
for any $t\in \Z_+$, we have $\xi(t)\in F\big(\ptraj(t)\big)+\lambda$.

We claim that $\ptraj(\cdot)$ is a perturbed trajectory of $F(\cdot)+\lambda$, corresponding to the right-continuous perturbation function $U(\cdot)$ defined below. As in Section~\pur{\ref{subsec:net to fpcs}}, summations are are carried out over non-negative integer values of the index $k$.
\begin{equation}
U(t) = \begin{cases}  \begin{array}{l}
\displaystyle{\sum_{k\le t-1}} \big( A(k)-\lambda \big) 
\, -\,  \tr \Big(\lambda+ \big(\RM-I\big) \mu(\tf) \Big) \\
\, +\, \displaystyle{\sum_{k\le t-1}} 2^{-k} \big(\RM-I\big) \Big( -\mu(k) + \min\big( \mu(k),Q(k)\big) \Big),
\end{array}
& \textrm{if } t-\tf < 2^{-\tf}, \vspace{10pt}\\
\begin{array}{l}
\displaystyle{\sum_{k< t-1}} \big( A(k)-\lambda \big) \,{-}\, \tr\lambda\,{-}\, 2^{-\tf}\big(\RM-I\big) \mu(\tf) \\
\, +\,  y(\tf)-Q(\tf) \,-\,  \big(t-\tf-2^{-\tf}\big) \big(\RM-I\big) \min\big( \mu(\tf),Q(\tf)\big)      \\
\, +\, \displaystyle{\sum_{k< t-1}} 2^{-k} \big(\RM-I\big) \Big( -\mu(k) + \min\big( \mu(k),Q(k)\big) \Big),
\end{array}
& \textrm{if } t-\tf \ge 2^{-\tf}.
\end{cases} 
\end{equation}
For any $k\in \Z_+$, we have
\begin{equation}\label{eq:int = ptraj 1 app}
\begin{split}
\int_k^{k+1} \xi(\tau)\,d\tau \,+\, \Prt(k+1)- \Prt(k) \, &= \,
2^{-k} \big(\RM-I\big) \mu(k) \, +\, \big(1-2^{-k}\big)\big(\RM-I\big) \min\big( \mu(k),Q(k)\big) + \lambda\\
&\quad  +\big(A(k)-\lambda\big) + 2^{-k} \big(\RM-I\big) \Big( -\mu(k) + \min\big( \mu(k),Q(k)\big) \Big)\\
&=\, A(k) \,+\,  \big(\RM-I\big)\min\big( \mu(k),Q(k)\big)\\
&=\, Q(k+1) - Q(k)\\
&=\, X(k+1)-X(k).
\end{split}
\end{equation}
Furthermore, for any $t$ such that $ t-\tf < 2^{-\tf}$,
\begin{equation}\label{eq:int = ptraj 2 app}
\begin{split}
\int_{\tf}^{t} \xi(\tau)\,d\tau \,+\, \Prt(t)- \Prt(\tf) \, &= \,
\tr \Big(\lambda + \big(\RM-I\big) \mu(\tf)  \Big) - \tr \Big(\lambda + \big(\RM-I\big) \mu(\tf)  \Big)\\
&=\,0\\
&=\, X(t)-X(\tf).
\end{split}
\end{equation}
Finally, for any $t$ such that $ t-\tf \ge 2^{-\tf}$,
\comm{In the third line== below, the signs do not seem to agree with the signs in the definition of $U(t)$. I think that some of the signs in the definition of $U$ have to change, namely the first line for the second case needs to be changed to
$\red{-}\, \tr\lambda\, \red{-}\, 2^{-\tf}\big(\RM-I\big) \mu(\tf)$. Agree?} \oli{Yes. Fixed.}
\begin{equation}\label{eq:int = ptraj 3 app}
\begin{split}
\int_{\tf}^{t} \xi(\tau)\,d\tau \,+\, \Prt(t)- \Prt(\tf) \, &= \,
\bigg[\tr\lambda \,+\, 2^{-\tf} \big(\RM-I\big) \mu(\tf) \\
&\phantom{=  \,\bigg[}+\, \big(t-\tf-2^{-\tf}\big) \big(\RM-I\big) \min\big( \mu(\tf),Q(\tf)\big)\bigg]  \\
&\phantom{= \, } +\bigg[- \tr\lambda  \,-\, 2^{-\tf} \big(\RM-I\big) \mu(\tf) \\
&\phantom{=  \,\bigg[}  -\,  \big(t-\tf-2^{-\tf}\big) \big(\RM-I\big) \min\big( \mu(\tf),Q(\tf)\big) \\
&\phantom{=  \,\bigg[}+\, y(\tf)-Q(\tf)\bigg]\\
&=\,y(\tf) - Q(\tf)\\
&=\, X(t)-X(\tf).
\end{split}
\end{equation}
An induction on \eqref{eq:int = ptraj 1 app}, \eqref{eq:int = ptraj 2 app}, and  \eqref{eq:int = ptraj 3 app} shows that  \eqref{eq:def of integral pert 1}, i.e., $\ptraj(t) = \int_0^t \xi(\tau)\,d\tau \,+\, \Prt(t)$, holds for all times $t$.
Thus, $\ptraj(\cdot)$ is indeed a perturbed trajectory of $F(\cdot)+\lambda$ corresponding to $\Prt(\cdot)$.

We will now verify the perturbation bound \eqref{eq: U bounded cond}. Recall the constant $b$ in \eqref{eq:def constant c}. For any $t\in\R_+$,
\begin{equation}
\begin{split}
\sup_{\tau\le t} \Ltwo{\Prt(\tau)}\, & \le \, 
 \sup_{\tau\le t} \Ltwo{ \sum_{k < \tau } \big(A(k) -\lambda   \big) } \,+\, \sup_{k\in Z_+} \Ltwo{y(k)-Q(k) }  \,+\, \lVert  \lambda \rVert \,  \\
& \quad  +\, \sup_{k < \tau } \Ltwo{ \big(\RM-I\big) \mu(\tf)} \, 
 + \sup_{k < \tau} \Ltwo{ \big(\RM-I\big) \min\big( \mu(\tf),Q(\tf)\big)}   \\
 &\quad  +\, \sum_{k < \tau} 2^{-k} \Big( \Ltwo{\big(\RM-I\big) \mu(k)}  +  \Ltwo{ \big(\RM-I\big) \min\big( \mu(k),Q(k)\big)} \Big)\\
&\le\,  \sup_{\tau\le t} \Ltwo{ \sum_{k < \tau} \big(A(k) -\lambda   \big) } \, +\, \kappa \,+\, \lVert  \lambda \rVert \,+\, b \,+\,b\, +\,  \sum_{k= 0}^\infty 2^{-k}\big(b+b\big) \\
&=\,  \sup_{\tau\le t} \Ltwo{ \sum_{k < \tau} \big(A(k) -\lambda   \big) }\,+\, \lVert  \lambda \rVert\, + \, \kappa \,+\, 6b.
\end{split}
\end{equation}	
This implies \eqref{eq: U bounded cond} for $\beta=\kappa + 6b$, and completes the proof of Proposition \ref{prop:simulation}.

\medskip

}


\hide{
\section{\bf SSC in No Super Exponential Scaling} \label{app:ssc iid example}
In this appendix we show by an example that no SSC  occurs under any super exponential scaling of time. This will serve as a converse for Corollary \ref{prop:ssc iid trans}.

\oli{Is it too obvious? Remove this appendix?}

\red{Yes. Remove the appendix. Can make a comment in the main text. But will have to be a bit more precise. Two possible versions:
\\
superexponential: $g(r)$ grows faster than any exponential
\\
large exponential: $g(r)=e^{\gamma r}$, where $\gamma$ is sufficiently large, as in the example below.
\\
Pick whichever version will lead to the shortest comment.}

Consider a network of one queue and one server with unit service rate. Let  $\lambda=1/2$, and let $A(t)$ be  an iid arrival process with
\begin{equation}
A(t) = \begin{cases} 2, & \textrm{with prob } 1/4, \\ 0, & \textrm{with prob } 3/4.  \end{cases}
\end{equation}
The set of invariant states is then a singleton, $\I(\lambda) = \{0\}$. Fix some $\delta>0$ and let $g(r)=\exp\big(1.5 r\delta\big)$. Consider a sequence of $g$-time-scaled trajectories:
\begin{equation}
\qhat^r(t) = \frac{1}{r} Q\big(e^{1.5 r\delta}t\big),
\end{equation}
for $t\le T$.
We will show that 
\begin{equation}\label{eq:no ssc with big exp}
\prob{\sup_{t \le T}\,d\Big(\qhat^r(t)\,,\, \I\Big)>\delta}  \xrightarrow[{\,r\to\infty\,}]{} 1.
\end{equation}
As a result, SSC does not occur for $\exp\big(1.5 r\delta\big)$ scaling of time. 

We divide the $\exp\big(1.5 r\delta\big) T$ time window into intervals of length $r\delta$, and associate an event $E_i$ to the $i$th interval:
\begin{equation}
E_i\triangleq \textrm{ The event that } A(t)=2, \textrm{ for all }  t\in \big[(i-1)r\delta,\, ir\delta \big).
\end{equation}
These events are independent and have probability $\prob{E_i}=\big(1/4\big)^{r\delta}$. If $E_i$ occurs, then $Q\big(ir\delta\big)> r\delta$, and $\qhat^r\big(r\delta/g(r)\big)> \delta$. Hence,
\begin{equation}
\begin{split}
\prob{\sup_{t \le T}\,d\Big(\qhat^r(t)\,,\, \I\Big)\ge\delta}  &\ge \prob{\bigcup_{i\le e^{1.5 r\delta}T/r\delta} E_i} \\
& = 1-\prob{\bigcap_{i\le e^{1.5 r\delta}T/r\delta} E_i^c}\\
& = 1- \probb{E_1^c}^ {e^{1.5 r\delta}T/r\delta}\\
& = 1- \left(1-4^{-r\delta}\right)^ {e^{1.5 r\delta}T/r\delta}\\
& = 1- \exp\left( -4^{-r\delta}  {e^{1.5 r\delta}T/r\delta}\right)\\
&= 1- \exp\left(-e^{(1.5-\ln 4) r\delta}T/r\delta\right)\\
&\xrightarrow[{\,r\to\infty\,}]{} 1.
\end{split}
\end{equation}
where the last inequality is because $1.5>\ln 4$. 

\oli{We must be able to show a more general result: For any network with any iid arrival process, SSC occurs for no scaling of time larger than $\exp(\gamma\sigma)$, where $\gamma$ is a (universal) constant and $\sigma$ is the standard deviation of the arrival (projected on $\I^\perp$). Right? Good to mention?}
}


\bibliography{schbib}

\begin{thebibliography}{37}

\bibitem{AndrKRSV04}
\begin{barticle}[author]
\bauthor{\bsnm{Andrews},~\bfnm{Matthew}\binits{M.}},
  \bauthor{\bsnm{Kumaran},~\bfnm{Krishnan}\binits{K.}},
  \bauthor{\bsnm{Ramanan},~\bfnm{Kavita}\binits{K.}},
  \bauthor{\bsnm{Stolyar},~\bfnm{Alexander}\binits{A.}},
  \bauthor{\bsnm{Vijayakumar},~\bfnm{Rajiv}\binits{R.}} \AND
  \bauthor{\bsnm{Whiting},~\bfnm{Phil}\binits{P.}}
(\byear{2004}).
\btitle{Scheduling in a queuing system with asynchronously varying service
  rates}.
\bjournal{Probability in the Engineering and Informational Sciences}
\bvolume{18}
\bpages{191-217}.
\bdoi{10.1017/S0269964804182041}
\end{barticle}
\endbibitem

\bibitem{Bram98}
\begin{barticle}[author]
\bauthor{\bsnm{Bramson},~\bfnm{Maury}\binits{M.}}
(\byear{1998}).
\btitle{State space collapse with application to heavy traffic limits for
  multiclass queueing networks}.
\bjournal{Queueing Systems}
\bvolume{30}
\bpages{89--140}.
\end{barticle}
\endbibitem

\bibitem{DaiL08}
\begin{barticle}[author]
\bauthor{\bsnm{Dai},~\bfnm{JG}\binits{J.}} \AND
  \bauthor{\bsnm{Lin},~\bfnm{Wuqin}\binits{W.}}
(\byear{2008}).
\btitle{Asymptotic optimality of maximum pressure policies in stochastic
  processing networks}.
\bjournal{The Annals of Applied Probability}
\bvolume{18}
\bpages{2239--2299}.
\end{barticle}
\endbibitem

\bibitem{DaiL05}
\begin{barticle}[author]
\bauthor{\bsnm{Dai},~\bfnm{Jim~G}\binits{J.~G.}} \AND
  \bauthor{\bsnm{Lin},~\bfnm{Wuqin}\binits{W.}}
(\byear{2005}).
\btitle{Maximum pressure policies in stochastic processing networks}.
\bjournal{Operations Research}
\bvolume{53}
\bpages{197--218}.
\end{barticle}
\endbibitem

\bibitem{DaiP00}
\begin{binproceedings}[author]
\bauthor{\bsnm{Dai},~\bfnm{J.~G.}\binits{J.~G.}} \AND
  \bauthor{\bsnm{Prabhakar},~\bfnm{B.}\binits{B.}}
(\byear{2000}).
\btitle{The throughput of data switches with and without speedup}.
In \bbooktitle{Proceedings IEEE INFOCOM 2000. Conference on Computer
  Communications. Nineteenth Annual Joint Conference of the IEEE Computer and
  Communications Societies}
\bvolume{2}
\bpages{556-564}.
\bdoi{10.1109/INFCOM.2000.832229}
\end{binproceedings}
\endbibitem

\bibitem{EryiS12}
\begin{barticle}[author]
\bauthor{\bsnm{Eryilmaz},~\bfnm{Atilla}\binits{A.}} \AND
  \bauthor{\bsnm{Srikant},~\bfnm{R}\binits{R.}}
(\byear{2012}).
\btitle{Asymptotically tight steady-state queue length bounds implied by drift
  conditions}.
\bjournal{Queueing Systems}
\bvolume{72}
\bpages{311--359}.
\end{barticle}
\endbibitem

\bibitem{EryiS05}
\begin{barticle}[author]
\bauthor{\bsnm{Eryilmaz},~\bfnm{Atilla}\binits{A.}},
  \bauthor{\bsnm{Srikant},~\bfnm{Rayadurgam}\binits{R.}} \AND
  \bauthor{\bsnm{Perkins},~\bfnm{James~R}\binits{J.~R.}}
(\byear{2005}).
\btitle{Stable scheduling policies for fading wireless channels}.
\bjournal{IEEE/ACM Transactions on Networking}
\bvolume{13}
\bpages{411--424}.
\end{barticle}
\endbibitem

\bibitem{GeorNT06}
\begin{bbook}[author]
\bauthor{\bsnm{Georgiadis},~\bfnm{Leonidas}\binits{L.}},
  \bauthor{\bsnm{Neely},~\bfnm{Michael~J}\binits{M.~J.}} \AND
  \bauthor{\bsnm{Tassiulas},~\bfnm{Leandros}\binits{L.}}
(\byear{2006}).
\btitle{Resource allocation and cross-layer control in wireless networks}.
\bpublisher{Now Publishers Inc}.
\end{bbook}
\endbibitem

\bibitem{Hoef63}
\begin{barticle}[author]
\bauthor{\bsnm{Hoeffding},~\bfnm{Wassily}\binits{W.}}
(\byear{1963}).
\btitle{Probability inequalities for sums of bounded random variables}.
\bjournal{Journal of the American Statistical Association}
\bvolume{58}
\bpages{13--30}.
\end{barticle}
\endbibitem

\bibitem{JiAS09}
\begin{binproceedings}[author]
\bauthor{\bsnm{Ji},~\bfnm{Tianxiong}\binits{T.}},
  \bauthor{\bsnm{Athanasopoulou},~\bfnm{Eleftheria}\binits{E.}} \AND
  \bauthor{\bsnm{Srikant},~\bfnm{R}\binits{R.}}
(\byear{2009}).
\btitle{Optimal scheduling policies in small generalized switches}.
In \bbooktitle{Proceedings of INFOCOM 2009}
\bpages{2921--2925}.
\bpublisher{IEEE}.
\end{binproceedings}
\endbibitem

\bibitem{KangKLW09}
\begin{barticle}[author]
\bauthor{\bsnm{Kang},~\bfnm{WN}\binits{W.}},
  \bauthor{\bsnm{Kelly},~\bfnm{FP}\binits{F.}},
  \bauthor{\bsnm{Lee},~\bfnm{NH}\binits{N.}} \AND
  \bauthor{\bsnm{Williams},~\bfnm{RJ}\binits{R.}}
(\byear{2009}).
\btitle{State space collapse and diffusion approximation for a network
  operating under a fair bandwidth sharing policy}.
\bjournal{The Annals of Applied Probability}
\bvolume{19}
\bpages{1719--1780}.
\end{barticle}
\endbibitem

\bibitem{KangW12}
\begin{barticle}[author]
\bauthor{\bsnm{Kang},~\bfnm{WN}\binits{W.}} \AND
  \bauthor{\bsnm{Williams},~\bfnm{RJ}\binits{R.}}
(\byear{2012}).
\btitle{Diffusion approximation for an input-queued switch operating under a
  maximum weight matching policy}.
\bjournal{Stochastic Systems}
\bvolume{2}
\bpages{277--321}.
\end{barticle}
\endbibitem

\bibitem{LinSS06}
\begin{barticle}[author]
\bauthor{\bsnm{Lin},~\bfnm{Xiaojun}\binits{X.}},
  \bauthor{\bsnm{Shroff},~\bfnm{Ness~B}\binits{N.~B.}} \AND
  \bauthor{\bsnm{Srikant},~\bfnm{Rayadurgam}\binits{R.}}
(\byear{2006}).
\btitle{A tutorial on cross-layer optimization in wireless networks}.
\bjournal{IEEE Journal on Selected Areas in Communications}
\bvolume{24}
\bpages{1452--1463}.
\end{barticle}
\endbibitem

\bibitem{MaguBS16}
\begin{barticle}[author]
\bauthor{\bsnm{Maguluri},~\bfnm{Siva~Theja}\binits{S.~T.}},
  \bauthor{\bsnm{Burle},~\bfnm{Sai~Kiran}\binits{S.~K.}} \AND
  \bauthor{\bsnm{Srikant},~\bfnm{R.}\binits{R.}}
(\byear{2016}).
\btitle{Optimal Heavy-Traffic Queue Length Scaling in an Incompletely Saturated
  Switch}.
\bjournal{ACM SIGMETRICS Performance Evaluation Review}
\bvolume{44}
\bpages{13--24}.
\bdoi{10.1145/2964791.2901466}
\end{barticle}
\endbibitem

\bibitem{MaguS15}
\begin{barticle}[author]
\bauthor{\bsnm{Maguluri},~\bfnm{Siva~Theja}\binits{S.~T.}} \AND
  \bauthor{\bsnm{Srikant},~\bfnm{R}\binits{R.}}
(\byear{2015}).
\btitle{Heavy-traffic behavior of the MaxWeight algorithm in a switch with
  uniform traffic}.
\bjournal{ACM SIGMETRICS Performance Evaluation Review}
\bvolume{43}
\bpages{72--74}.
\end{barticle}
\endbibitem

\bibitem{MaguS16}
\begin{barticle}[author]
\bauthor{\bsnm{Maguluri},~\bfnm{Siva~Theja}\binits{S.~T.}} \AND
  \bauthor{\bsnm{Srikant},~\bfnm{R}\binits{R.}}
(\byear{2016}).
\btitle{Heavy traffic queue length behavior in a switch under the MaxWeight
  algorithm}.
\bjournal{Stochastic Systems}
\bvolume{6}
\bpages{211--250}.
\end{barticle}
\endbibitem

\bibitem{MaguSY14}
\begin{barticle}[author]
\bauthor{\bsnm{Maguluri},~\bfnm{Siva~Theja}\binits{S.~T.}},
  \bauthor{\bsnm{Srikant},~\bfnm{R}\binits{R.}} \AND
  \bauthor{\bsnm{Ying},~\bfnm{Lei}\binits{L.}}
(\byear{2014}).
\btitle{Heavy traffic optimal resource allocation algorithms for cloud
  computing clusters}.
\bjournal{Performance Evaluation}
\bvolume{81}
\bpages{20--39}.
\end{barticle}
\endbibitem

\bibitem{MannT05}
\begin{barticle}[author]
\bauthor{\bsnm{Mannor},~\bfnm{Shie}\binits{S.}} \AND
  \bauthor{\bsnm{Tsitsiklis},~\bfnm{John~N}\binits{J.~N.}}
(\byear{2005}).
\btitle{On the empirical state-action frequencies in {M}arkov decision
  processes under general policies}.
\bjournal{Mathematics of Operations Research}
\bvolume{30}
\bpages{545--561}.
\end{barticle}
\endbibitem

\bibitem{MarkMT16}
\begin{barticle}[author]
\bauthor{\bsnm{Markakis},~\bfnm{Mihalis~G.}\binits{M.~G.}},
  \bauthor{\bsnm{Modiano},~\bfnm{Eytan}\binits{E.}} \AND
  \bauthor{\bsnm{Tsitsiklis},~\bfnm{John~N.}\binits{J.~N.}}
(\byear{2016}).
\btitle{Delay Stability of Back-Pressure Policies in the Presence of
  Heavy-Tailed Traffic}.
\bjournal{IEEE/ACM Transactions on Networking}
\bvolume{24}
\bpages{2046--2059}.
\bdoi{10.1109/TNET.2015.2448107}
\end{barticle}
\endbibitem

\bibitem{MarkMT18}
\begin{barticle}[author]
\bauthor{\bsnm{Markakis},~\bfnm{Mihalis~G.}\binits{M.~G.}},
  \bauthor{\bsnm{Modiano},~\bfnm{Eytan}\binits{E.}} \AND
  \bauthor{\bsnm{Tsitsiklis},~\bfnm{John~N.}\binits{J.~N.}}
(\byear{2018}).
\btitle{Delay Analysis of the Max-Weight Policy Under Heavy-Tailed Traffic via
  Fluid Approximations}.
\bjournal{Mathematics of Operations Research}
\bvolume{43}
\bpages{460-493}.
\bdoi{10.1287/moor.2017.0867}
\end{barticle}
\endbibitem

\bibitem{Neel10}
\begin{barticle}[author]
\bauthor{\bsnm{Neely},~\bfnm{Michael~J}\binits{M.~J.}}
(\byear{2010}).
\btitle{Stochastic network optimization with application to communication and
  queueing systems}.
\bjournal{Synthesis Lectures on Communication Networks}
\bvolume{3}
\bpages{1--211}.
\end{barticle}
\endbibitem

\bibitem{Reim84}
\begin{bincollection}[author]
\bauthor{\bsnm{Reiman},~\bfnm{Martin~I}\binits{M.~I.}}
(\byear{1984}).
\btitle{Some diffusion approximations with state space collapse}.
In \bbooktitle{Modelling and performance evaluation methodology}
\bpages{207--240}.
\bpublisher{Springer}.
\end{bincollection}
\endbibitem

\bibitem{Rock70}
\begin{barticle}[author]
\bauthor{\bsnm{Rockafellar},~\bfnm{Ralph}\binits{R.}}
(\byear{1970}).
\btitle{On the maximal monotonicity of subdifferential mappings}.
\bjournal{Pacific Journal of Mathematics}
\bvolume{33}
\bpages{209--216}.
\end{barticle}
\endbibitem

\bibitem{ShahTZ10}
\begin{binproceedings}[author]
\bauthor{\bsnm{Shah},~\bfnm{Devavrat}\binits{D.}},
  \bauthor{\bsnm{Tsitsiklis},~\bfnm{John~N}\binits{J.~N.}} \AND
  \bauthor{\bsnm{Zhong},~\bfnm{Yuan}\binits{Y.}}
(\byear{2010}).
\btitle{Qualitative properties of $\alpha$-weighted scheduling policies}.
In \bbooktitle{ACM SIGMETRICS Performance Evaluation Review}
\bvolume{38}
\bpages{239--250}.
\end{binproceedings}
\endbibitem

\bibitem{ShahTZ16}
\begin{barticle}[author]
\bauthor{\bsnm{Shah},~\bfnm{Devavrat}\binits{D.}},
  \bauthor{\bsnm{Tsitsiklis},~\bfnm{John~N}\binits{J.~N.}} \AND
  \bauthor{\bsnm{Zhong},~\bfnm{Yuan}\binits{Y.}}
(\byear{2016}).
\btitle{On queue-size scaling for input-queued switches}.
\bjournal{Stochastic Systems}
\bvolume{6}
\bpages{1--25}.
\end{barticle}
\endbibitem

\bibitem{ShahW06}
\begin{binproceedings}[author]
\bauthor{\bsnm{Shah},~\bfnm{Devavrat}\binits{D.}} \AND
  \bauthor{\bsnm{Wischik},~\bfnm{Damon}\binits{D.}}
(\byear{2006}).
\btitle{Optimal scheduling algorithms for input-queued switches}.
In \bbooktitle{Proceedings of INFOCOM 2006}.
\bpublisher{IEEE Computer Society}.
\end{binproceedings}
\endbibitem

\bibitem{ShahW12}
\begin{barticle}[author]
\bauthor{\bsnm{Shah},~\bfnm{Devavrat}\binits{D.}} \AND
  \bauthor{\bsnm{Wischik},~\bfnm{Damon}\binits{D.}}
(\byear{2012}).
\btitle{Switched networks with maximum weight policies: Fluid approximation and
  multiplicative state space collapse}.
\bjournal{The Annals of Applied Probability}
\bvolume{22}
\bpages{70--127}.
\end{barticle}
\endbibitem

\bibitem{ShakSS04}
\begin{barticle}[author]
\bauthor{\bsnm{Shakkottai},~\bfnm{Sanjay}\binits{S.}},
  \bauthor{\bsnm{Srikant},~\bfnm{R}\binits{R.}} \AND
  \bauthor{\bsnm{Stolyar},~\bfnm{Alexander~L}\binits{A.~L.}}
(\byear{2004}).
\btitle{Pathwise optimality of the exponential scheduling rule for wireless
  channels}.
\bjournal{Advances in Applied Probability}
\bpages{1021--1045}.
\end{barticle}
\endbibitem

\bibitem{AlTG19sensitivity}
\begin{barticle}[author]
\bauthor{\bsnm{Sharifnassab},~\bfnm{Arsalan}\binits{A.}},
  \bauthor{\bsnm{Tsitsiklis},~\bfnm{John~N}\binits{J.~N.}} \AND
  \bauthor{\bsnm{Golestani},~\bfnm{Jamaloddin}\binits{J.}}
(\byear{2019}).
\btitle{Sensitivity to Cumulative Perturbations for a Class of Piecewise
  Constant Hybrid Systems}.
\bjournal{IEEE Transactions on Automatic Control, to appear}.
\end{barticle}
\endbibitem

\bibitem{Stew11}
\begin{bbook}[author]
\bauthor{\bsnm{Stewart},~\bfnm{David~E}\binits{D.~E.}}
(\byear{2011}).
\btitle{Dynamics with Inequalities: Impacts and Hard Constraints}.
\bpublisher{SIAM}.
\end{bbook}
\endbibitem

\bibitem{Stol04}
\begin{barticle}[author]
\bauthor{\bsnm{Stolyar},~\bfnm{Alexander~L.}\binits{A.~L.}}
(\byear{2004}).
\btitle{MaxWeight scheduling in a generalized switch: State space collapse and
  workload minimization in heavy traffic}.
\bjournal{Ann. Appl. Probab.}
\bvolume{14}
\bpages{1--53}.
\bdoi{10.1214/aoap/1075828046}
\end{barticle}
\endbibitem

\bibitem{Stol05primaldual}
\begin{barticle}[author]
\bauthor{\bsnm{Stolyar},~\bfnm{Alexander~L}\binits{A.~L.}}
(\byear{2005}).
\btitle{Maximizing queueing network utility subject to stability: Greedy
  primal-dual algorithm}.
\bjournal{Queueing Systems}
\bvolume{50}
\bpages{401--457}.
\end{barticle}
\endbibitem

\bibitem{Subr10}
\begin{barticle}[author]
\bauthor{\bsnm{Subramanian},~\bfnm{Vijay~G}\binits{V.~G.}}
(\byear{2010}).
\btitle{Large deviations of max-weight scheduling policies on convex rate
  regions}.
\bjournal{Mathematics of Operations Research}
\bvolume{35}
\bpages{881--910}.
\end{barticle}
\endbibitem

\bibitem{TassE92}
\begin{barticle}[author]
\bauthor{\bsnm{Tassiulas},~\bfnm{Leandros}\binits{L.}} \AND
  \bauthor{\bsnm{Ephremides},~\bfnm{Anthony}\binits{A.}}
(\byear{1992}).
\btitle{Stability properties of constrained queueing systems and scheduling
  policies for maximum throughput in multihop radio networks}.
\bjournal{IEEE Transactions on Automatic Control}
\bvolume{37}
\bpages{1936--1948}.
\end{barticle}
\endbibitem

\bibitem{WangMTS18}
\begin{barticle}[author]
\bauthor{\bsnm{Wang},~\bfnm{Weina}\binits{W.}},
  \bauthor{\bsnm{Maguluri},~\bfnm{Siva~Theja}\binits{S.~T.}},
  \bauthor{\bsnm{Srikant},~\bfnm{R}\binits{R.}} \AND
  \bauthor{\bsnm{Ying},~\bfnm{Lei}\binits{L.}}
(\byear{2018}).
\btitle{Heavy-traffic delay insensitivity in connection-level models of data
  transfer with proportionally fair bandwidth sharing}.
\bjournal{ACM SIGMETRICS Performance Evaluation Review}
\bvolume{45}
\bpages{232--245}.
\end{barticle}
\endbibitem

\bibitem{Will98}
\begin{barticle}[author]
\bauthor{\bsnm{Williams},~\bfnm{Ruth~J}\binits{R.~J.}}
(\byear{1998}).
\btitle{Diffusion approximations for open multiclass queueing networks:
  sufficient conditions involving state space collapse}.
\bjournal{Queueing systems}
\bvolume{30}
\bpages{27--88}.
\end{barticle}
\endbibitem

\bibitem{XieL15}
\begin{binproceedings}[author]
\bauthor{\bsnm{Xie},~\bfnm{Qiaomin}\binits{Q.}} \AND
  \bauthor{\bsnm{Lu},~\bfnm{Yi}\binits{Y.}}
(\byear{2015}).
\btitle{Priority algorithm for near-data scheduling: Throughput and
  heavy-traffic optimality}.
In \bbooktitle{Computer Communications (INFOCOM), 2015 IEEE Conference on}
\bpages{963--972}.
\bpublisher{IEEE}.
\end{binproceedings}
\endbibitem

\end{thebibliography}
\bibliographystyle{imsart-number}

\end{document}